\newif\ifshort
\newcommand{\ssf}[1]{${\sf #1}$}
\newcommand{\bsf}[1]{$\textsf{\textbf{#1}}$}
\newcommand{\paql}{PaQL\xspace}
\newcommand{\ilp}{ILP\xspace}
\newcommand{\sql}{SQL\xspace}
\newcommand{\cplex}{CPLEX\xspace}
\newcommand{\tpch}{\mbox{TPC-H}\xspace}
\newcommand{\galaxy}{\mbox{Galaxy}\xspace}
\newcommand{\bt}{\textsc{SketchRefine}\xspace}
\newcommand{\opt}{\textsc{Direct}\xspace}
\newcommand{\sdss}{SDSS\xspace}
\newcommand{\redspace}{\mathcal{P}}
\newcommand{\unsolved}{\mathcal{S}}
\newcommand{\repr}{\tilde{t}}
\newcommand{\tuple}{t}
\newcommand{\pack}{p}
\newcommand{\rpack}{{\sf P}}
\newcommand{\thres}{\tau}
\newcommand{\query}{\mathcal{Q}}
\newcommand{\initquery}{\query[\text{\ssf{\tilde{R}}}]}
\newcommand{\augmquery}{\query[G_j]}
\newcommand{\failed}{\mathcal{F}}
\newcommand{\prioQ}{\mathcal{U}}
\newcommand{\sat}{\sststile[ss]{}{}}
\newcommand{\qrelax}[1]{\sststile[ss]{(1-\epsilon)^{#1}}{}}
\newcommand{\qapprox}[1]{\sststile[ss]{}{(1-\epsilon)^{#1}}}
\newcommand{\qopt}{\sststile[ss]{}{\ast}}
\newcommand{\qoptrelax}[1]{\sststile[ss]{(1-\epsilon)^{#1}}{\ast}}
\newcommand{\qapproxrelax}[2]{\sststile[ss]{(1-\epsilon)^{#1}}{(1-\epsilon)^{#2}}}
\newcommand{\proj}{\pi}
\newcommand{\bet}{\succeq}
\newcommand{\SUM}{\text{\ssf{SUM}}}
\newcommand{\attr}{\text{\ssf{.attr}}}
\newcommand{\csense}{\odot~}
\newtheorem{theorem}{Theorem}
\newtheorem{lemma}[theorem]{Lemma}
\newtheorem{corollary}[theorem]{Corollary}
\newtheorem{definition}{Definition}
\newtheorem{example}{Example}
\newcommand{\initialalgo}{Sketch}
\newcommand{\augmentalgo}{Refine}
\newcommand{\btalgo}{Refine}
\algnewcommand{\LineComment}[1]{\State \(\triangleright\) #1}
\newcommand{\paratitle}{\smallskip\noindent\textbf}
\newcommand{\removed}{\bgroup\markoverwith{\textcolor{red}{\rule[.5ex]{2pt}{0.4pt}}}\ULon}
\renewcommand{\removed}[1]{}
\newcommand{\added}[1]{{\color{ForestGreen}{#1}}}
\renewcommand{\added}[1]{#1}
\newcommand\litem[1]{\item{\bfseries #1.}}
\newcommand{\mass}{\raisebox{5pt}{\includegraphics[width=8pt]{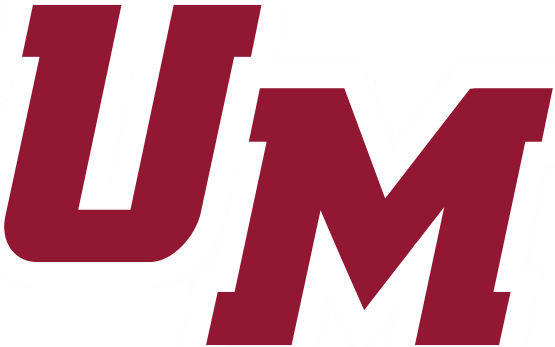}}}
\newcommand{\nyu}{\raisebox{4pt}{\includegraphics[width=5pt]{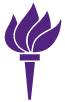}}}
\begin{document}
    \makeatletter
    \def\@copyrightspace{\relax}
    \makeatother

\title{Scalable Package Queries in Relational Database Systems}

\numberofauthors{1}
\author{
\alignauthor
\begin{tabular}{cccc}
Matteo Brucato\mass           & 
Juan Felipe Beltran\nyu & 
Azza Abouzied\nyu      & 
Alexandra Meliou\mass
\end{tabular}
\and  
\alignauthor 
\affaddr{
\begin{tabular}{cc}
\mass College of Information and Computer Sciences & 
\nyu Computer Science             \\
University of Massachusetts   & New York University \\
Amherst, MA, USA              & Abu Dhabi, UAE      \\
\{matteo,ameli\}@cs.umass.edu &
\{juanfelipe,azza\}@nyu.edu
\end{tabular}}
}

\date{}
\maketitle

\begin{abstract}

\looseness -1
\removed{Many modern applications require complex database querying, with
conditions pertaining to the answer set as a whole, in addition to individual answer tuples.}
\added{
Traditional database queries follow a simple model: they define
constraints that each tuple in
the result must satisfy. This model is computationally efficient, as
the database system can evaluate the query conditions on each tuple individually. However, many practical,
real-world problems require a collection of result tuples to satisfy
constraints collectively, rather than individually.
}
\removed{Existing work 
has focused on application- and domain-specific approaches. Our goal
is to create a universal framework 
to support these applications.}
In
this paper, we present \emph{package queries}, a new query model that
extends traditional database queries to handle complex constraints and
preferences over answer sets. We develop a full-fledged package query
system, implemented on top of a traditional database engine. Our work
makes several contributions. First, we design \paql, a \sql-based
query language that supports the declarative specification of package
queries. We prove that \paql is as least as expressive as
integer linear programming, and therefore, evaluation of package queries
is in general NP-hard. Second, we present a fundamental evaluation
strategy that combines the capabilities of databases and constraint
optimization solvers to derive solutions to package queries. The core
of our approach is a set of translation rules that transform a package
query to an integer linear program. Third, we introduce an offline
data partitioning strategy 
allowing query evaluation to scale to large data sizes.
Fourth, we introduce \bt, a scalable algorithm for package evaluation, with strong approximation guarantees ($(1 \pm \epsilon)^6$-factor approximation).
Finally, we present extensive experiments over real-world and benchmark data. The results demonstrate that \bt is effective at deriving high-quality
package results, and achieves runtime performance that is an order of
magnitude faster than directly using \ilp solvers over large datasets.

\end{abstract}

\section{Introduction} \label{sec:intro}

Traditional database queries follow a simple model: they define
constraints, in the form of selection predicates, that each tuple in
the result must satisfy. This model is computationally efficient, as
the database system can evaluate each tuple individually to determine
whether it satisfies the query conditions. However, many practical,
real-world problems require a collection of result tuples to satisfy
constraints collectively, rather than individually.

    \begin{example}[Meal planner]\label{ex:meals}
        A dietitian needs to design a daily meal plan for a patient.
        She wants a set of three gluten-free meals, between 2,000 and
        2,500 calories in total, and with a low total intake of
        saturated fats.
    \end{example}

    \begin{example}[Night sky]
        An astrophysicist is looking for rectangular regions of the
        night sky that may potentially contain previously unseen
        quasars. Regions are explored if their overall redshift is
        within some specified parameters, and ranked according to
        their likelihood of containing a
        quasar~\cite{Kalinin:2014:IDE:2588555.2593666}.
    \end{example}

In these examples, there are some conditions that can be verified on
individual data items (e.g., gluten content in a meal), while others
need to be evaluated on a collection of items (e.g., total calories).
Similar scenarios arise in a variety of application domains, such as
investment planning, product bundles, course selection~\cite{course-rank},
team formation~\cite{team-ilp,team},
vacation and travel planning~\cite{munmund}, and computational
creativity~\cite{Pinel:2014:CCC:2559206.2574794}. Despite the clear
application need, database systems do not currently offer support for
these problems, and existing work has focused on application- and
domain-specific approaches~\cite{team-ilp,munmund,team,course-rank}.

In this paper, we present a domain-independent, database-centric
approach to address these challenges: We introduce a full-fledged
system that supports \emph{package queries}, a new query model that
extends traditional database queries to handle complex constraints and
preferences over answer sets.
Package queries are defined over traditional relations, but return
\emph{packages}. A package is a collection of tuples that (a)
individually satisfy \emph{base predicates} (traditional selection
predicates), and (b) collectively satisfy \emph{global predicates}
(package-specific predicates).
Package queries are combinatorial in nature\added{:}\removed{, and multiple packages may
satisfy a query specification. Thus,} the result of a package query is a
(potentially infinite) set of packages, and an \emph{objective
criterion} can define a preference ranking among them.

Extending traditional database functionality to provide \removed{full-fledged}
support for packages, rather than supporting packages at the
application level, is justified by two reasons:
First, the features of packages and the algorithms for
constructing them are not unique to each application; therefore, the
burden of package support should be lifted off application
developers, and database systems should support package queries like traditional queries. Second, the data used to construct packages typically resides in a
database system, and packages themselves are structured data objects
that should naturally be stored in and manipulated by a database
system.

Our work in this paper addresses three important challenges:

\begin{enumerate}[topsep=0ex,partopsep=1ex,parsep=1ex,itemsep=0ex,leftmargin=0mm,listparindent=0mm,labelsep=0.8mm,itemindent=3mm]

\litem{Declarative specification of packages}
\sql enables the declarative specification of properties that result tuples should satisfy.
In
Example~\ref{ex:meals}, it is easy to specify the exclusion of meals
with gluten using a regular selection predicate in \sql. However, it is difficult to specify global constraints (e.g., total calories of a set of meals should be between 2,000 and 2,500 calories). 
Expressing such a query in \sql
requires either complex self-joins that explode the size of the
query,
or recursion, which results in extremely complex
queries that are hard to specify and optimize (Section~\ref{sec:package-queries}).
Our goal is to maintain the declarative power of \sql, while extending
its expressiveness to allow for the easy specification of packages.

\litem{Evaluation of package queries}
\begin{figure}
\centering
\includegraphics[scale=0.25]{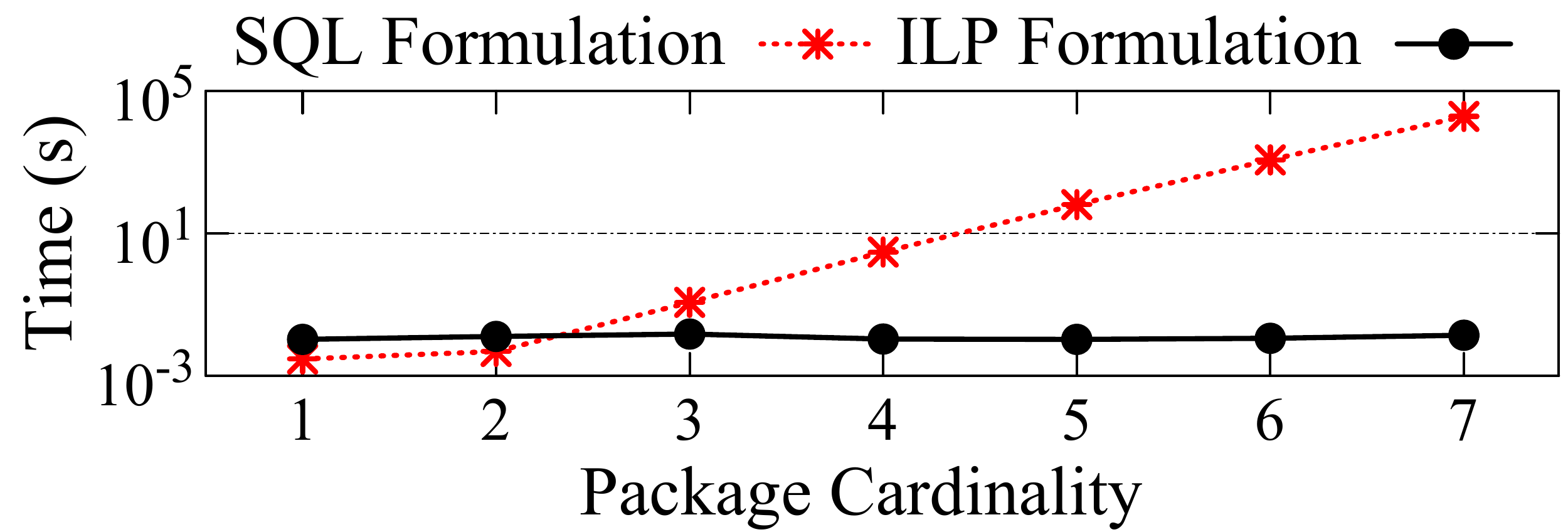}
\vspace{-2mm}
\caption{Traditional database technology is ineffective at package evaluation, and the runtime of the na\"ive \sql formulation (Section~\ref{sec:package-queries}) of a package query grows exponentially.
In contrast, tools such as \ilp solvers are more effective.
}
\label{fig:sql}
\end{figure}
Due to their combinatorial complexity, package queries are harder to
evaluate than traditional database queries~\cite{deng}. Package
queries are in fact as hard as integer linear programs (\ilp)
(Section~\ref{sec:expressiveness}). Existing database technology is
ineffective at evaluating package queries, even if one were to express
them in \sql. Figure~\ref{fig:sql} shows the performance of evaluating
a package query expressed as a multi-way self-join query in traditional \sql
(described in detail in Section~\ref{sec:package-queries}). As the cardinality of
the package increases, so does the number of joins, and the runtime
quickly becomes prohibitive: In a small set of 100 tuples from the
Sloan Digital Sky Survey dataset~\cite{sdss}, \sql evaluation takes
almost 24 hours to construct a package of 7 tuples. Our goal is to extend the
database evaluation engine to take advantage of external tools, such
as ILP solvers, which are more effective for combinatorial problems.

\litem{Performance and scaling to large datasets}
Integer programming solvers have two major limitations: they require
the entire problem to fit in main memory, and they fail when the
problem is too complex (e.g., too many variables and/or too many
constraints). Our goal is to overcome these limitations through
sophisticated evaluation methods that allow solvers to scale to large
data sizes.

\end{enumerate}

In this paper, we address these challenges by designing language and
algorithmic support for package query specification and evaluation.
Specifically, we make the following contributions.

\begin{itemize}[topsep=0pt,partopsep=1ex,parsep=1ex,itemsep=-1ex,leftmargin=3mm]
    \item
    We present \paql (Package Query Language), a declarative language
    that provides simple extensions to standard \sql to support
    constraints at the package level. We prove that \paql is at least
    as expressive as integer linear programming, which implies that
    evaluation of package queries is NP-hard
    (Section~\ref{sec:package-queries}).
    \item
    We present a fundamental evaluation strategy \opt that
    combines the capabilities of databases and constraint optimization
    solvers to derive solutions to package queries. The core of our
    approach is a set of translation rules that transform a package
    query to an integer linear program. This translation allows for
    the use of highly-optimized external tools for the evaluation of
    package queries (Section~\ref{sec:ilp}).
    \item
    We introduce an offline data partitioning strategy 
    that allows package query evaluation to scale to
    large data sizes. The core of our evaluation strategy
    \bt lies on separating the package computation into
    multiple stages, each with small subproblems, which the solver can
    evaluate efficiently. 
    In the first stage, the algorithm ``sketches'' an initial
    sample package from a set of representative tuples, while the
    subsequent stages ``refine'' the current package by solving an \ilp
    within each partition. 
    \bt guarantees a $(1
    \pm \epsilon)^6$-factor approximation for the package results compared to \opt
    (Section~\ref{sec:evaluation}).
    \item
    We present an extensive experimental evaluation on both real-world data and the \tpch benchmark (\Cref{sec:experiments}) that shows that our query evaluation method \bt:
    (1)~is able to produce packages an order of magnitude faster than the \ilp solver used directly on the entire problem; 
    (2)~scales up to sizes that the solver cannot manage directly; 
    (3)~produces packages of very good quality in terms of objective value; 
    (4)~is robust to partitioning built in anticipation of different workloads.
\end{itemize}

\section{Language support for packages} \label{sec:package-queries}

Data management systems do not natively support package queries. While
there are ways to express package queries in \sql, these are
cumbersome and inefficient.

\paratitle{Specifying packages with self-joins.}
When packages have strict cardinality (number of tuples), and only in this case,
it is possible to express package queries using
traditional self-joins.
For instance, self-joins can express the query of
Example~\ref{ex:meals} as follows:

{\small
\begin{tabbing}
\=\hspace*{1.55cm}\=\hspace*{3cm}\= \kill
\>\scalebox{1.0}[1.0]{\ssf{SELECT}} * \>\scalebox{1.0}[1.0]{\ssf{FROM}} \ssf{Recipes} \ssf{R1}, \ssf{Recipes} \ssf{R2}, \ssf{Recipes} \ssf{R3}\\
\>\scalebox{1.0}[1.0]{\ssf{WHERE}} \>\ssf{R1.pk} $<$ \ssf{R2.pk} \scalebox{1.0}[1.0]{\ssf{AND}} \ssf{R2.pk} $<$ \ssf{R3.pk} \ssf{AND}\\
\>~~~~\ssf{R1.gluten} $=$ `\ssf{free}' \scalebox{1.0}[1.0]{\ssf{AND}} \ssf{R2.gluten} $=$ `\ssf{free}' \scalebox{1.0}[1.0]{\ssf{AND}} \ssf{R3.gluten} $=$ `\ssf{free}'\\
\>~~~~\scalebox{1.0}[1.0]{\ssf{AND}} \ssf{R1.kcal} $+$ \ssf{R2.kcal} $+$ \ssf{R3.kcal} \scalebox{1.0}[1.0]{\ssf{BETWEEN}} $2.0$ \scalebox{1.0}[1.0]{\ssf{AND}} $2.5$\\
\>\scalebox{.95}[1.0]{\ssf{ORDER} \ssf{BY}} 
\>\ssf{R1.saturated\_fat} $+$ \ssf{R2.saturated\_fat} $+$ \ssf{R3.saturated\_fat}
\end{tabbing}}
This query is efficient only for constructing
packages with very small cardinality: larger
cardinality requires a larger number of self-joins, quickly rendering
evaluation time prohibitive (Figure~\ref{fig:sql}).
The benefit of this specification is that the optimizer can use the
traditional relational algebra operators, and augment its
decisions with package-specific strategies. 
However, this method does not apply for packages of unbounded cardinality.

\paratitle{Using recursion in \sql.}
More generally, \sql can express package queries by generating and testing each
possible subset of the input relation. This requires recursion to
build\removed{, as an intermediate result,} a \emph{powerset table}; checking
each set in the powerset table for the query conditions will yield the
result packages. This approach has three major drawbacks. First, it is
not declarative, and the specification is tedious and complex.
Second, it is not
amenable to optimization in existing systems. Third, it is extremely
inefficient to evaluate, because the powerset table generates an
exponential number of candidates.

\subsection{\paql: The Package Query Language} \label{sec:paql}

Our goal is to support package specification in a declarative and
intuitive way. In this section, we describe \paql, a declarative query
language that introduces simple extensions to \sql to define package
semantics and package-level constraints. 
We first show how \paql can express the query of Example~\ref{ex:meals},
as our running example, to demonstrate the new language features:

{\small
\begin{tabbing}
\hspace*{4mm}\=\hspace*{2cm}\=\hspace*{3cm}\= \kill
$\query$:
\>\ssf{SELECT}  \>\bsf{PACKAGE}$($\ssf{R}$)$ \ssf{AS} \ssf{P} \\
\>\ssf{FROM}    \>\ssf{Recipes} \ssf{R} \bsf{REPEAT} $0$ \\
\>\bsf{WHERE}   \>\ssf{R.gluten} $=$ `\ssf{free}' \\
\>\bsf{SUCH THAT}\>\ssf{COUNT}$($\ssf{P.}$\ast) =3$ \ssf{AND} \\
\>        \>\ssf{SUM}$($\ssf{P.kcal}$)$ \ssf{BETWEEN} $2.0$ \ssf{AND} $2.5$\\
\>\bsf{MINIMIZE}    \>\ssf{SUM}$($\ssf{P.saturated\_fat}$)$
\end{tabbing}}

\paratitle{Basic semantics.}
The new keyword \ssf{PACKAGE} differentiates \paql from
traditional \sql queries.
{\small
\begin{tabbing}
\hspace*{0.6cm}\=\hspace*{1.4cm}\=\hspace*{1.8cm}\=\hspace*{0.6cm}\=\hspace*{1.4cm}\= \kill
$\query_1$:
\>\ssf{SELECT}    \> *      
\>$\query_2$:
\>\ssf{SELECT} \>\bsf{PACKAGE}$($\ssf{R}$)$ \ssf{AS} \ssf{P}\\
\>\ssf{FROM}      \> \ssf{Recipes} \ssf{R}
\>\>\ssf{FROM}      \> \ssf{Recipes} \ssf{R}
\end{tabbing}}
The semantics of $\query_1$ and $\query_2$ are fundamentally different: $\query_1$ is
a traditional SQL query, with a unique, finite result set (the entire
\ssf{Recipes} table), whereas there are infinitely many packages that
satisfy the package query $\query_2$: all possible multisets of tuples
from the input relation.
\looseness -1
The result of a package query like $\query_2$ is a set of packages. Each
package resembles a relational table containing a collection of tuples
(with possible repetitions) from relation \ssf{Recipes}, and therefore
a package result of $\query_2$ follows the schema of \ssf{Recipes}. 
\removed{The 
\ssf{FROM} clause of a package query may contain multiple relations;}
\added{\paql syntax permits multiple relations in the \ssf{FROM} clause;}
in that case, the packages produced will follow the schema of the join
result. 

\added{In the remainder of this paper, we focus on package queries
without joins. This is for two reasons: (1)~The join operation is part
of traditional \sql and can occur before package-specific
computations. (2)~There are important implications in the
consideration of joins that extend beyond the scope of our work.
Specifically, materializing the join result is not always necessary, but
rather, there are space-time tradeoffs and system-level solutions that
can improve query performance in the presence of joins.
Section~\ref{sec:discussion} offers a high-level discussion of these
points, but these extensions are orthogonal to the techniques we
present in this work. }

\removed{Since the join operation is part of traditional \sql and can
occur before package-specific computations, without loss of
generality, in this paper we assume a single relation in the
\ssf{FROM} clause of \paql.}

Although semantically valid, a query like $\query_2$ would not occur in practice,
as most application scenarios expect few, or even exactly one result.
We proceed to describe the additional constraints in the example query
$\query$ that restrict the number of package results.

\paratitle{Repetition constraint.}
The \ssf{REPEAT} $0$ statement in query $\query$ specifies that no tuple
from the input relation can appear multiple times in a package result.
If this restriction is absent (as in query $\query_2$), tuples can be
repeated an unlimited number of times. By allowing no repetitions,
$\query$ restricts the package space from infinite to $2^n$, where $n$ is
the size of the input relation. Generalizing, the specification
\ssf{REPEAT} $\mathcal{K}$ allows a package to repeat tuples up to $\mathcal{K}$ times,
resulting in $(2+\mathcal{K})^n$ candidate packages.

\paratitle{Base and global predicates.}
A package query defines two types of predicates. 
A \emph{base predicate}, defined in the \ssf{WHERE} clause, is equivalent to
a selection predicate and can be evaluated with standard \sql: any
tuple in the package needs to \emph{individually} satisfy the base
predicate. For example, query $\query$ specifies the base predicate:
\ssf{R.gluten} $=$ `\ssf{free}'. Since base predicates directly filter
input tuples, they are specified over the input relation \ssf{R}.
\emph{Global predicates} are the core of package queries, and they
appear in the new \ssf{SUCH} \ssf{THAT} clause. Global predicates are higher-order
than base predicates: they cannot be evaluated on individual tuples,
but on tuple collections. Since they describe package-level
constraints, they are specified over the package result \ssf{P}, e.g.,
\ssf{COUNT}$($\ssf{P.}$\ast) = 3$, which limits the query results to
packages of exactly 3 tuples.  

The global predicates shown in query $\query$ abbreviate aggregates that are in reality subqueries. For example, \ssf{COUNT}$($\ssf{P.}$\ast) = 3$, is 
an abbreviation for 
$($\ssf{SELECT} \ssf{COUNT}$(\ast)$ \ssf{FROM} \ssf{P}$) =3$.
Using subqueries, \paql can express arbitrarily complex global constraints among aggregates over a package.

\looseness -1
\paratitle{Objective clause.}
The objective clause specifies a ranking among candidate
package results, and appears with either the \ssf{MINIMIZE} or
\ssf{MAXIMIZE} keyword. It is a condition on the
package-level, and hence it is specified over the package result
\ssf{P}, e.g., \ssf{MINIMIZE} \ssf{SUM}$($\ssf{P.saturated\_fat}$)$. 
Similarly to global predicates, this form is a shorthand for \ssf{MINIMIZE} 
$($\ssf{SELECT} \ssf{SUM}$($\ssf{saturated\_fat}$)$ \ssf{FROM} \ssf{P}$)$.
\added{A \paql query with an objective clause returns a single
result: the package that optimizes the value of the objective. The
evaluation methods that we present in this work focus on such queries.
In prior work~\cite{brucato2014packagebuilder}, we described
preliminary techniques for returning multiple packages in the absence
of optimization objectives, but a thorough study of such methods is
left to future work.
}

\smallskip

While \paql can use arbitrary aggregate functions in the global predicates and
the objective clause, in this work, we assume that package queries are
limited to \emph{linear} functions We defer the study of non-linear functions and UDFs to future work. 
\ifshort
\else
Appendix~\ref{sec:paql-syntax} includes a formal specification of the \paql syntax.
\fi

\subsection{Expressiveness and complexity of \paql} \label{sec:expressiveness}
Package queries are at least as hard as integer linear programs,
as the following theoretical results establish.
\begin{theorem}[Expressiveness of \paql] \label{lemma:expressiveness}
\hfill
Every integer linear program can be expressed as a package query in \paql.
\end{theorem}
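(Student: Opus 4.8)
The plan is to prove the statement by a direct, constructive reduction that turns an arbitrary \ilp instance into a single \paql package query whose feasible packages are in objective-preserving bijection with the feasible solutions of the program. The guiding idea is that a package is a \emph{multiset} of input tuples, so the multiplicity with which a tuple occurs in a package can serve as the value of an integer decision variable. Under this encoding, a \SUM aggregate over the package computes a weighted sum of multiplicities, which is exactly the left-hand side of a linear constraint; so each \ilp constraint becomes one global predicate in the \bsf{SUCH THAT} clause, and the linear objective becomes the \bsf{MINIMIZE} clause.

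First I would reduce the \ilp to a canonical form, since this costs nothing and simplifies the encoding. By standard transformations, every integer program can be rewritten as $\min \sum_{j=1}^n c_j x_j$ subject to $\sum_{j=1}^n a_{ij} x_j \csense b_i$ for $i = 1,\dots,m$, with $x_j \in \mathbb{Z}_{\ge 0}$ and each $\csense$ one of $\le, =, \ge$: a maximization objective is negated, a free variable $x_j$ is replaced by $x_j^+ - x_j^-$ with $x_j^+, x_j^- \ge 0$, and a per-variable upper bound $x_j \le u_j$ is written as an ordinary linear constraint. None of these rewrites changes the set of optimal solutions.

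Next I would construct the input relation and the query. Let $R$ contain one tuple $t_j$ per variable $x_j$, with $m+1$ numeric attributes: an attribute $A_i$ holding the coefficient $a_{ij}$ for each constraint $i$, and an attribute $C$ holding the objective coefficient $c_j$ (these values may be negative, which is harmless). The query $\query$ then reads \bsf{PACKAGE}(\ssf{R}) \ssf{AS} \ssf{P} in the \ssf{SELECT} clause and \ssf{R} in the \ssf{FROM} clause, with no \bsf{REPEAT} bound, so that multiplicities range over all of $\mathbb{Z}_{\ge 0}$; its \bsf{SUCH THAT} clause lists, for each $i$, the global predicate $\SUM(P.A_i) \csense b_i$; and it asks to \bsf{MINIMIZE} $\SUM(P.C)$. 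The correspondence is then immediate: for any package $P$ over $R$, set $x_j$ to the multiplicity of $t_j$ in $P$; because each aggregate is additive over repetitions, $\SUM(P.A_i) = \sum_j a_{ij} x_j$ and $\SUM(P.C) = \sum_j c_j x_j$. Hence $P$ satisfies every global predicate iff $x$ satisfies every \ilp constraint, the objective values coincide, and $x \mapsto P$ is a bijection between $\mathbb{Z}_{\ge 0}^n$ and packages over $R$; therefore an optimal package encodes an optimal \ilp solution, and conversely.

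The main obstacle is not the bijection itself but making the encoding faithful at the boundary cases: handling free and upper-bounded integer variables (the $x_j^+ - x_j^-$ split and the correct use, or deliberate avoidance, of \bsf{REPEAT} so that the admissible multiplicities match exactly the intended variable domains), and verifying that a \SUM over a multiset really realizes the weighted sum $\sum_j a_{ij} x_j$ rather than a set-based sum. This last point is precisely where the expressiveness of \paql global predicates as aggregate subqueries over the package is essential; once it is established, the rest is bookkeeping.
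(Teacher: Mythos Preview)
Your proposal is correct and follows essentially the same construction as the paper: one tuple per decision variable, one numeric attribute per constraint plus one for the objective, package multiplicities encoding the $x_j$, and \SUM global predicates realizing the linear forms. The only cosmetic differences are that the paper works directly with a maximization instance and $x_i\ge 0$, whereas you first normalize (handling free variables via the $x_j^+-x_j^-$ split and folding per-variable upper bounds into ordinary constraints); your treatment of these boundary cases is slightly more thorough, but the core reduction is identical.
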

\ifshort
We include the proofs of our theoretical results in the full version
of the paper~\cite{extended-version}. 
\else
The proof for Theorem~\ref{lemma:expressiveness} is in Appendix~\ref{app:expressiveness}.
\fi
At a high level, the proof employs a reduction from an integer linear
program to a \paql query. The reduction maps the linear constraints
and objective into the corresponding \paql clauses, using the
constraint coefficients to generate the input relation for the package
query.
As a direct consequence of Theorem~\ref{lemma:expressiveness}, we also
obtain the following result about the complexity of evaluating package
queries.
\begin{corollary}[Complexity of Package Queries]
Package queries are NP-hard.
\end{corollary}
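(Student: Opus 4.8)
The plan is to derive NP-hardness directly from Theorem~\ref{lemma:expressiveness} by treating its construction as a polynomial-time many-one reduction from integer linear programming. The decision version of \ilp is classically NP-hard; in fact, already the $0$-$1$ feasibility case (does a binary vector satisfy a given system of linear inequalities?) is among Karp's NP-complete problems, which I take as given. Theorem~\ref{lemma:expressiveness} provides the map in precisely the needed direction---from an arbitrary \ilp to an equivalent \paql package query---so the only remaining obligations are (i) to confirm that the map is computable in polynomial time, and (ii) to phrase both problems as comparable decision problems so that the map preserves yes/no answers.

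For the decision formulation, the \ilp question is whether there is an integer assignment satisfying all linear constraints and (in the optimization variant) meeting a given objective threshold $k$; the package-query question is whether the query admits a feasible package---tuples individually satisfying the base predicate and collectively satisfying the \ssf{SUCH} \ssf{THAT} clause---whose objective meets $k$. Under the correspondence of Theorem~\ref{lemma:expressiveness}, where the multiplicity of each tuple in a package equals the value of the associated integer variable, a feasible (resp.\ optimal) package exists if and only if the \ilp is feasible (resp.\ optimal), so the two decision problems have identical answers. Restricting to \ssf{REPEAT} $0$ forces $0$-$1$ multiplicities and so already captures binary \ilp, which suffices for hardness.

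Next I would verify polynomiality of the construction. As sketched before the corollary, the reduction reads off the constraint coefficients of the \ilp to build an input relation with one attribute per constraint and one tuple per variable, plus a fixed set of \ssf{SUCH} \ssf{THAT} and objective clauses transcribing the constraints and objective; the repetition bound $\mathcal{K}$ encodes the variable domains. The resulting relation and query therefore have size bounded by a polynomial---essentially linear---in the \ilp encoding (number of variables, number of constraints, and coefficient bit-length), so writing them out takes polynomial time. Composing any algorithm for package-query evaluation with this reduction would then decide \ilp, establishing NP-hardness. The main obstacle is bookkeeping rather than insight: one must check that encoding arbitrary integer coefficients and variable ranges does not inflate the relation super-polynomially, and that the objective thresholds translate faithfully, so that the equivalence of Theorem~\ref{lemma:expressiveness} upgrades to a genuine polynomial-time reduction.
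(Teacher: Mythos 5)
Your proposal is correct and matches the paper's argument: the paper derives the corollary as a direct consequence of Theorem~\ref{lemma:expressiveness}, whose proof is exactly the coefficient-to-relation reduction you describe (one tuple per variable, one attribute per constraint plus one for the objective), and that construction is plainly polynomial-size in the \ilp encoding. Your additional bookkeeping---casting both sides as decision problems with an objective threshold and noting that the reduction preserves yes/no answers---is just the standard explicitness the paper leaves implicit, not a different route.
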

In Section~\ref{sec:ilp}, we extend the result of
Theorem~\ref{lemma:expressiveness} to also show that every \paql query
that does not contain non-linear functions can be expressed as an
integer linear program, through a set of translation rules.
This
transformation is the first step in package evaluation, but, due to
the limitations of \ilp solvers, it is not efficient or scalable in
practice. To make package evaluation practical, we develop \bt, a
technique that augments the \ilp transformation with a partitioning
mechanism, allowing package evaluation to scale to large datasets
(Section~\ref{sec:evaluation}).

\section{\ilp Formulation} \label{sec:ilp}

In this section, we present an \ilp formulation for package queries.
This formulation is at the core of our evaluation methods \opt and \bt.
The results presented in this section are inspired by the translation
rules employed by Tiresias~\cite{meliou2012tiresias} to answer
\emph{how-to queries}. However, there are several important
differences between how-to and package queries, which we discuss
extensively in the overview of the related work (Section~\ref{sec:related}).

\subsection{\protect\scalebox{1.0}[1.0]{\paql to \ilp Translation}} \label{sec:formulation}

Let \ssf{R} indicate the input relation, $n=|{\sf R}|$ the number of tuples in \ssf{R}, \ssf{R.attr} an attribute of \ssf{R}, \ssf{P} a package, $f$ a linear aggregate function (such as \ssf{COUNT} and \ssf{SUM}), 
$\csense \in \{ \le, \ge \}$ 
a constraint inequality, and $v \in \mathbb{R}$ a constant.
    For each tuple $\tuple_i$ from \ssf{R}, $1 \leq i \leq n$, the \ilp problem includes a nonnegative integer variable $x_i$ ($x_i \ge 0$), indicating the number of times $\tuple_i$ is included in an answer package.
    We also use $\bar{x}=\langle x_1,x_2,\dots,x_n \rangle$ to denote the vector of all integer variables.
A \paql query is formulated as an \ilp problem using the following translation rules:
\begin{enumerate}[topsep=1\topsep,partopsep=0pt,parsep=1\parsep,itemsep=0pt,leftmargin=3mm,listparindent=0pt,labelsep=0.8mm]
    \litem{Repetition constraint}
    The \ssf{REPEAT} keyword, expressible in the \ssf{FROM} clause, restricts the domain that the variables can take on. Specifically, \ssf{REPEAT} $\mathcal{K}$ implies ${0 \le x_i \le \mathcal{K}+1}$.
    
    \litem{Base predicate} 
    Let $\beta$ be a base predicate, e.g., \ssf{R.gluten} $=$ `\ssf{free}',
    and ${\sf R}_\beta$ the relation containing tuples from \ssf{R} satisfying $\beta$. We encode $\beta$ by setting $x_i = 0$ for every tuple $\tuple_i\not\in{\sf R}_\beta$.

    \litem{Global predicate} 
    Each global predicate in the \ssf{SUCH} \ssf{THAT} clause takes the form $f({\sf P})$ $\csense v$. For each such predicate, we derive a linear function $f'(\bar x)$ over the integer variables.
    A cardinality constraint $f({\sf P}) = {\sf COUNT}({\sf P.}\ast)$ is linearly translated into ${f'(\bar{x}) = \sum_{i} x_i}$. A summation constraint $f({\sf P}) = {\sf SUM(P.attr)}$ is linearly translated into ${f'(\bar{x}) = \sum_{i} (\tuple_i.{\sf attr}) x_i}$. 
    We further illustrate the translation with two non-trivial examples:
    \begin{itemize}[topsep=0pt,partopsep=1ex,parsep=1ex,itemsep=-1ex,leftmargin=3ex]
    \item 
    ${\sf AVG}({\sf P.attr}) \le v$ is translated as
\[
\textstyle\sum_{i} (\tuple_i.{\sf attr}) x_i / \sum_{i} x_i \le v 
\;\equiv\; 
\textstyle\sum_{i} (\tuple_i.{\sf attr} - v)x_i                   \le 0 
\]
    \item 
    $($\ssf{SELECT} \ssf{COUNT}$(\ast)$ \ssf{FROM} \ssf{P} \ssf{WHERE} \ssf{P.carbs} $> 0) \ge ($\ssf{SELECT} \ssf{COUNT}$(\ast)$ \ssf{FROM} \ssf{P} \ssf{WHERE} \ssf{P.protein} $\le 5)$
    is translated as 
    
\begin{align*}
\begin{array}{l}
{\sf R_c} := \{\tuple_i \in {\sf R} \mid t_i{\sf.carbs} > 0\} \\
{\sf R_p} := \{\tuple_i \in {\sf R} \mid t_i{\sf.protein} \leq 5\} \\
\mathds{1}_{\sf R_c}(\tuple_i) := 1\ {\rm if} \ \tuple_i \in {\sf R_c};\  0\  {\rm otherwise}. \\
\mathds{1}_{\sf R_p}(\tuple_i) := 1\ {\rm if} \ \tuple_i \in {\sf R_p};\  0\  {\rm otherwise}. \\
\sum_{i} (\mathds{1}_{\sf R_c}(\tuple_i) - \mathds{1}_{\sf R_p}(\tuple_i))x_i \geq 0
\end{array}
\end{align*}
    \end{itemize}
General Boolean expressions over the global predicates can be
encoded into a linear program with the help of Boolean variables and
linear transformation tricks found in the
literature~\cite{bisschop2006aimms}.
    
    \litem{Objective clause} 
    We encode \ssf{MAXIMIZE} $f({\sf P})$ as $\max f'(\bar{x})$, where $f'(\bar{x})$ is the encoding of $f({\sf P})$. Similarly \ssf{MINIMIZE} $f({\sf P})$ is encoded as $\min f'(\bar{x})$. If the query does not include an objective clause, we add the \emph{vacuous} objective $\max \sum_{i} 0 \cdot x_i$.

\end{enumerate}

We call the relations ${\sf R_{\beta}}$, ${\sf R_c}$, and ${\sf R_p}$ described above \emph{base relations}.
This formulation, together with Theorem~\ref{lemma:expressiveness},
shows that \added{single-relation} package queries with linear constraints correspond exactly
to \ilp problems.

\subsection{Query Evaluation with \large{\opt}} \label{sec:direct}

Using the \ilp formulation, we develop our basic evaluation method for
package queries, called \opt. We later extend this technique to our
main algorithm, \bt, which supports efficient package evaluation in
large data sets (Section~\ref{sec:evaluation}).

Package evaluation with \opt employs three simple steps:

\begin{enumerate}[topsep=0ex,partopsep=1ex,parsep=1\parsep,itemsep=0pt,leftmargin=3mm,listparindent=0pt,labelsep=0.8mm]
	\litem{\ilp formulation} 
	We transforms a \paql query to an
    \ilp problem using the rules described in
    Section~\ref{sec:formulation}. 

    \litem{Base relations} 
    We compute the base relations, such
    as ${\sf R_{\beta}}$, ${\sf R_c}$, and ${\sf R_p}$, with a series
    of standard \sql queries, one for each, or by simply scanning
    \ssf{R} once and populating these relations simultaneously.
    After this phase, all variables $x_i$ such that ${x_i = 0}$ can be eliminated from the \ilp problem because the corresponding tuple $\tuple_i$ cannot appear in any package solution. This can significantly reduce the size of the problem.

    \litem{\ilp execution} 
    We employ an off-the-shelf \ilp
    solver as a black box to get a solution $\bar{x}^*$ for all the
    integer variables $x_i$ of the problem. Each $x_i^*$ informs the number of times
    tuple $\tuple_i$ should be included in the answer package.
     
\end{enumerate}

\removed{\opt can evaluate top-$k$ package queries by solving $k$ \ilp problems
iteratively: at each iteration after the first, it adds a new
constraint to the problem to ensure that the objective value of the
new solution is strictly lower (for a maximization query) or strictly
greater (for a minimization query) than the previous one.
This method can also be used to generate multiple feasible packages
for queries lacking an objective clause. It can also be optimized by
using solver-specific features. For instance, IBM's \cplex~\cite{cplex} allows users to obtain a ``pool'' of feasible solutions, which can be used to produce more than one package in a single run of the solver.}

The \opt algorithm has two crucial drawbacks. First, it is only applicable if the input relation is small enough to fit entirely in main memory: \ilp solvers, such as IBM's \cplex, require the entire problem to be loaded in memory before execution.
Second, even for problems that fit in main memory, this approach may fail due to the complexity of the integer problem. In fact, integer linear programming is a notoriously hard problem, and modern \ilp solvers use algorithms, such as \emph{branch-and-cut}~\cite{padberg1991branch}, that often perform well in practice, but can ``choke'' even on small problem sizes due to their exponential worst-case complexity~\cite{cook1990complexity}.
This may result in
unreasonable performance due to solvers using too many resources (main memory, virtual memory, CPU time), eventually thrashing the entire system.

\section{Scalable Package Evaluation} \label{sec:evaluation}

In this section, we present \bt, an approximate divide-and-conquer
evaluation technique for efficiently answering package queries on
large datasets.
\bt smartly decomposes a query into smaller queries, formulates
them as \ilp problems, and employs an \ilp solver as a ``black box''
evaluation method to answer each individual query. By breaking down
the problem into smaller subproblems, the algorithm avoids the
drawbacks of the \opt approach. Further, we prove that \bt is guaranteed to always produce
feasible packages with an approximate objective value
(Section~\ref{sec:quality}).

The algorithm is based on an important observation: \emph{similar
tuples are likely to be interchangeable within packages}. A group of
similar tuples can therefore be ``compressed'' to a single
\emph{representative tuple} for the entire group. \bt \emph{sketches} an
initial answer package using only the set of representative tuples,
which is substantially smaller than the original dataset.
This initial solution is then \emph{refined} by evaluating a subproblem
for each group, iteratively replacing the representative tuples
in the current package solution with original tuples from the dataset.
Figure \ref{fig:redspace} provides a high-level illustration of the three main steps of \bt:

\begin{figure*}
\centering
\includegraphics[width=\textwidth]{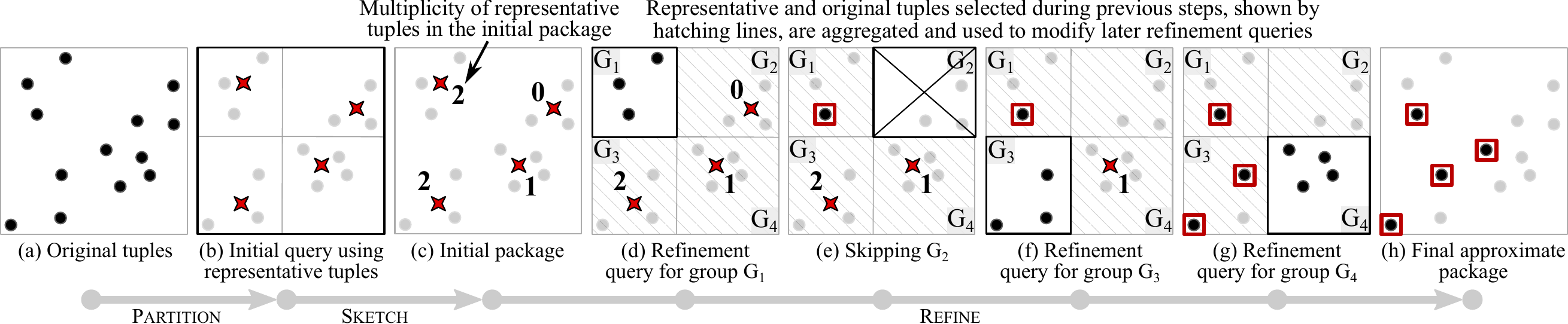}
\caption{
The original tuples (a) are partitioned into four groups and a representative is constructed for each group (b). 
The initial sketch package (c) contains only representative tuples, with possible repetitions up the size of each group. 
The refine query for group $G_1$ (d) involves the original tuples from $G_1$ and the aggregated solutions to  all other groups ($G_2$, $G_3$, and $G_4$). 
Group $G_2$ can be skipped (e) because no representatives could be picked from it. 
Any solution to previously refined groups are used while refining the solution for the remaining groups (f and g). The final approximate package (h) contains only original tuples. 
}
\label{fig:redspace}
\end{figure*}

\begin{enumerate}[topsep=0ex,partopsep=1ex,parsep=1\parsep,itemsep=0pt,leftmargin=3mm,listparindent=0pt,labelsep=0.8mm]
    \litem{Offline partitioning (Section~\ref{sec:index})}\looseness -1
    The algorithm assumes a partitioning of the data into groups of
    similar tuples. This partitioning is performed offline (not at
    query time), and our experiments show that \bt remains very
    effective even with partitionings that do not match the query
    workload (Section~\ref{sec:exp:cover}). In our implementation, we
    partition data using $k$-dimensional quad
    trees~\cite{finkel1974quad}, but other partitioning schemes are
    possible.

    \litem{Sketch (Section~\ref{sec:initial})}
    \bt sketches an initial package 
    by evaluating the package query only over the set of representative tuples. 
    
    \litem{Refine (Section~\ref{sec:refine})} 
    Finally, \bt transforms the initial package into a complete package by replacing each representative tuple with some of the original tuples from the same group, one group at a time.  
\end{enumerate}
\bt always constructs \emph{approximate feasible} packages, i.e., packages that satisfy all the query constraints, but with a possibly sub-optimal objective value that is guaranteed to be within certain approximation bounds (Section~\ref{sec:quality}). 
\bt may suffer from \emph{false infeasibility}, which happens when 
the algorithm reports a feasible query to be infeasible. The probability of false infeasibility is, however, low and bounded (Section~\ref{sec:false-negatives}).

\begin{algorithm}[t]
\caption{Scalable Package Query Evaluation}\label{algo:evaluate}
{\small
    \begin{algorithmic}[1]
        \Procedure{\bt}{$\redspace$:Partitioning, $\query$:Package Query}
            \State $\pack_\unsolved \gets$ \Call{\initialalgo}{$\redspace$, $\query$}            
            \If{\textbf{failure}}
                \State \Return \textbf{infeasible}
            \Else
                \State $\pack \gets$ \Call{\augmentalgo}{$\pack_{\unsolved}$, $\redspace$, $\query$}
                \If{\textbf{failure}}
                    \State \Return \textbf{infeasible}
                \Else
                    \State \Return $\pack$
                \EndIf
            \EndIf
        \EndProcedure
    \end{algorithmic}}
\end{algorithm}

In the subsequent discussion, we use ${\sf R}$ to denote the input relation of $n$ tuples, $\tuple_i \in {\sf R}$, $1 \leq i \leq n$. ${\sf R}$ is partitioned into $m$ groups $G_1, \dots, G_m$. Each group $G_j$, $1 \le j \le m$, has a representative tuple $\repr_j$, which may not always appear in ${\sf R}$. We denote the partitioned space with ${\redspace = \{ (G_j, \repr_j) \mid 1 \le j \le m \}}$. 
We refer to packages that contain some representative tuples as
\emph{sketch packages} and packages with only original tuples as
\emph{complete packages} (or simply \emph{packages}).
We denote a complete package with $\pack$ and a sketch package with $\pack_\unsolved$, where $\unsolved \subseteq \redspace$ is the set of groups that are yet to be refined to transform $\pack_\unsolved$ into a complete answer package $\pack$.

\subsection{Offline Partitioning} \label{sec:index}

\bt relies on a\added{n offline} partitioning of the input relation \ssf{R} into groups
of similar tuples. \removed{This p}\added{P}artitioning is \removed{performed offline,} based on a
set of $k$ \added{numerical} \emph{partitioning attributes}, $\mathcal{A}$, from the input relation \ssf{R}\removed{. Partitioning}\added{, and} uses two parameters: a \emph{size} threshold and a \emph{radius} limit.

\begin{definition}[Size Threshold, $\thres$]
The \emph{size threshold} $\thres$, $1 \leq \thres \leq n$, restricts the size of each partitioning group $G_j$, ${1 \le j \le m}$,
to a maximum of $\thres$ original tuples, i.e., $|G_j| \le \thres$.
\end{definition}

\begin{definition}[Radius limit, $\omega$] \label{def:radius}
The \emph{radius} $r_j \ge 0$ of a group $G_j$ is the greatest absolute distance between the representative tuple of $G_j$, $\repr_j$, and every original tuple of the group, across all partitioning attributes:
\begin{equation*}
r_j = \max_{\forall \tuple_j \in G_j} \max_{\forall {\sf attr} \in \mathcal{A}} \{ | \repr_j.{\sf attr} - \tuple_j.{\sf attr} | \}
\end{equation*}
The \emph{radius limit} $\omega$, $\omega \ge 0$, requires that for every partitioning group $G_j$, ${1 \le j \le m}$, $r_j \leq \omega$. 
\end{definition}

The size threshold, $\thres$, affects the number of clusters, $m$, as smaller clusters (lower $\thres$) implies more of them (larger $m$), especially on skewed datasets. As we discuss later (\Cref{sec:bt}), for best response time of \bt, $\thres$ should be set so that both $m$ and $\thres$ are small. Our experiments show that a proper setting can yield to an order of magnitude improvement in query response time (\Cref{sec:thres}). 
The radius limit, $\omega$, should be set according to a desired approximation guarantee (\Cref{sec:quality}).
\added{
Note that the same partitioning can be used to support a multitude of queries over the same dataset.
In our experiments, we show that a single partitioning performs consistently well across different queries.
}

\paratitle{Partitioning method.}
Different methods can be used for partitioning. Our implementation is
based on $k$-dimensional \emph{quad-tree indexing}~\cite{finkel1974quad}. The method recursively partitions a relation into groups until all the groups satisfy the size threshold and meet the radius limit.
First, relation \ssf{R} is augmented with an extra group ID column \ssf{gid}, such that $\tuple_i.{\sf gid} = j$ iff tuple $\tuple_i$ is assigned to group $G_j$. The procedure initially creates a single group $G_1$ that includes all the original tuples from relation \ssf{R}, by initializing $\tuple_i.{\sf gid} = 1$ for all tuples.
Then, it recursively proceeds as follows:
\begin{itemize}[topsep=2px,partopsep=1ex,parsep=1ex,itemsep=0px,leftmargin=3mm]

    \item The sizes and radii of the current groups are computed via a query that groups tuples by their \ssf{gid} value. 

    \item 

    The same group-by query also computes the \emph{centroid} tuple of each group $G_j$. The centroid is computed by averaging the tuples in group $G_j$ on each of the partitioning attributes $\mathcal{A}$. 
    
    \item If group $G_j$ has more tuples than the size threshold, or a radius larger than the radius limit, the tuples in group $G_j$ are partitioned into $2^{k}$ subgroups ($k = |\mathcal{A}|$). The group's centroid is used as the pivot point to generate sub-quadrants: tuples that reside in the same sub-quadrant are grouped together.
    
\end{itemize}
Our method recursively executes two \sql queries on each subgroup that violates the size or the radius condition.
In the last iteration, the last group-by query computes the centroids for each group. 
These are the representative tuples, $\repr_j$, $1 \le j \le m$, and are stored in a new \emph{representative relation} \ssf{\tilde{R}}$({\sf gid}, \text{\ssf{attr}}_1, \dots, \text{\ssf{attr}}_k)$.

\paratitle{Alternative partitioning approaches.} 
We experimented with different clustering algorithms, such as \emph{$k$-means}~\cite{hartigan1979algorithm}, \emph{hierarchical clustering}~\cite{kaufman2009finding} and DBSCAN~\cite{Ester96adensity-based}, using off-the-shelf libraries such as \emph{Scikit-learn}~\cite{scikit-learn}. Existing clustering algorithms present various problems: 
First, they tend to vary substantially in the properties of the generated clusters.
In particular, none of the existing clustering techniques can natively generate clusters that satisfy the size threshold $\thres$ and radius limit $\omega$. In fact, most of the clustering algorithms take as input the \emph{number of clusters} to generate, without offering any means to restrict the size of each cluster nor their radius.
Second, existing implementations only support in-memory cluster computation, and DBMS-oriented implementations usually need complex and inefficient queries.
On the other hand, space partitioning techniques from multi-dimensional indexing, such as \emph{$k$-d trees}~\cite{bentley1975multidimensional} and \emph{quad trees}~\cite{finkel1974quad}, can be more easily adapted to satisfy the size and radius conditions, and to work within the database: our partitioning method works directly on the input table via simple \sql queries.

\paratitle{One-time cost.}
Partitioning is an expensive procedure. To avoid paying its cost at query time, the dataset is partitioned in advance and used to answer a workload of package queries. 
For a known workload, our experiments show that partitioning the dataset on the union of all query attributes provides the best performance in terms of query evaluation time and approximation error for the computed answer package (Section~\ref{sec:exp:cover}).
We also demonstrate that our query evaluation approach is robust to a
wide range of partition sizes, and to imperfect partitions that cover
more or fewer attributes than those used in a particular query.
\added{
This means that, even without a known workload, a partitioning performed on all of the data attributes still provides good performance.}

The radius limit is necessary for the theoretical guarantee of the
approximation bounds (Section~\ref{sec:quality}). However, we show
empirically that partitioning satisfying the size threshold alone
produces satisfactory answers while reducing the offline partitioning
cost: 
Meeting a size threshold requires fewer partitioning iterations
than meeting a radius limit especially if the dataset is sparse across
the attribute domains (Section~\ref{sec:experiments}).

\added{
\paratitle{Dynamic partitioning.}
In our implementation of \bt, the choice of partitioning is static.
Our technique also works with a dynamic approach to partitioning: by
maintaining the entire hierarchical structure of the quad-tree index,
one can traverse the index at query time to generate the coarsest
partitioning that satisfies the required radius condition. However,
our empirical results show that this approach incurs unnecessary
overhead, as static partitioning already performs extremely well in
practice (Section~\ref{sec:experiments}).
}

\subsection{Query Evaluation with \large{\bt}} \label{sec:bt}

During query evaluation, \bt first \emph{sketches} a
package solution using the representative tuples (\textsc{\initialalgo}), 
and then it \emph{refines} it by replacing representative
tuples with original tuples (\textsc{\btalgo}).
We describe these steps using the example query $\query$ from
\Cref{sec:paql}.

\subsubsection{\fontsize{1em}{1em}\selectfont {\sc \initialalgo}} \label{sec:initial}

Using the representative relation \ssf{\tilde{R}}
(Section~\ref{sec:index}), the {\sc \initialalgo} procedure constructs
and evaluates a \emph{sketch query}, $\initquery$. The result is an
initial sketch package, $\pack_\unsolved$, containing
representative tuples that satisfy the same constraints as the
original query $\query$.

{\small
\begin{tabbing}
\hspace*{7.5mm}\=\hspace*{1.85cm}\=\hspace*{2.8cm}\= \kill
$\initquery$:
\>\ssf{SELECT}    \>\ssf{PACKAGE}$($\ssf{\tilde{R}}$)$ \ssf{AS} $\pack_\unsolved$ \\
\>\ssf{FROM}\>\ssf{\tilde{R}} \\
\>\ssf{WHERE}   \>\ssf{\tilde{R}.gluten} $=$ `\ssf{free}' \\
\>\ssf{SUCH} \ssf{THAT} \\
\>~~ \ssf{COUNT}$(\pack_\unsolved.\ast) = 3$ \ssf{AND} \\
\>~~ \ssf{SUM}$(\pack_\unsolved.$\ssf{kcal}$)$ \ssf{BETWEEN} $2.0$ \ssf{AND} $2.5$ \ssf{AND}\\
\>~~ $($\bsf{SELECT} \bsf{COUNT}$(\ast)$ \bsf{FROM} $\mathbf{\pack_\unsolved}$ \bsf{WHERE} \bsf{gid} $\mathbf{= 1) \le |G_1|}$ \\
\>~~  \ssf{AND} \ssf{\dots} \\
\>~~ $($\bsf{SELECT} \bsf{COUNT}$(\ast)$ \bsf{FROM} $\mathbf{\pack_\unsolved}$ \bsf{WHERE} \bsf{gid} $\mathbf{= m) \le |G_m|}$ \\
\>\ssf{MINIMIZE}    \>\ssf{SUM}$(\pack_\unsolved.$\ssf{saturated\_fat}$)$
\end{tabbing}}
The new global constraints,
highlighted in bold, 
ensure that every representative tuple 
does not appear in $\pack_\unsolved$ more times than the size of its group, $G_j$. This accounts for the repetition constraint \ssf{REPEAT} $0$ in the original query. Generalizing, with \ssf{REPEAT} $\mathcal{K}$, each $\repr_j$ can be repeated up to $|G_j|(1+\mathcal{K})$ times. 
These constraints are simply omitted from $\initquery$ if the original query
does not contain a repetition constraint.

Since the representative relation \ssf{\tilde{R}} contains exactly $m$ representative tuples, the \ilp problem corresponding to this query has only $m$ variables. This is typically small enough for the black box \ilp solver to manage directly, and thus we can solve this package query using the \opt method (\Cref{sec:direct}). 
If $m$ is too large, we can solve this query \emph{recursively} with \bt: the set of $m$ representatives is further partitioned into smaller groups until the subproblems reach a size that can be efficiently solved directly.

The {\sc \initialalgo} procedure \emph{fails} if the sketch query 
$\initquery$ is infeasible, in which case \bt reports the
original query $\query$ as infeasible (Algorithm~\ref{algo:evaluate}).
This may constitute \emph{false infeasibility}, if $\query$ is
actually feasible.
In \Cref{sec:false-negatives}, we show that the probability of
false infeasibility is low and bounded, and we present simple methods
to avoid this outcome.

\subsubsection{\fontsize{1em}{1em}\selectfont {\sc \btalgo}} \label{sec:refine}

Using the sketched solution over the representative tuples, the
\textsc{\btalgo} procedure iteratively replaces the representative
tuples with tuples from the original relation \ssf{R}, until no more
representatives are present in the package. The algorithm
\emph{refines} the sketch package $\pack_\unsolved$, one group at a
time: For a group $G_j$ with representative
$\repr_j\in\pack_\unsolved$, the algorithm derives package
${\bar{\pack}_j}$ from $\pack_\unsolved$ by eliminating all instances
of $\repr_j$; it then seeks to replace the eliminated representatives
with actual tuples, by issuing a \emph{refine query},
$\augmquery$, on group $G_j$:
{\small
\begin{tabbing}
\hspace*{9mm}\=\hspace*{2cm}\=\hspace*{3cm}\= \kill
$\augmquery$:
\>\ssf{SELECT}    \>\ssf{PACKAGE}$(G_j)$ \ssf{AS} $\pack_j$ \\
\>\ssf{FROM}      \>$G_j$ \ssf{REPEAT} $0$ \\
\>\ssf{WHERE}     \>$G_j.$\ssf{gluten} $=$ `\ssf{free}' \\
\> \ssf{SUCH} \ssf{THAT} \\
\>~~ \ssf{COUNT}$(\pack_j.\ast) + $\bsf{COUNT}$(\bar{\pack}_j.\ast) = 3 $ \ssf{AND} \\
\>~~ \ssf{SUM}$(\pack_j.$\ssf{kcal}$) + $\bsf{SUM}$(\bar{\pack}_j.$\ssf{kcal}$)$ \ssf{BETWEEN} $2.0$ \ssf{AND} $2.5$ \\
\>\ssf{MINIMIZE}  \>\ssf{SUM}$(\pack_j.$\ssf{saturated\_fat}$)$ 
\end{tabbing}}
The query derives a set of tuples $\pack_j$, as a replacement for the
occurrences of the representatives of $G_j$ in $\pack_\unsolved$. The
global constraints in $\augmquery$ ensure that the combination of
tuples in $\pack_j$ and $\bar\pack_j$ satisfy the original query
$\query$. Thus, this step produces the new \emph{refined sketch package}
${\pack'_{\unsolved'} = \bar{\pack}_j \cup \pack_j}$, where
${\unsolved' = \unsolved \setminus \{(G_j, \repr_j)\}}$.

Since $G_j$ has at most $\thres$ tuples, the \ilp problem
corresponding to $\augmquery$ has at most $\thres$ variables. This is
typically small enough for the black box \ilp solver to solve
directly, and thus we can solve this package query using the \opt
method (\Cref{sec:direct}).
Similarly to the sketch query, if $\thres$ is too large, we can solve
this query recursively with \bt: the tuples in group $G_j$ are further
partitioned into smaller groups until the subproblems reach a size
that can be efficiently solved directly.

Ideally, the \textsc{\btalgo} step will only process each group with
representatives in the initial sketch package once. However, the order
of refinement matters as each refinement step is greedy: it selects
tuples to replace the representatives of a single group, without
considering the effects of this choice on other groups. As a result, a
particular refinement step may render the query infeasible (no tuples
from the remaining groups can satisfy the constraints). When this
occurs, \textsc{\btalgo} employs a \emph{greedy backtracking} strategy
that reconsiders groups in a different order.

\paratitle{Greedy backtracking.} \textsc{\btalgo} activates
backtracking when it encounters an infeasible \emph{refine query},
$\augmquery$. Backtracking \emph{greedily prioritizes} the infeasible
groups. This choice is motivated by a simple heuristic: if the
refinement on $G_j$ fails, it is likely due to choices made by
previous refinements; therefore, by prioritizing $G_j$, we reduce the
impact of other groups on the feasibility of $\augmquery$. This
heuristic does not affect the approximation guarantees
(Section~\ref{sec:quality}).

\begin{algorithm}[t]
\caption{Greedy Backtracking Refinement}\label{algo:bt}
{\small
    \begin{algorithmic}[1]
        \Procedure{\btalgo}{$\pack_{\unsolved}$, $\redspace$, $\query$}       
        \If{$\unsolved = \emptyset$} \Comment{Base case: all groups already refined}
            \State \Return $\pack_\unsolved$ \label{bt-line:base}
        \EndIf

        \State $\failed \gets \emptyset$ \Comment{Failed groups}
        \LineComment{Arrange $\unsolved$ in some initial order (e.g., random)}
        \State $\prioQ \gets priorityQueue(\unsolved)$ \label{bt-line:prioQ}
        
        \While{$\prioQ \neq \emptyset$} \label{algo:bt:line:for}
            \State $(G_j,\repr_j) \gets dequeue(\prioQ)$
            
            \LineComment{Skip groups that have no representative in $\pack_\unsolved$ }
            \If{$\repr_j \notin \pack_\unsolved$}
            	\State {\bf continue} 
            \EndIf
            
            \State $\pack_{\unsolved'}' \gets$ Refine $\pack_\unsolved$ on group $G_j$ 
                 \label{algo:bt:line:augment}
            \If{$\augmquery$ {\bf is infeasible}}
                \If{$\unsolved \neq \redspace$} \Comment{If $\pack_{\unsolved}$ is not the initial package}
                    \LineComment{Greedily \emph{backtrack} with non-refinable group}
                    \State $\failed \gets \failed \cup \{(G_j, \repr_j)\}$
                    \State \Return \textbf{failure}($\failed$) \label{algo:bt:line:back}
                \EndIf
            \Else
                \LineComment{Greedily \emph{recurse} with refinable group}
                \State $\pack \gets \Call{\btalgo}{\pack_{\unsolved'}', \redspace, \query}$ \label{algo:bt:line:recursive}
                \If{{\bf failure}($\failed'$)}
                    \State $\failed \gets \failed \cup \failed'$
                    \LineComment{Greedily \emph{prioritize} non-refinable groups}
                    \State $prioritize(\prioQ, \failed)$ \label{bt-line:prioritize}
                \Else
                    \State \Return $\pack$
                \EndIf
            \EndIf
        \EndWhile
        
        \LineComment{None of the groups in $\unsolved$ can be refined (invariant: $\failed = \unsolved$)}
        \State \Return \textbf{failure}($\failed$) \label{algo:bt:line:lastbt}
        
        \EndProcedure
    \end{algorithmic}}
\end{algorithm}

Algorithm~\ref{algo:bt} details the {\sc \btalgo} procedure. The
algorithm logically traverses a \emph{search tree} (which is only
constructed as new branches are created and new nodes visited), where
each node corresponds to a unique sketch package $\pack_\unsolved$.
The traversal starts from the \emph{root}, corresponding to the
initial sketch package, where no groups have been refined ($\unsolved
= \redspace$),
and finishes at the first encountered \emph{leaf},
corresponding to a complete package ($\unsolved = \emptyset$). The
algorithm terminates as soon as it encounters a complete package,
which it returns (line~\ref{bt-line:base}). The algorithm assumes a
(initially random) refinement order for all groups in $\unsolved$, and places
them in a priority queue (line~\ref{bt-line:prioQ}).
During refinement, this group order can change by
prioritizing groups with infeasible refinements
(line~\ref{bt-line:prioritize}).

\paratitle{Run time complexity.}
In the best case, all refine queries are feasible and the algorithm never backtracks. In this case, the algorithm makes up to $m$ calls to the \ilp solver to solve problems of size up to $\tau$, one for each refining group. In the worst case, \bt tries every group ordering leading to an exponential number of calls to the \ilp solver. Our experiments show that the best case is the most common and backtracking occurs infrequently.

\subsection{Approximation Guarantees} \label{sec:quality}

\bt provides strong theoretical guarantees. We prove that for a
desired approximation parameter $\epsilon$, we can derive a radius
limit $\omega$ for the offline partitioning that guarantees that \bt
will produce a package with objective value ${(1\pm\epsilon)^6}$-factor close
to the objective value of the solution generated by \opt for the same query.

\begin{restatable}[Approximation Bounds]{theorem}{thapprox} \label{th:approx}
For any feasible package query with a maximization (minimization, resp.) objective and approximation parameter $\epsilon$,  $0 \le \epsilon < 1$ ($\epsilon \ge 0$, resp.),
any database instance, any set of partitioning attributes $\mathcal{A}$, superset of the numerical query attributes, any size threshold $\thres$, and radius limit:
\begin{equation}\label{eq:radius}
    \omega = \min_{
        \substack{1 \le j \le m \\ {\sc attr} \in \mathcal{A}}
        }
        \gamma~|\repr_j \text{\ssf{.attr}}|, 
        \;\text{  where } \gamma = \epsilon
        \;\;\;(\gamma=\tfrac{\epsilon}{1 + \epsilon},\text{ resp.})
\end{equation}
The package produced by \bt (if any) is guaranteed to have objective value  $\ge{(1-\epsilon)^6}OPT$ ($\le{(1+\epsilon)^6}OPT$, resp.), where OPT is the objective value of the \opt solution.
\end{restatable}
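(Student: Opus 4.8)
The plan is to reduce everything to one multiplicative fact about a single tuple/representative substitution, and then to propagate that fact through the two stages of \bt while tracking, \emph{simultaneously}, how far each global predicate may be violated and how far the objective may drift. I would organize the bookkeeping with the relaxation/approximation relations $\qrelax{i}$, $\qapprox{j}$, and $\qapproxrelax{i}{j}$, so that each stage contributes a controlled number of exponents.

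\emph{Atomic substitution lemma.} The radius limit does all the real work. Because $\mathcal{A}$ contains every numerical query attribute and $r_j \le \omega \le \gamma\,|\repr_j.{\sf attr}|$ for every group $G_j$ and attribute, every original tuple $\tuple \in G_j$ obeys $|\tuple.{\sf attr} - \repr_j.{\sf attr}| \le \gamma\,|\repr_j.{\sf attr}|$, so $\tuple.{\sf attr}$ and $\repr_j.{\sf attr}$ agree up to a multiplicative factor in $[1-\gamma,\,1+\gamma]$. The case-specific choice of $\gamma$ turns this into clean $(1\pm\epsilon)$ factors: $\gamma=\epsilon$ directly for maximization, and $\gamma=\tfrac{\epsilon}{1+\epsilon}$ for minimization, using $1/(1-\gamma)=1+\epsilon$. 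Since every global predicate and the objective is a \emph{linear} aggregate of attribute values, I would prove one lemma: replacing each tuple of a package by its representative (the projection $\proj$), or each representative by any original tuple of its group ($\iproj$), perturbs the value of any linear aggregate by at most a single factor of $(1\pm\gamma)$. Crucially, \textsc{Refine} preserves the multiplicity drawn from each group (the count constraint forces $|\pack_j|$ to equal the sketched multiplicity of $\repr_j$), so $\proj$/$\iproj$ connect the sketch and the final package group by group.

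\emph{Propagating through the two stages.} I would then compose two bounds, conditioning on \bt actually returning a package (the theorem's ``if any'' defers existence to the separate false-infeasibility analysis of \Cref{sec:false-negatives}). For \textsc{Refine}: since the objective is a group-separable sum, $\pack$ and $\pack_\unsolved$ share per-group multiplicities, and $\mathcal{A}$ contains the objective attribute, the atomic lemma gives $\mathit{obj}(\pack)\le(1+\gamma)\,\mathit{obj}(\pack_\unsolved)$ (minimization; $\ge(1-\gamma)$ for maximization) in a single factor, while the last refined group is optimized against an all-original remainder, so $\pack$ satisfies every original constraint \emph{exactly}. For \textsc{Sketch}: projecting the \opt package through $\proj$ shows the representative feasible region is \emph{sandwiched} between the original feasible region under $(1\pm\gamma)$-tightened and $(1\pm\gamma)$-relaxed bounds; combined with the $(1\pm\gamma)$ drift of the objective coefficients, this relates $\mathit{obj}(\pack_\unsolved)$ to $OPT$ through a chain of relaxed problems, i.e.\ $\pack_\unsolved \qapproxrelax{i}{j}\initquery$ for small $i,j$. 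Converting $\gamma$ back to $\epsilon$ and multiplying the sketch factors, the refine factor, and the factor absorbed when the relaxation is removed collapses the total to $(1\pm\epsilon)^6$. I would finish by noting that the recursion used when $m$ or $\thres$ is large, and the greedy backtracking of Algorithm~\ref{algo:bt}, only reorder \emph{which} group is substituted \emph{when}; they change neither the set of atomic substitutions performed nor the exact feasibility of the final package, so the bound is untouched.

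\emph{Main obstacle.} The delicate point is the \textsc{Sketch}-to-$OPT$ step, because tightening or relaxing the constraints of a general \ilp can move its optimum arbitrarily. The clean factor bound must therefore exploit that the perturbation is \emph{multiplicative} and that both objective and constraints are linear, so that the sandwiching of feasible regions and the coefficient drift convert into bounded multiplicative changes of the optimum rather than uncontrolled jumps. Getting the exact count of six factors, and handling the maximization/minimization asymmetry that motivates the two definitions of $\gamma$, is the real labor; the $\qrelax{}$/$\qapprox{}$/$\qapproxrelax{}{}$ relations exist precisely to keep the constraint-relaxation exponent and the objective-approximation exponent separate until they are combined at the very end.
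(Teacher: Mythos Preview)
Your overall architecture---the atomic $(1\pm\gamma)$ substitution lemma, the $\qrelax{}$/$\qapprox{}$/$\qapproxrelax{}{}$ bookkeeping, and the two-stage split into \textsc{Sketch} and \textsc{Refine}---is exactly the paper's. The \textsc{Sketch} half of your outline, in particular, is a faithful paraphrase of what the paper proves: projecting the \opt package shows the sketch optimum is a $(1-\epsilon)^2$-approximation on a $(1-\epsilon)$-relaxed region, and one more factor undoes the projection, for three factors total.

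The \textsc{Refine} half, however, has a genuine gap. You assert that ``\textsc{Refine} preserves the multiplicity drawn from each group (the count constraint forces $|\pack_j|$ to equal the sketched multiplicity of $\repr_j$)'' and use this to conclude $\proj(\pack)=\pack_\unsolved$, hence a single factor. This is false in general: the refine query $\augmquery$ inherits only the global predicates of $\query$, so if $\query$ has no exact \ssf{COUNT} constraint, nothing pins $|\pack_j|$ to the number of copies of $\repr_j$ in the sketch. The refined package may draw more or fewer tuples from $G_j$ than the sketch did, so $\proj(\pack)\neq\pack_\unsolved$ and your one-factor bound collapses. (The example query in the paper happens to have \ssf{COUNT}$(\ast)=3$, which may be what misled you.)

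The paper's \textsc{Refine} argument avoids this entirely. It works group-by-group: the representative-based selection $\pack_j^{(j-1)}$ that was in place before refining $G_j$ satisfies the constraints of $\augmquery$ (because the whole package did), while the refined selection $\pack_j^{(j)}$ is \emph{optimal} for $\augmquery$. Now the \emph{same} projection machinery you invoked for \textsc{Sketch} applies to $\augmquery$: $\proj(\pack_j^{(j)})$ is a $(1-\epsilon)^2$-approximation on a $(1-\epsilon)$-relaxation, and since $\pack_j^{(j-1)}$ is feasible there, one more projection factor gives $\pack_j^{(j)}\bet(1-\epsilon)^3\,\pack_j^{(j-1)}$. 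Summing over groups yields three factors from \textsc{Refine}, not one---and those three, together with the three from \textsc{Sketch}, are precisely where the exponent $6$ comes from.
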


{
At a high level, the proof \ifshort \else (Appendix~\ref{app:perf-guarantees}) \fi
includes two steps: In the first step, we prove that the initial sketch package is a ${(1 \pm \epsilon)^3}$-approximation with respect to \opt; In the second step, we prove that the final package produced by \bt is a ${(1 \pm \epsilon)^3}$-approximation with respect to the initial sketch package.
}

\subsection{The Case of False Infeasibility} \label{sec:false-negatives}

For a feasible query $\query$, false negatives, or \emph{false infeasibility}, may happen in two cases: 
(1)~when the sketch query $\initquery$ is infeasible; 
(2)~when greedy backtracking fails (possibly due to suboptimal partitioning). 
In both cases, \bt would (incorrectly) report a feasible package query as infeasible. 
False negatives are, however, extremely rare, 
as the following theorem establishes.

\begin{restatable}[False Infeasibility]{theorem}{thfalseinf} 
\label{th:false-inf}
For any feasible package query, any database instance, any set of partitioning attributes $\mathcal{A}$ that is a superset of the query attributes, any size threshold $\thres$, and any radius limit $\omega$, \bt finds a feasible package with high probability that inversely depends on query selectivity.
\end{restatable}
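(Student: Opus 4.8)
The plan is to pin down exactly when false infeasibility can occur and to bound each failure mode probabilistically. Since $\query$ is feasible, fix a witness feasible complete package $\pack^\ast$ (e.g.\ the one \opt\ would return), and for each group $G_j$ let $c_j^\ast \ge 0$ be the number of distinct original tuples $\pack^\ast$ draws from $G_j$; under the repetition constraint every such tuple satisfies the base predicate. As the excerpt notes, \bt\ can return \textbf{infeasible} only in two ways: (1) the sketch query $\initquery$ is infeasible, or (2) every ordering explored by greedy backtracking encounters an infeasible refine query $\augmquery$. I would bound the probability of each under a model in which each tuple satisfies the base predicate independently with probability $s$, the query selectivity, and then combine the two bounds with a union bound. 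A key monotonicity observation simplifies case (2): \bt\ attempts at least its initial (random) ordering and only backtracks upon failure, so its success probability is at least that of any single fixed ordering; hence it suffices to lower-bound the success probability of one ordering, and backtracking can only help.

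The selectivity dependence is driven by the refine phase, which I would analyze first. When \bt\ refines a group $G_j$ whose representative $\repr_j$ was selected $c_j$ times in the current sketch package, $\augmquery$ asks for $c_j$ distinct original tuples of $G_j$ that satisfy the base predicate and, together with the fixed part $\bar{\pack}_j$, meet the global constraints. Let $N_j$ be the number of tuples in $G_j$ satisfying the base predicate; under the Bernoulli model $N_j \sim \mathrm{Binomial}(|G_j|, s)$, and refinement of $G_j$ can fail only if $N_j < c_j$. A standard lower-tail bound on $N_j$ shows that $\Pr[N_j < c_j]$ decays rapidly once the expected count $s\,|G_j|$ exceeds the demand $c_j$. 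Summing these $\Pr[N_j < c_j]$ over the at most $m$ refined groups gives a failure bound that shrinks as $s$ grows, i.e.\ it degrades precisely as the query becomes more selective (fewer tuples qualify) --- the claimed inverse dependence on selectivity. Note this part of the argument uses only tuple counts and is insensitive to the radius limit $\omega$, which is why the guarantee holds for \emph{any} $\omega$.

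For the sketch phase, I would argue that \textsc{\initialalgo} fails only when no feasible assignment of representatives exists. Failure requires either that too few representatives $\repr_j$ themselves satisfy the base predicate (again governed by $s$, since a representative inherits a qualifying value with probability tied to selectivity), or that the witness-induced assignment lands within an $\epsilon$-fraction of a hard constraint boundary (so that replacing $\pack^\ast$'s tuples by their representatives pushes an aggregate like $\SUM(\pack_\unsolved.\text{\ssf{kcal}})$ outside its exact range). I would bound the former with the same Bernoulli/binomial estimate and treat the boundary event as a separate, lower-order contribution; both are then folded into the final union bound over the sketch event and the $m$ refine events.

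The clean part of this program is the binomial tail estimate; the delicate parts, which I expect to be the main obstacle, are (i) justifying the independence/selectivity model for base predicates that are not simple numeric ranges, and (ii) decoupling the refine events, which are genuinely correlated because the multiplicities $\{c_j\}$ are chosen jointly by the sketch solver and because backtracking makes the refinement order adaptive. I would address (ii) by conditioning on the multiplicity vector $\{c_j\}$ produced by \textsc{\initialalgo}, bounding each refine-failure probability conditionally on that vector, and then maximizing over achievable vectors, exploiting that $\sum_j c_j$ equals the (typically small) package cardinality so that each demand $c_j$ stays well below its group's expected qualifying count $s\,|G_j|$. This keeps every per-group tail small and makes the inverse dependence on selectivity explicit in the final bound.
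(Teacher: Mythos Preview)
Your argument rests on a reading of ``selectivity'' that differs from the paper's. You model selectivity as the per-tuple probability $s$ of satisfying the \emph{base} predicate, and your entire refine-phase analysis hinges on the claim that $\augmquery$ ``can fail only if $N_j < c_j$'', i.e.\ only when too few tuples in $G_j$ pass the base filter. That claim is false: the refine query also carries the \emph{global} aggregate constraints (e.g.\ $\SUM(\pack_j.\text{\ssf{kcal}}) + \SUM(\bar\pack_j.\text{\ssf{kcal}})$ between given bounds), and these can be infeasible even when arbitrarily many tuples in $G_j$ satisfy the base predicate. Global constraints are the defining feature of package queries---indeed the benchmark queries in this paper contain no base predicates at all---so a binomial tail on base-predicate counts does not control the failure event you need. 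The same gap appears in your sketch-phase analysis, where you relegate the global-constraint ``boundary'' event to a lower-order term without any actual bound.

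The paper instead defines selectivity at the \emph{package} level: $sel(\query) = 1 - \Pr[\rpack \sat \query]$ for a uniformly random package $\rpack$, so low selectivity literally means random packages are likely feasible with respect to the global constraints. The key lemma is that if a random package $\rpack$ is feasible, then with high probability its projection $\proj(\rpack)$ (replace each tuple by its group representative) is also feasible, which witnesses feasibility of $\initquery$; the analogous argument handles each $\augmquery$. The mechanism is a concentration bound on the aggregate sums: because representatives are \emph{centroids}, $E[\SUM(\proj(\rpack).{\sf A})] = E[\SUM(\rpack.{\sf A})]$, and Hoeffding's inequality then keeps $\SUM(\proj(\rpack).{\sf A})$ close to this common expectation. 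In short, the proof controls global-constraint values directly via concentration of sums, rather than counting base-predicate survivors; your proposal needs to be rebuilt around the package-level notion of selectivity and an argument about aggregate values, not tuple counts.
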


\ifshort
\else
We include the proof in Appendix~\ref{sec:false-inf-bound}.
\fi
The \emph{selectivity} of a package query denotes the probability of a
random package being infeasible (thus, lower selectivity implies
higher probability of random package being feasible). To prove this
result, we show that if a random package is feasible, then with high
probability the sketch query and all refine queries are feasible.
Thus, lower selectivity implies higher probability that \bt
successfully terminates with a feasible package.

\smallskip
We discuss four potential ways to deal with false infeasibility, which
we plan to explore in future work:

\begin{enumerate}[topsep=0ex,partopsep=1ex,parsep=1ex,itemsep=0ex,leftmargin=0mm,listparindent=0mm,labelsep=0.8mm,itemindent=3mm]

\litem{Hybrid sketch query} 
A simple method to avoid false infeasibility in the {\sc \initialalgo}
step is to merge the sketch query $\initquery$ with one of the
refine queries. This ``hybrid'' sketch query would select original
tuples for one of the groups, and at the same time select
representative tuples for the remaining groups. Groups can be tried in
any order (e.g., in random order), until one of the hybrid sketch
queries is feasible.

\litem{Further partitioning}
In some cases, the centroid of a group may not be a very good
representative (e.g., when the data is skewed). This can sometimes
result in false negatives.
Further partitioning by reducing the size threshold $\thres$ may
eliminate the problem.

\litem{Dropping partitioning attributes}
A more principled approach consists of ``projecting'' the partitioning
onto fewer dimensions by reducing the number of partitioning
attributes. In doing so, some groups merge, increasing the chance that
previously infeasible groups become feasible. The choice of attributes
to remove from the partitioning can be guided by the last infeasible
\ilp problem. Most \ilp solvers provide functionality to identify a
\emph{minimal set of infeasible constraints}: removing any constraint
from the set makes the problem feasible.\footnote{This set is
usually referred to as \emph{irreducible infeasible set} (IIS).} Removing the
attributes that participate in these constraints from the partitioning
can increase the odds of discovering a feasible solution.

\litem{Iterative group merging}
In a brute-force approach, we can merge groups iteratively, until the
sub-queries become feasible. In the worst case, this process reduces
the problem to the original problem (i.e., with no partitioning), and
thus it is guaranteed to find a solution to any feasible query, at the
cost of performance.

\end{enumerate}

\subsection{Discussion}\label{sec:discussion}

\bt is an evaluation strategy for package queries with three important
advantages. First, it scales naturally to very large datasets, by
breaking down the problem into smaller, manageable subproblems, whose
solutions can be iteratively combined to form the final result.
Second, it provides flexible approximations with strong theoretical
guarantees on the quality of the package results.
Third, while our current implementation of \bt employs \ilp solvers to
evaluate the generated subproblems, our algorithm can use any other
black box solution for package queries, even solutions that work
entirely in main memory, and whose efficiency drastically degrades
with larger problem sizes. We plan to explore these alternatives in
our future work.

\paratitle{Parallelizing \bt.}
Its data partitioning and problem division strategies give \bt great
potential for parallelization. However, the proper parallelization
strategy is non-obvious, and is a nontrivial part of future work. A
simple parallelization strategy could perform refinement on several
groups in parallel. However, since refinements make local
decisions, this process is more likely to reach infeasibility,
requiring costly backtracking steps and resulting in wasted
computation. Alternatively, parallelization may focus on the
backtracking process, using additional resources to evaluate different
group orderings in parallel.

\added{
\paratitle{Handling joins.}
In this paper, we focused on package queries over single relations.
Handling joins poses interesting implications, orthogonal to
the evaluation techniques that we presented in this work. In the
presence of joins, the system can simply evaluate and materialize the
join result before applying the package-specific transformations.
However, the materialization of the join result is not always necessary:
\opt generates variables through a single sequential scan
of the join result, and thus the join tuples can be pipelined into the
\ilp generation without being materialized. However, not materializing
the join results means that some of the join tuples will need to be
recomputed to populate the solution package. Therefore, there is a
space-time tradeoff in the consideration of materializing the join.
Further, this tradeoff can be improved with hybrid, system-level
solutions, such as storing the record IDs of joining tuples to enable
faster access during package generation. These considerations are
well-beyond our current scope, and are orthogonal to the techniques
that we present in this work.
}

\section{Experimental Evaluation} \label{sec:experiments}

In this section, we present an extensive experimental evaluation of
our techniques for package query execution, both on real-world and on
benchmark data.
Our results show the following properties of our methods:
(1)~\bt evaluates package queries an order of magnitude faster than \opt;
(2)~\bt scales up to sizes that \opt cannot handle directly; 
(3)~\bt produces packages of high quality (similar objective value as the packages returned by \opt); 
(4)~the performance of \bt is robust to partitioning on different sets of attributes as long as a query's attributes are mostly covered. This makes offline partitioning effective for entire query workloads.

\subsection{Experimental Setup}

\paratitle{Software.}
We implemented our package evaluation system
as a layer on top of a traditional relational DBMS. The data itself resides in the database, and the system interacts with the DBMS via \sql when it needs to perform operations on the data. 
We use PostgreSQL v9.3.9 for our experiments.
The core components of our evaluation module are implemented in Python 2.7. The \paql parser is generated in C++ from a context-free grammar, using GNU Bison~\cite{bison}. 
We represent a package in the relational model as a standard relation with schema equivalent to the schema of the input relation.
A package is materialized into the DBMS only when necessary (for example, to compute its objective value).

We employ IBM's \cplex~\cite{cplex} v12.6.1 as our black-box \ilp solver. When the algorithm needs to solve an \ilp problem, the corresponding data is retrieved from the DBMS and passed to \cplex using tuple iterator APIs to avoid having more than one copy of the same data stored in main memory at any time. 
We used the same settings for all solver executions:
we set its working memory 
to 512MB; 
we instructed \cplex to store exceeding data used during the solve procedure on disk in a compressed format, rather than using the operating system's virtual memory, which, as per the documentation, may degrade the solver's performance; 
we instructed \cplex to emphasize optimality versus feasibility to dampen the effect of internal heuristics that the solver may employ on particularly hard problems; 
we enabled \cplex's memory emphasis parameter, which instructs the solver to conserve memory where possible;
we set a solving time limit of one hour;
we also made sure that the operating system would kill the solver process whenever it uses the entire available main memory.

\paratitle{Environment.}
We run all experiments on a ProLiant DL160 G6 server equipped with two twelve-core Intel Xeon X5650 CPUs at 2.66GHz each, with 15GB or RAM, with a single 7200 RPM 500GB hard drive, running CentOS release 6.5.

\paratitle{Datasets and queries.}
We demonstrate the performance of our query evaluation methods using both real-world and benchmark data. 
The real-world dataset consists of approximately 5.5 million tuples extracted from the Galaxy view of the Sloan Digital Sky Survey (\sdss)~\cite{sdss}, data release 12. 
For the benchmark datasets we used \tpch~\cite{tpch}, 
with table sizes up to 11.8 million tuples.

\begin{figure}[t]
\centering
\scriptsize
\begin{tabularx}{\columnwidth}{|c|@{} *7{>{\centering\arraybackslash}X}@{}|}
\hline
\tpch query      & Q1 & Q2 & Q3 & Q4 & Q5   & Q6    & Q7 \\
\hline
Max \# of tuples & 6M & 6M & 6M & 6M & 240k & 11.8M & 6M \\
\hline
\end{tabularx}
\caption{Size of the tables used in the \tpch benchmark.}
\label{tb:tpch-queries}
\end{figure}

\begin{figure}[t]
\centering
\scriptsize
\begin{tabularx}{\columnwidth}{|c|@{} *3{>{\centering\arraybackslash}X}@{}|}
\hline
Dataset      & Dataset size & Size threshold $\thres$ &  Partitioning time \\
\hline
\galaxy      & 5.5M tuples        & 550k tuples    & 348 sec.\\
\tpch        & 17.5M tuples       & 1.8M tuples    & 1672 sec.\\
\hline
\end{tabularx}
\caption{Partitioning time for the two datasets, using the workload attributes and with no radius condition.}
\label{tb:partitioning}
\end{figure}

For each of the two datasets, we constructed a set of seven package queries,\removed{.
We specified the package queries} 
by adapting existing \sql queries originally designed for each of the two datasets.
For the \galaxy dataset, we adapted some of the real-world sample \sql queries available directly from the \sdss website.\footnote{\resizebox{\columnwidth}{!}{\url{http:
//cas.sdss.org/dr12/en/help/docs/realquery.aspx}}}
For the \tpch dataset, we adapted seven of the \sql query templates provided with the benchmark that contained enough numerical attributes.
We performed query specification manually, by transforming \sql aggregates into global predicates or objective criteria whenever possible, selection predicates into global predicates, and by adding cardinality bounds. 
We did not include any base predicates in our package queries because they can always be pre-processed by running a standard \sql query over the input dataset (\Cref{sec:ilp}), and thus eliminated beforehand.
For the \galaxy queries, we synthesized the global constraint bounds by multiplying the original selection constraint bound by the expected size of the feasible packages. 
For the \tpch queries, we generated global constraint bounds uniformly at random by multiplying random values in the value range of a specific attribute by the expected size of the feasible packages.
The original \tpch \sql queries involve attributes across different relations and compute various group-by aggregates. In order to transform these queries into single-relation package queries, we processed the original \tpch tables to produce a single pre-joined table, obtained with full outer joins, containing all attributes needed by all the \tpch package queries in our benchmark. This table contained approximately 17.5 million tuples.
For each \tpch package query, we then extracted the subset of tuples having non-NULL values on all the query attributes.
The size of each resulting table is reported in \Cref{tb:tpch-queries}.

\begin{figure*}[ht]
    \centering
    \begin{subfigure}[b]{1.0\textwidth}
        \centering
        \includegraphics[scale=0.25,trim=150 0 0 0,frame=.0mm]
        {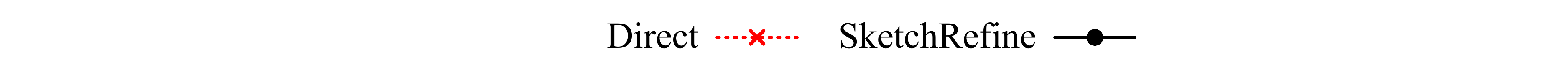}
    \end{subfigure}
    \begin{subfigure}[b]{0pt}
        \centering
        \includegraphics[scale=0.25,trim=50 0 0 0,left]
        {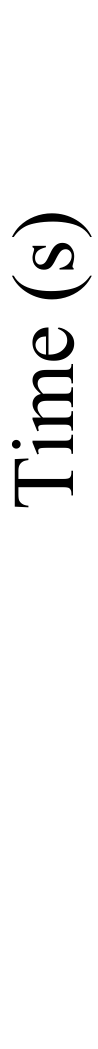}
    \end{subfigure}
    \begin{subfigure}[b]{.138\textwidth}
        \centering
        \caption[short for lof]{Q1}
        \includegraphics[scale=0.25,trim=30 0 0 20,center]
        {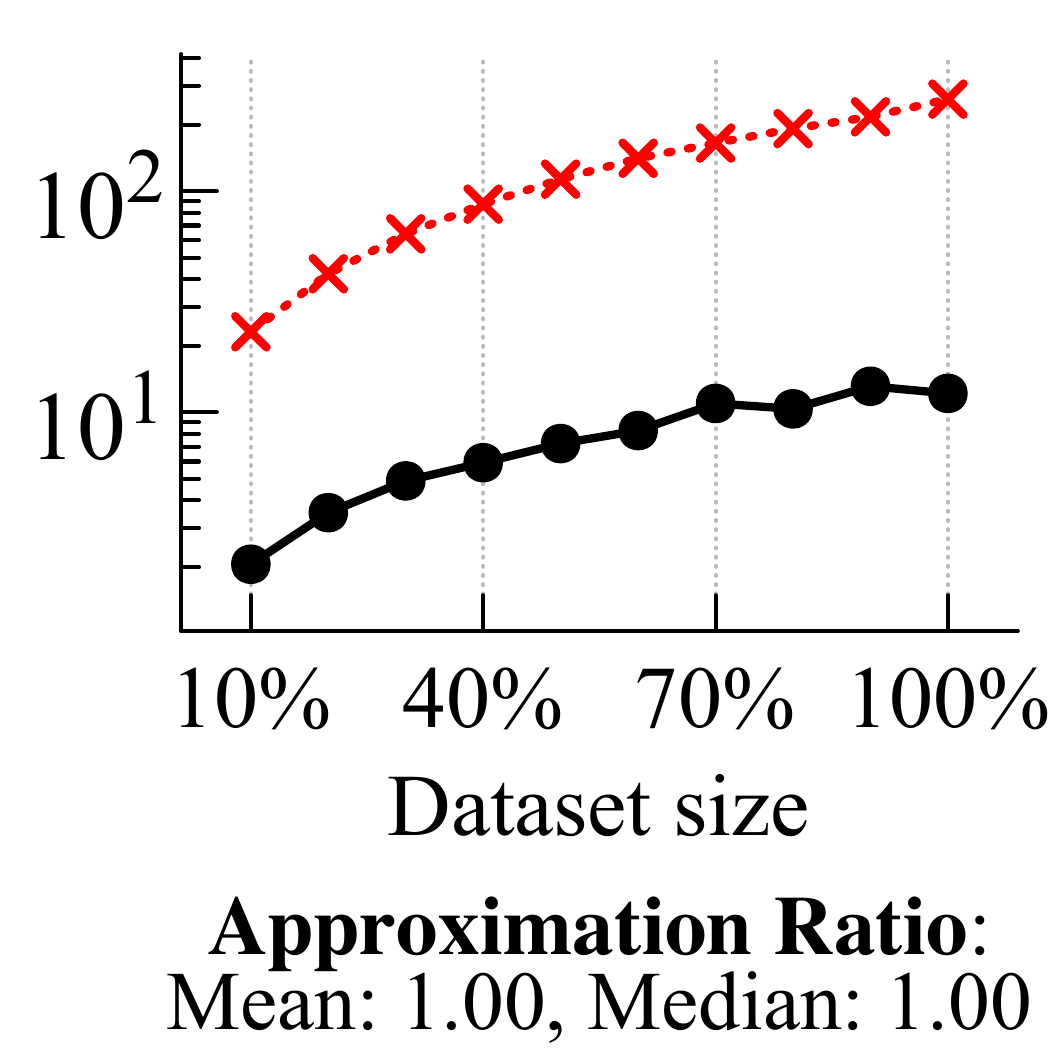}
        \label{fig:scalability_galaxy:q1}
    \end{subfigure}
    \begin{subfigure}[b]{.138\textwidth}
        \centering
        \caption[short for lof]{Q2}
        \includegraphics[scale=0.25,trim=30 0 0 20,center]
        {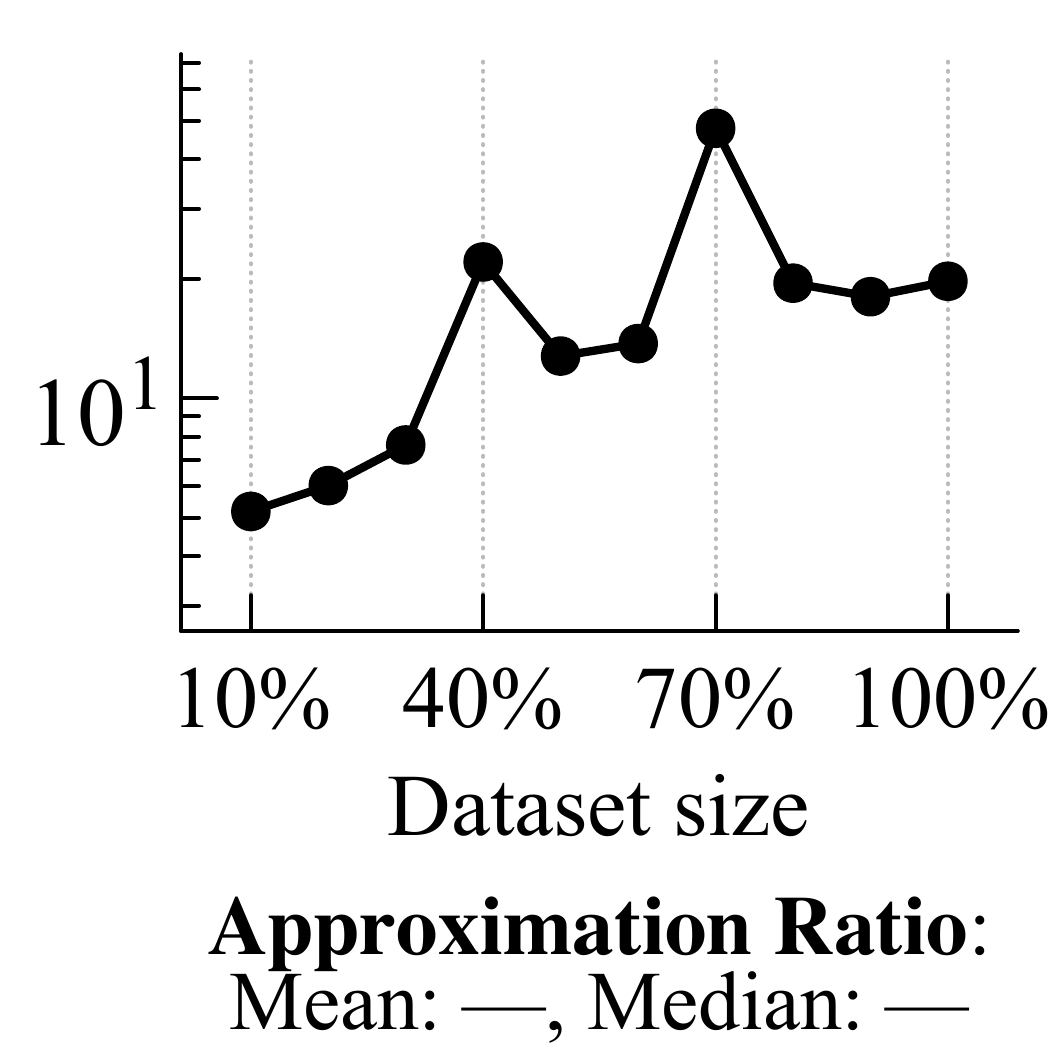}
        \label{fig:scalability_galaxy:q2}
    \end{subfigure}
    \begin{subfigure}[b]{.138\textwidth}
        \centering
        \caption[short for lof]{Q3}
        \includegraphics[scale=0.25,trim=30 0 0 20,center]
        {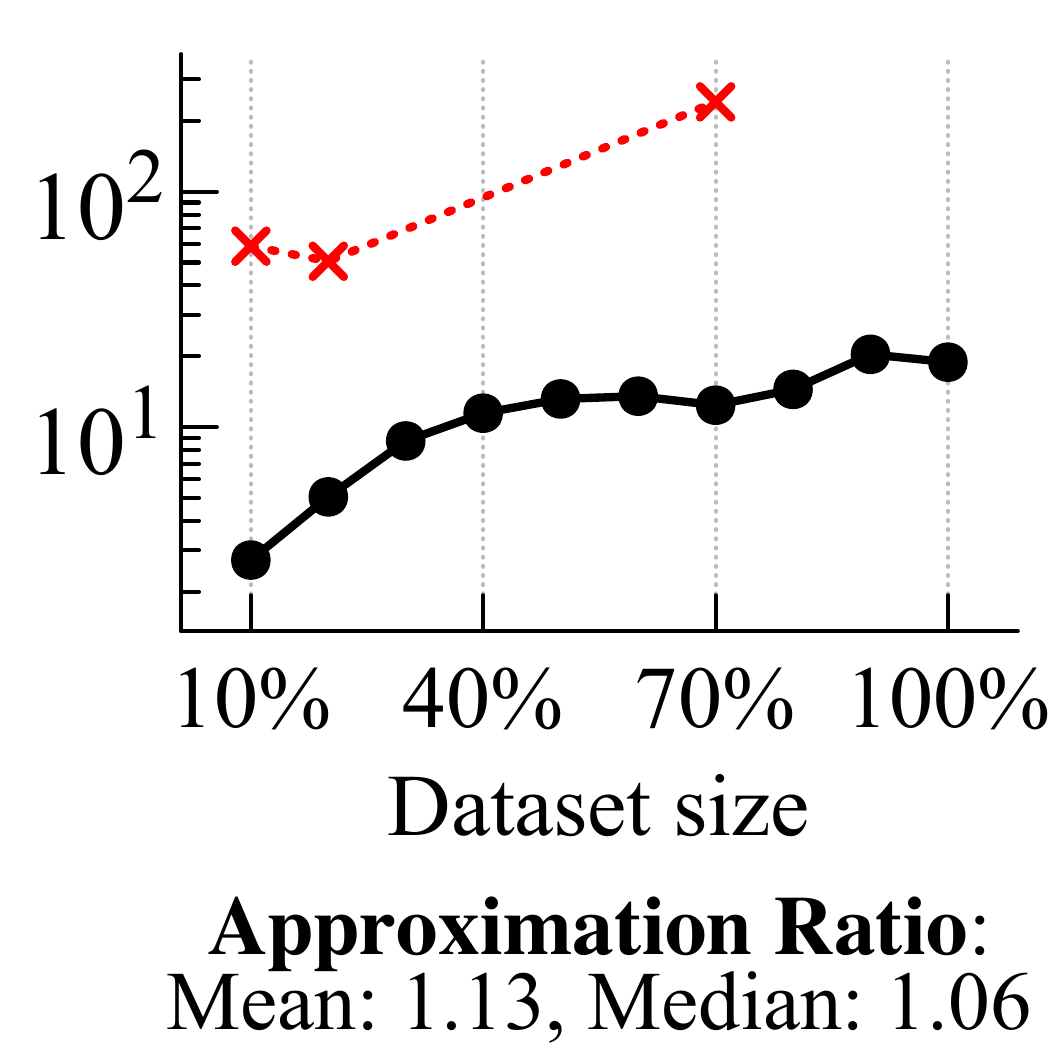}
        \label{fig:scalability_galaxy:q3}
    \end{subfigure}
    \begin{subfigure}[b]{.138\textwidth}
        \centering
        \caption[short for lof]{Q4}
        \includegraphics[scale=0.25,trim=30 0 0 20,center]
        {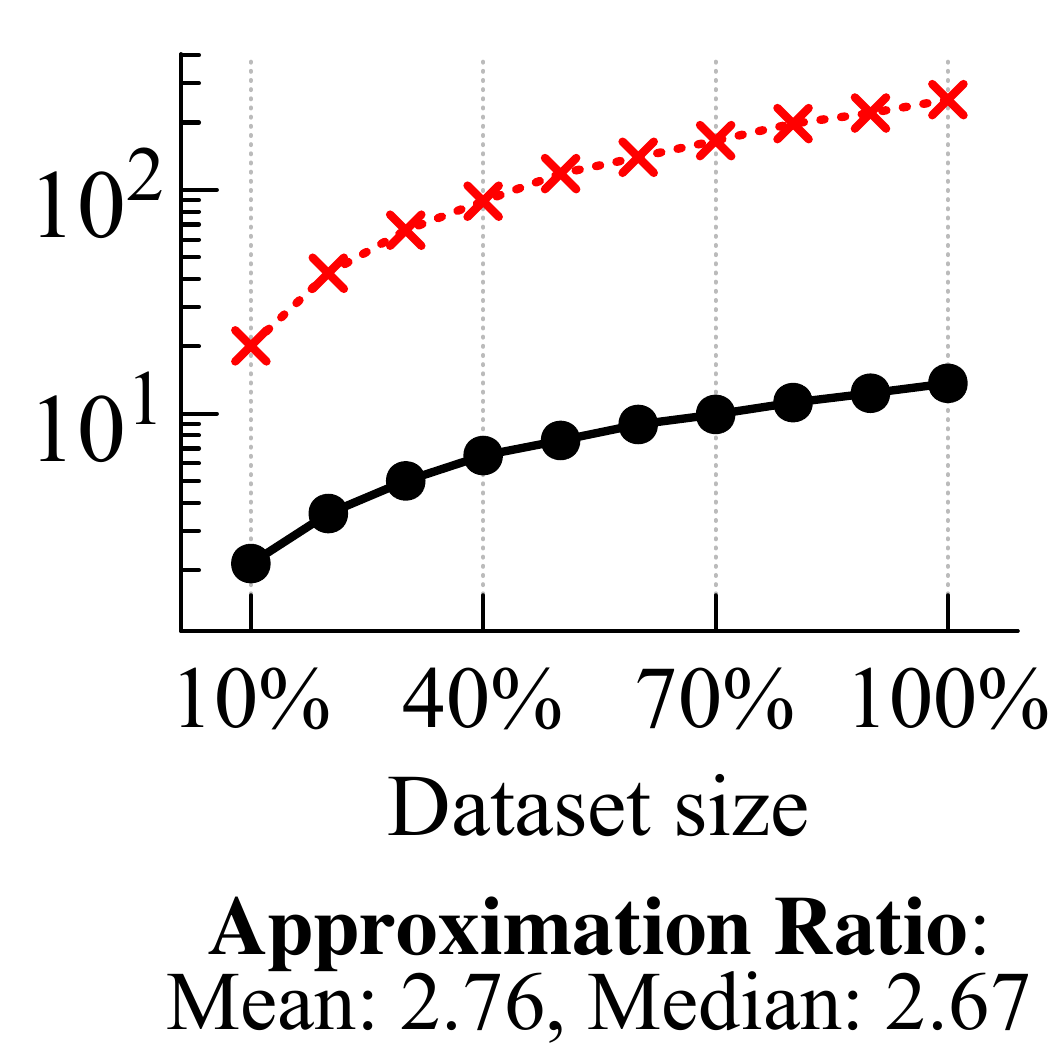}
        \label{fig:scalability_galaxy:q4}
    \end{subfigure}
    \begin{subfigure}[b]{.138\textwidth}
        \centering
        \caption[short for lof]{Q5}
        \includegraphics[scale=0.25,trim=30 0 0 20,center]
        {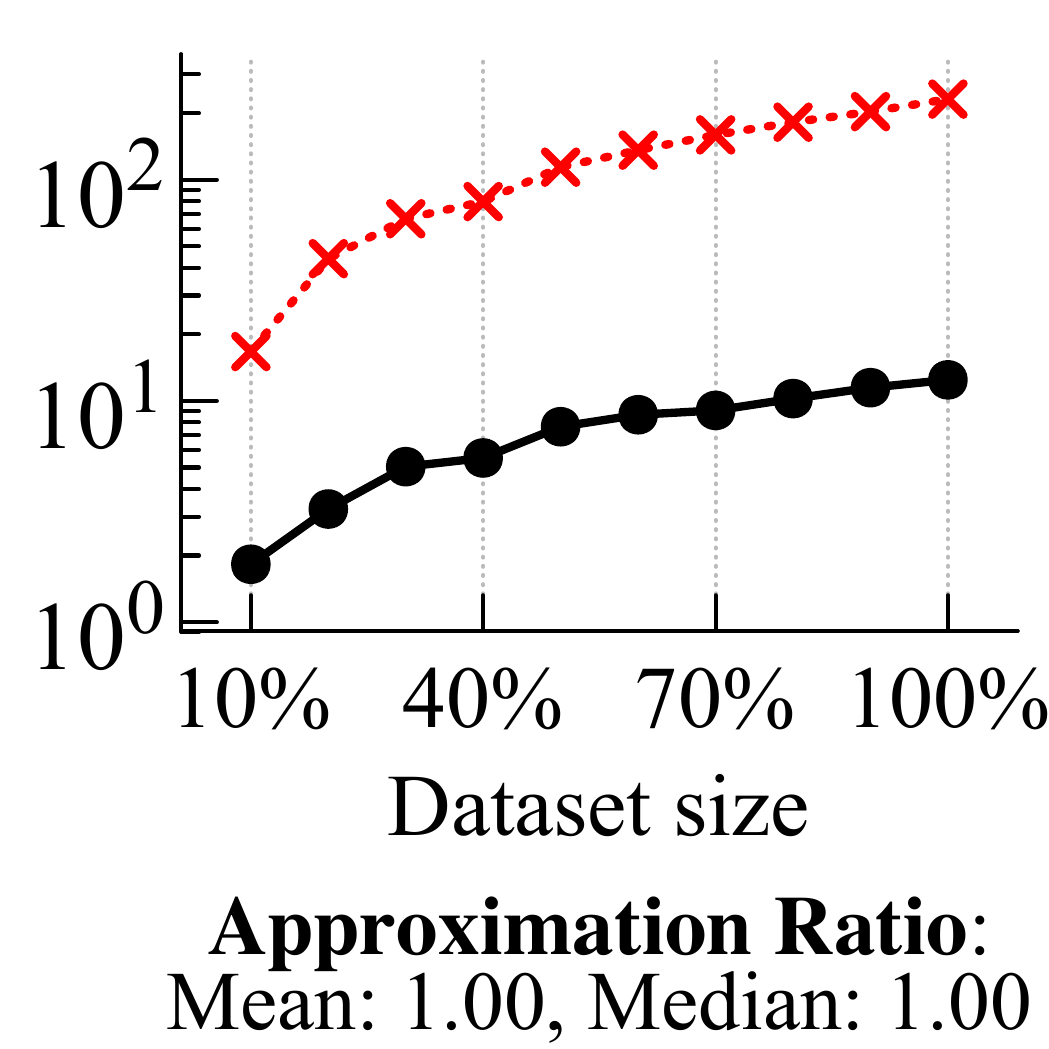}
        \label{fig:scalability_galaxy:q5}
    \end{subfigure}
    \begin{subfigure}[b]{.138\textwidth}
        \centering
        \caption[short for lof]{Q6}
        \includegraphics[scale=0.25,trim=30 0 0 20,center]
        {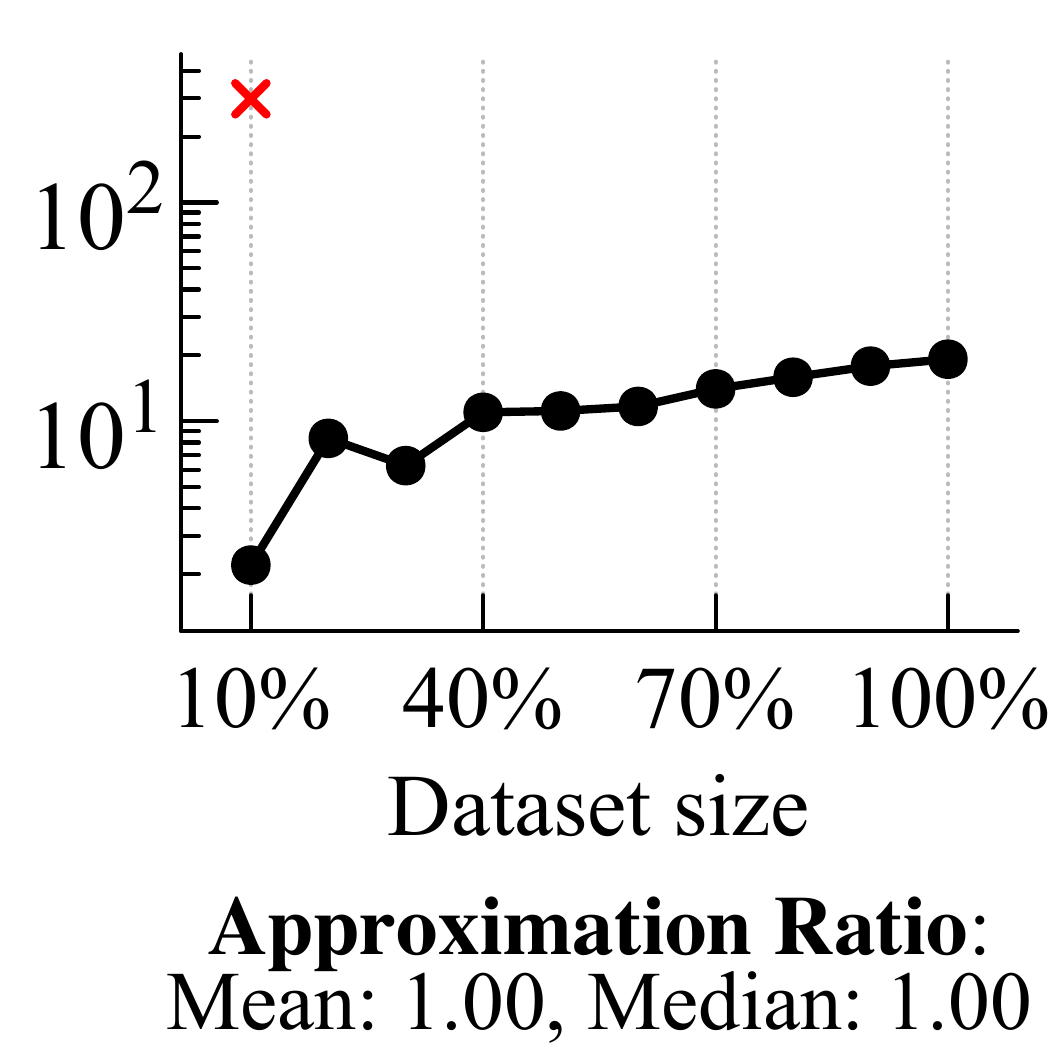}
        \label{fig:scalability_galaxy:q6}
    \end{subfigure}
    \begin{subfigure}[b]{.138\textwidth}
        \centering
        \caption[short for lof]{Q7}
        \includegraphics[scale=0.25,trim=30 0 0 20,center]
        {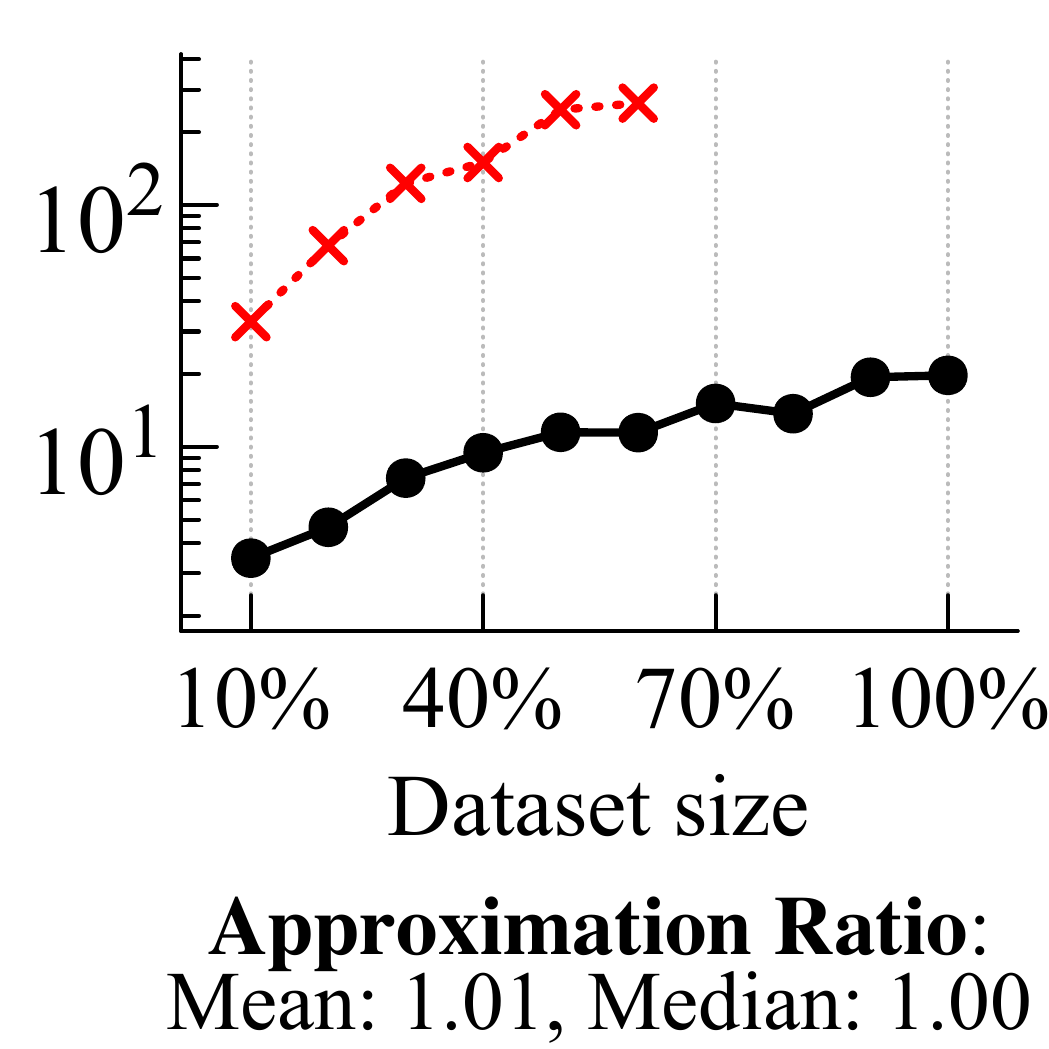}
        \label{fig:scalability_galaxy:q7}
    \end{subfigure}
\vspace{-3mm}
    \caption{
    Scalability on the \galaxy benchmark. 
    \bt uses an offline partitioning computed on the full dataset, using the workload attributes, $\thres =$ 10\% of the dataset size, and no radius condition. 
    \opt scales up to millions of tuples in about half of the queries, but it fails on the other half.
    \bt scales up nicely in all cases, and runs about an order of magnitude faster than \opt. Its approximation ratio is always low, even though the partitioning is constructed without radius condition.
    } 
    \label{fig:scalability_galaxy}
\end{figure*}

\begin{figure*}[ht]
    \centering
    \begin{subfigure}[b]{1.0\textwidth}
        \centering
        \includegraphics[scale=0.25,trim=150 0 0 0,frame=.0mm]
        {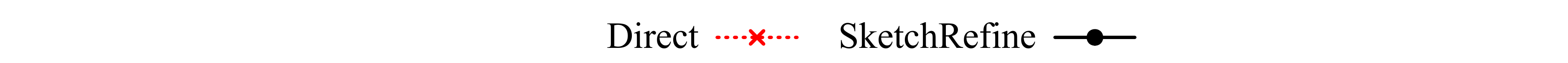}
    \end{subfigure}
    \begin{subfigure}[b]{0pt}
        \centering
        \includegraphics[scale=0.25,trim=50 0 0 0,left]
        {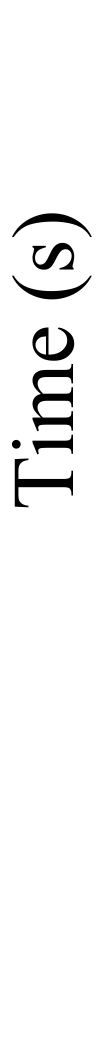}
    \end{subfigure}
    \begin{subfigure}[b]{.138\textwidth}
        \centering
        \caption[short for lof]{Q1}
        \includegraphics[scale=0.25,trim=30 0 0 20,center]
        {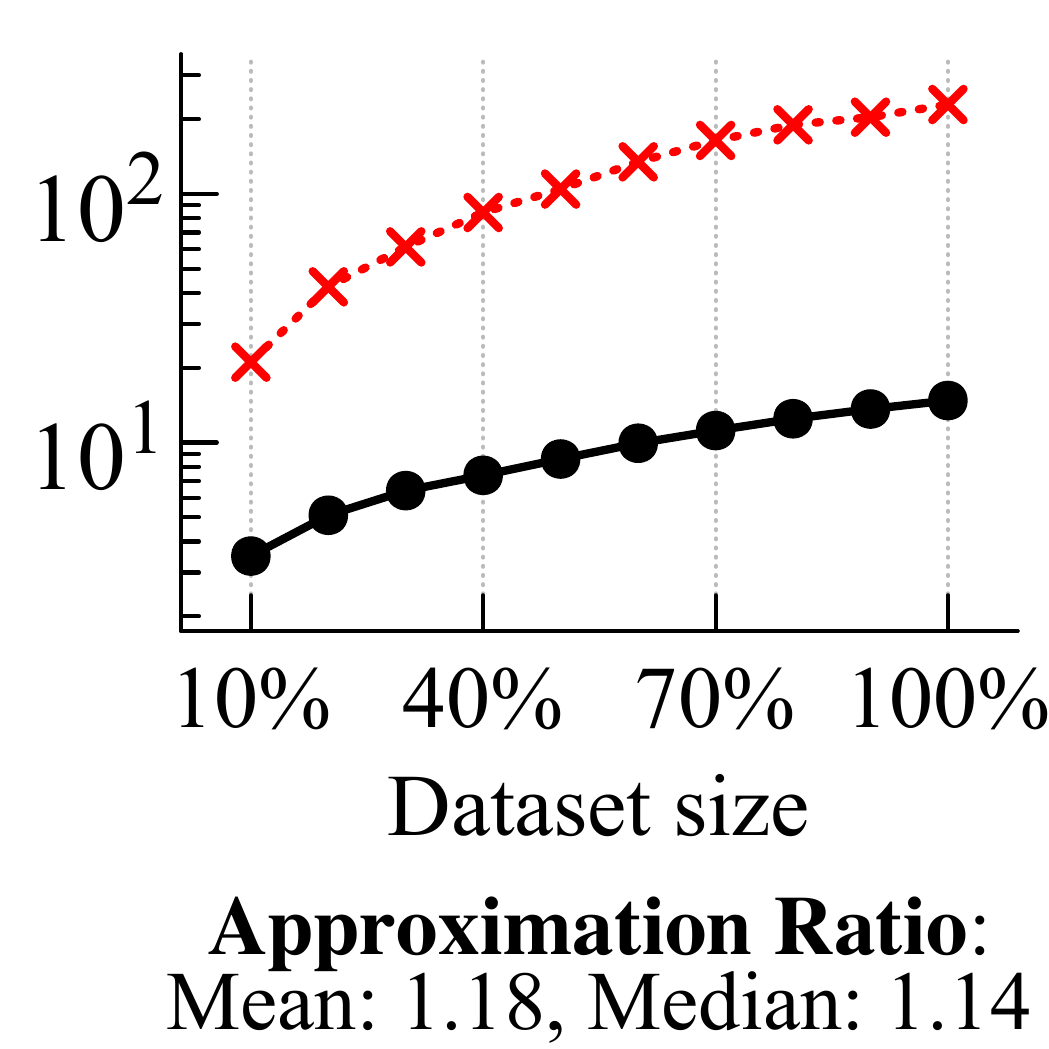}
    \end{subfigure}
    \begin{subfigure}[b]{.138\textwidth}
        \centering
        \caption[short for lof]{Q2}
        \includegraphics[scale=0.25,trim=30 0 0 20,center]
        {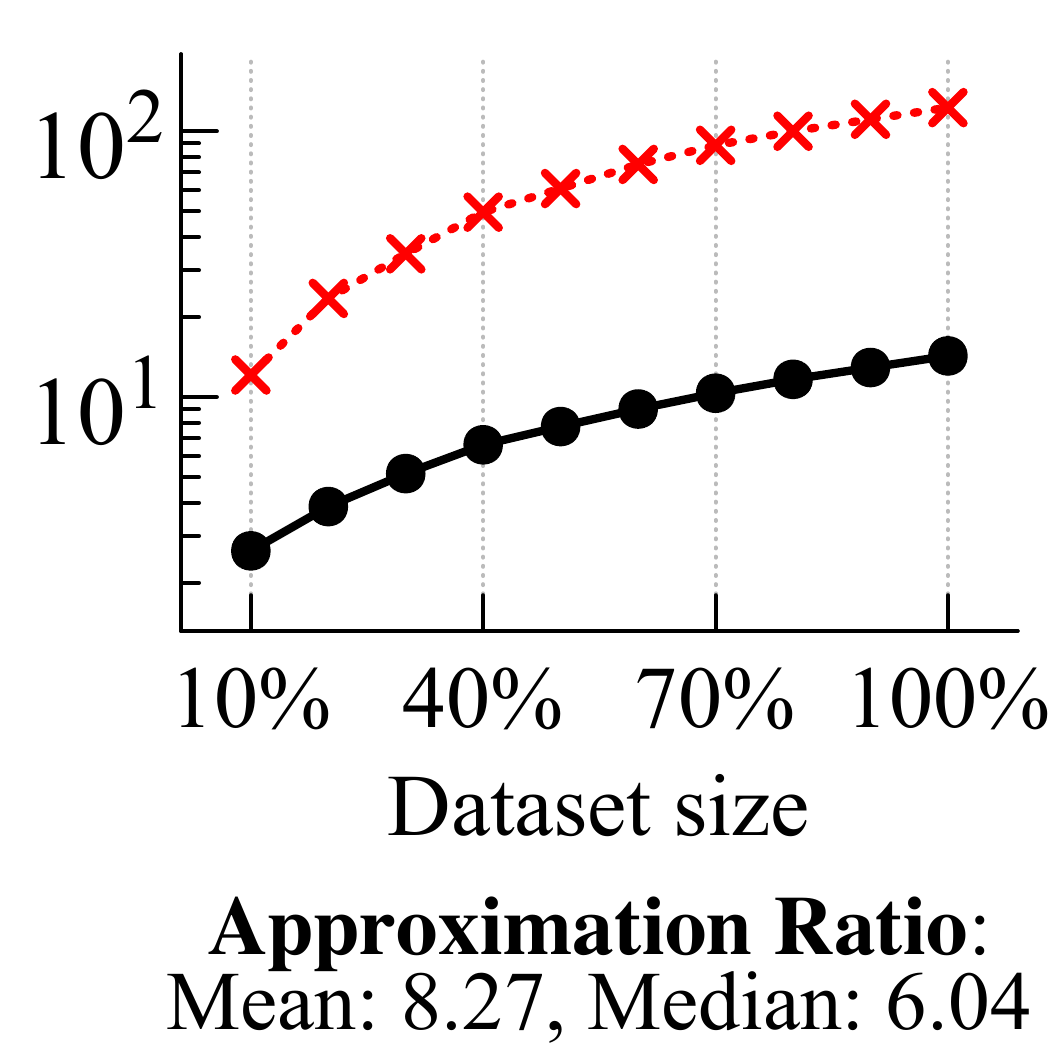}
    \end{subfigure}
    \begin{subfigure}[b]{.138\textwidth}
        \centering
        \caption[short for lof]{Q3}
        \includegraphics[scale=0.25,trim=30 0 0 20,center]
        {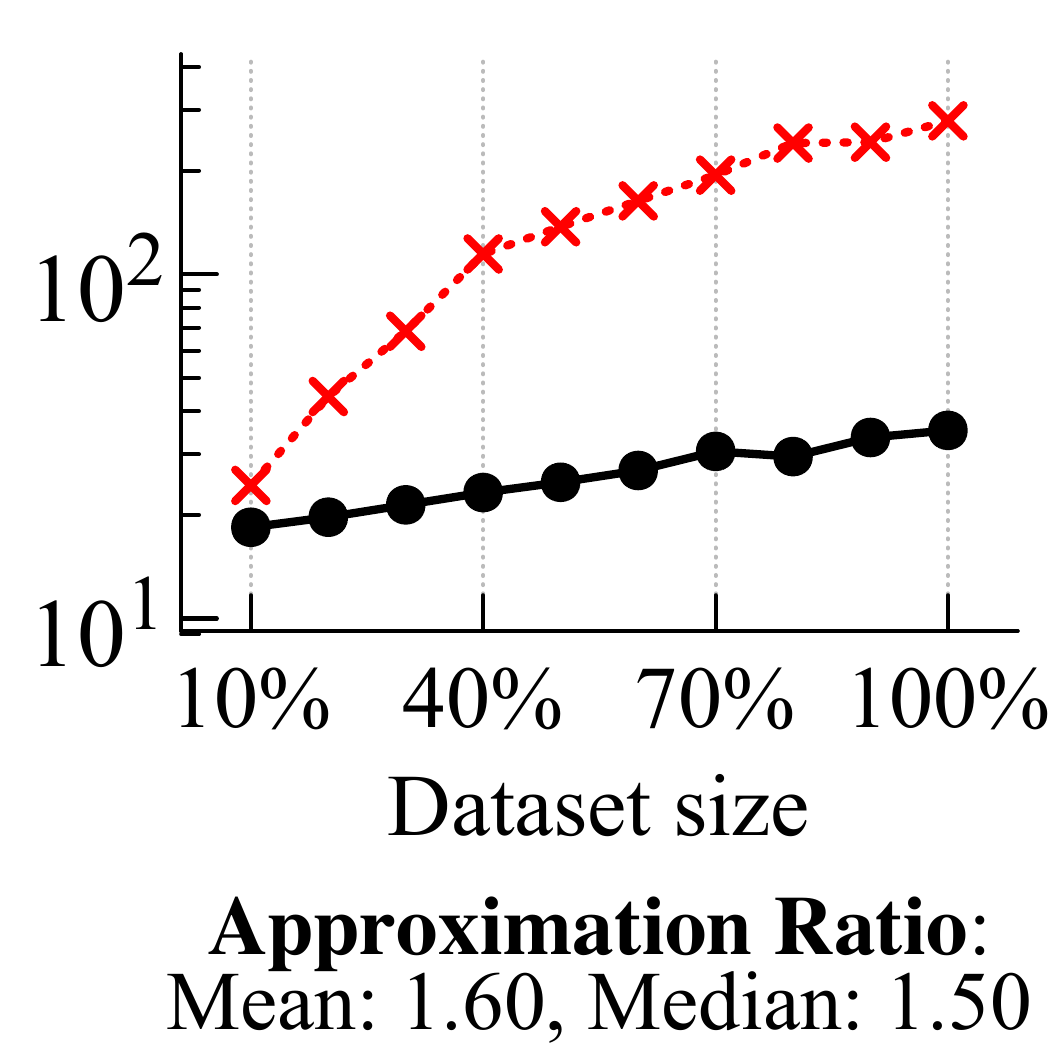}
    \end{subfigure}
    \begin{subfigure}[b]{.138\textwidth}
        \centering
        \caption[short for lof]{Q4}
        \includegraphics[scale=0.25,trim=30 0 0 20,center]
        {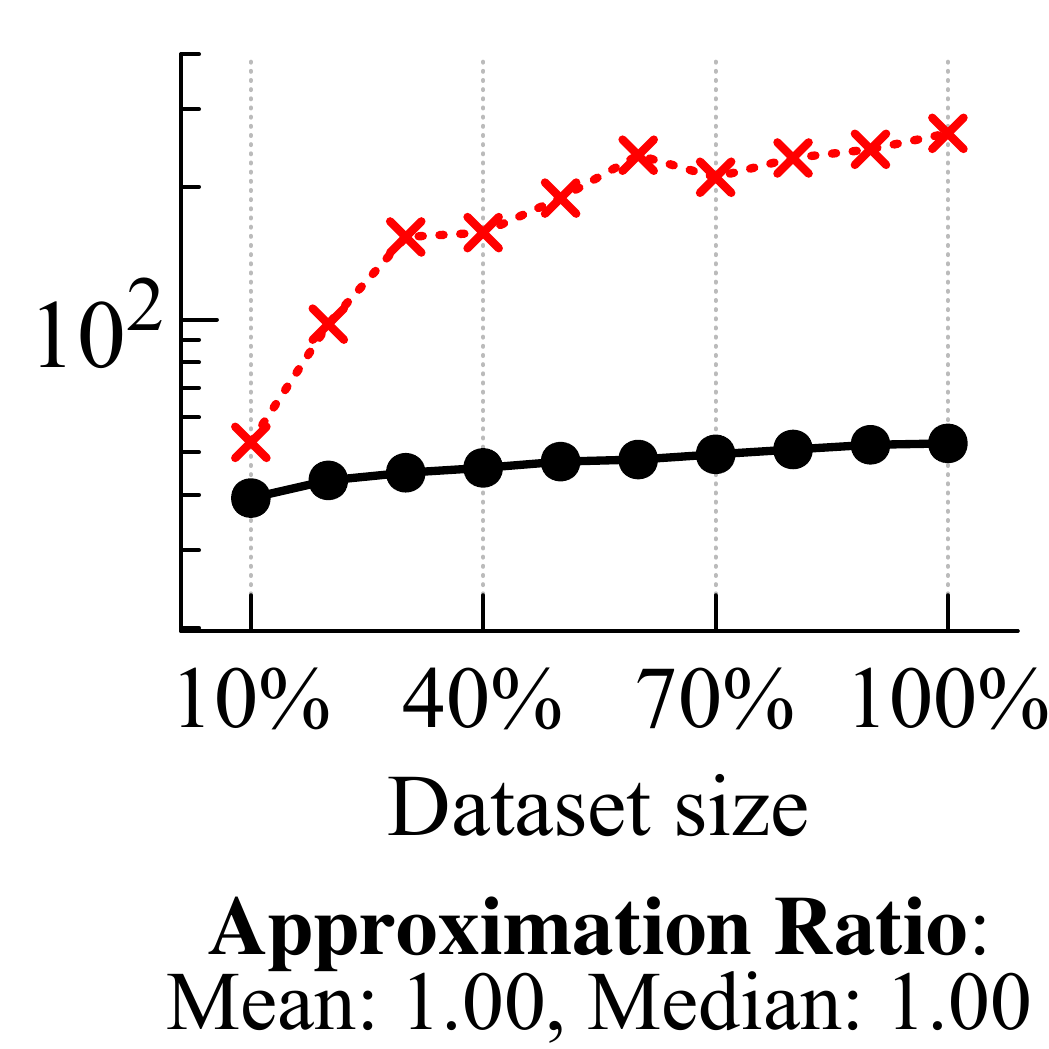}
    \end{subfigure}
    \begin{subfigure}[b]{.138\textwidth}
        \centering
        \caption[short for lof]{Q5}
        \includegraphics[scale=0.25,trim=30 0 0 20,center]
        {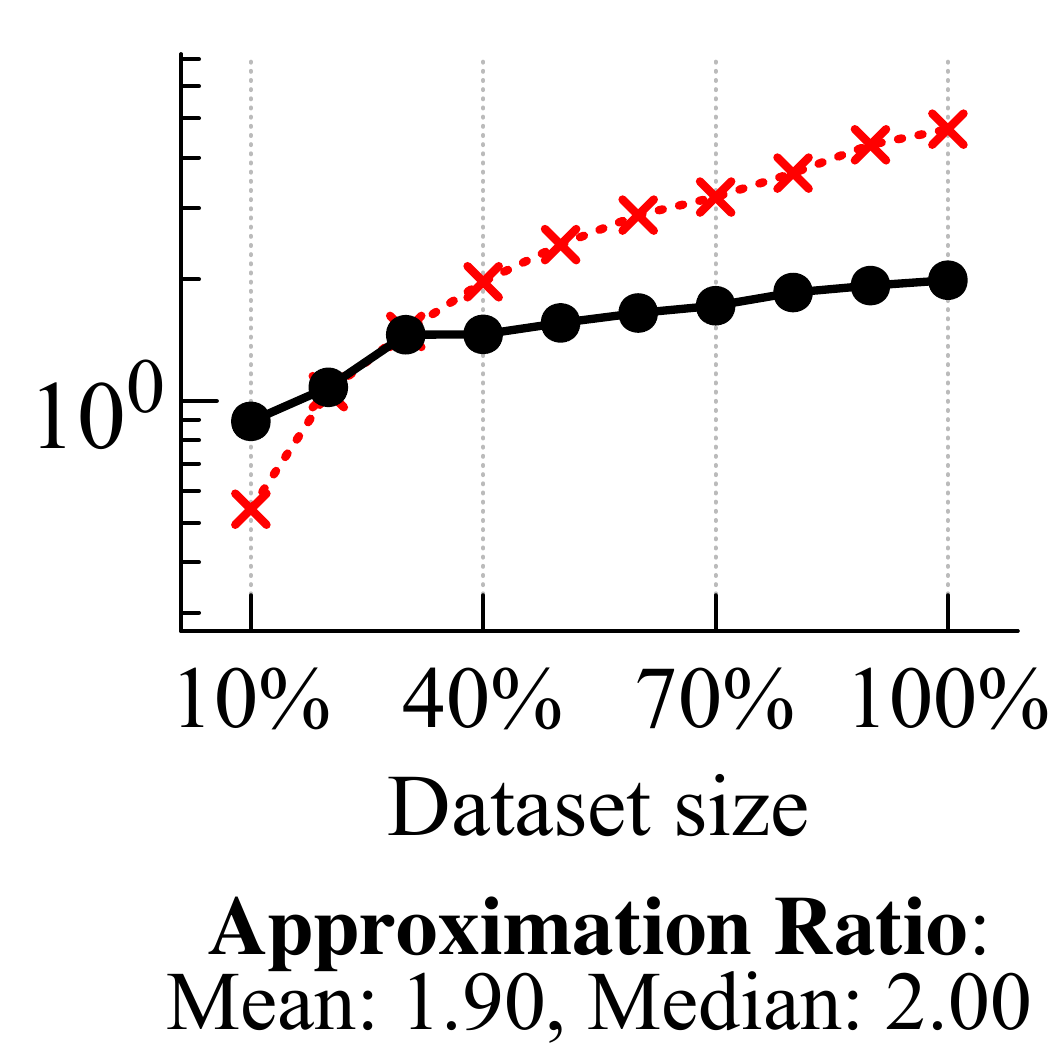}
    \end{subfigure}
    \begin{subfigure}[b]{.138\textwidth}
        \centering
        \caption[short for lof]{Q6}
        \includegraphics[scale=0.25,trim=30 0 0 20,center]
        {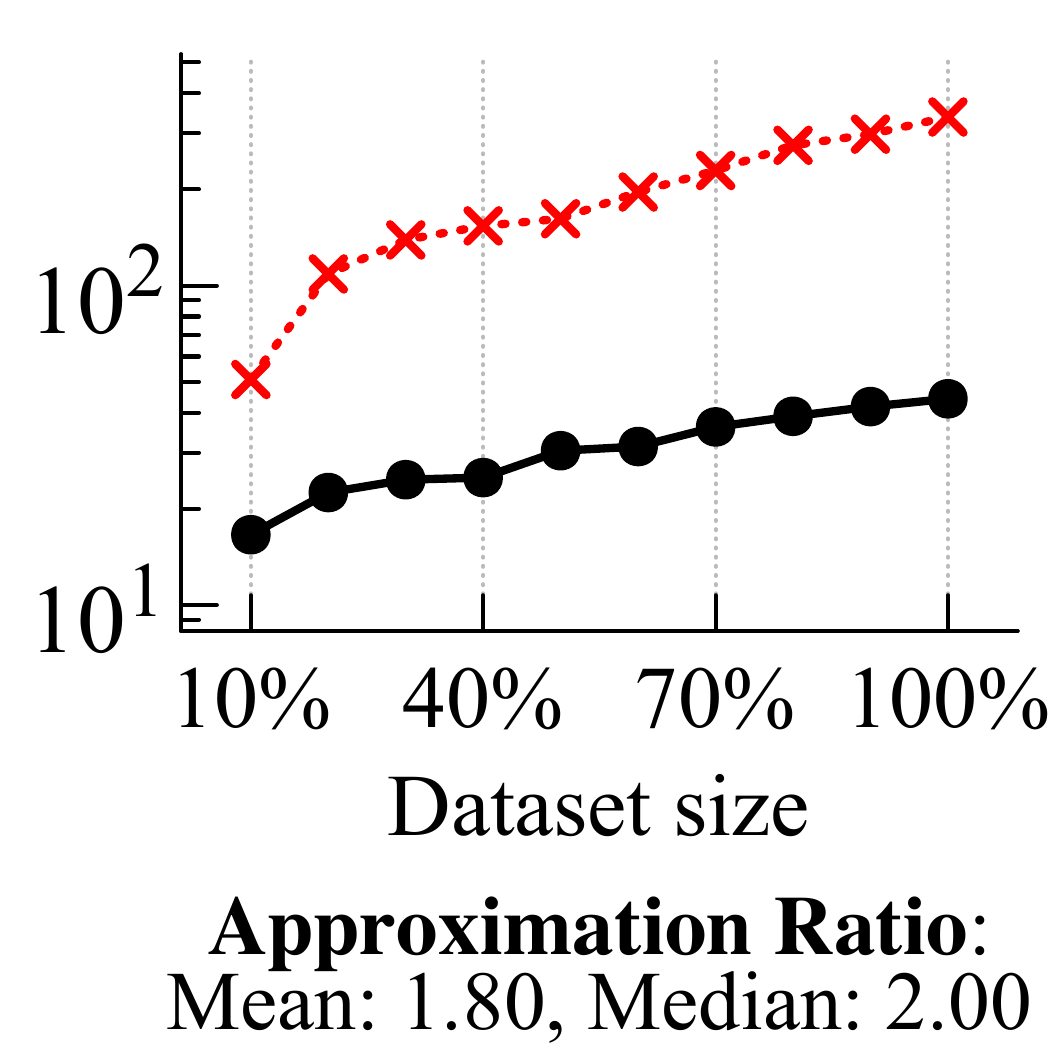}
    \end{subfigure}
    \begin{subfigure}[b]{.138\textwidth}
        \centering
        \caption[short for lof]{Q7}
        \includegraphics[scale=0.25,trim=30 0 0 20,center]
        {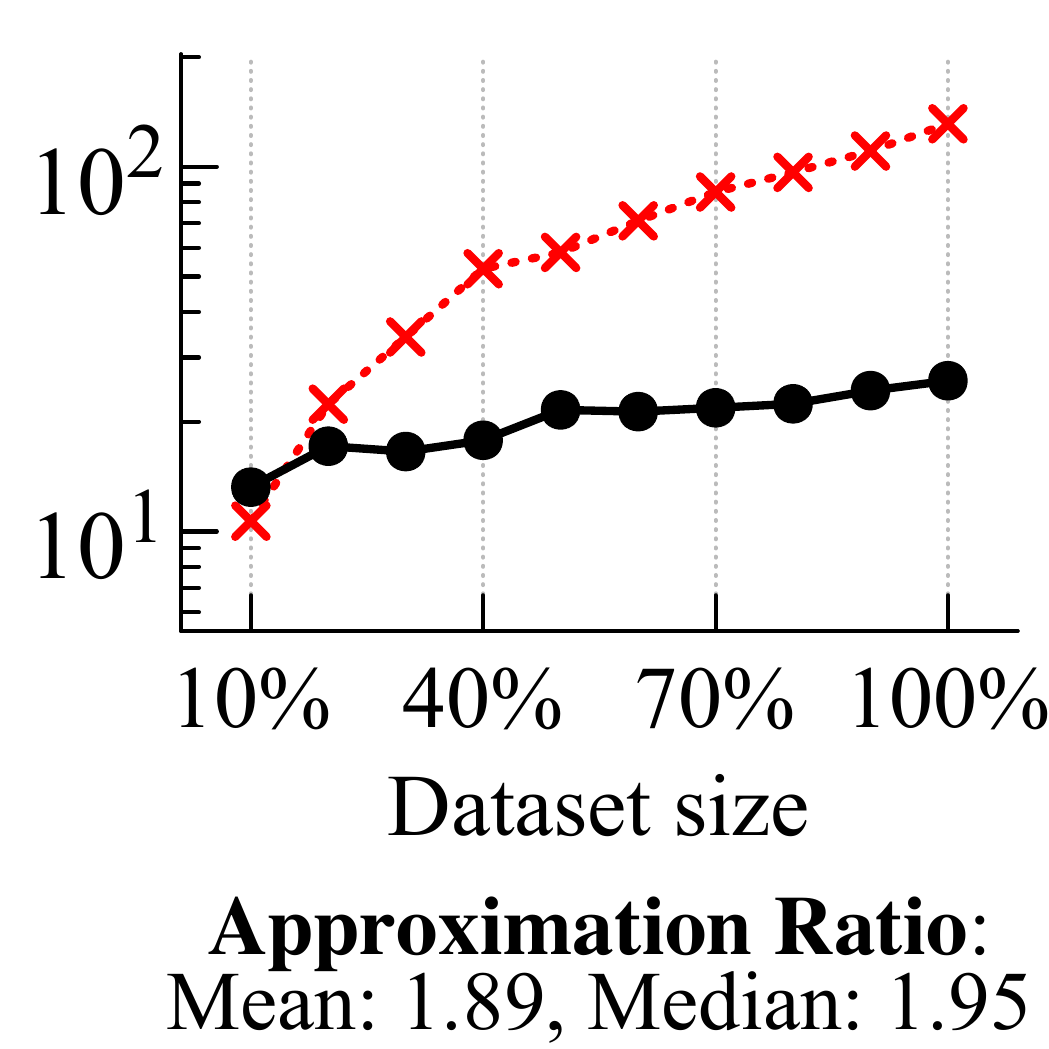}
    \end{subfigure}
    \caption{
    Scalability on the \tpch benchmark. 
    \bt uses an offline partitioning computed on the full dataset, using the workload attributes, $\thres =$ 10\% of the dataset size, and no radius condition. 
    \opt scales up to millions of tuples in all queries.
    The response time of \bt is about an order of magnitude less than \opt, and its approximation ratio is generally very low, even though the partitioning is constructed without radius condition.
    }
    \label{fig:scalability_tpch}
\end{figure*}

\paratitle{Comparisons.}
We compare \opt with \bt.
Both methods use the \ilp formulation (\Cref{sec:ilp}) to transform package queries into \ilp problems: \opt translates and solves the original query; \bt translates and solves the sub-queries (\Cref{sec:evaluation}), and
uses \emph{hybrid sketch query} (\Cref{sec:false-negatives}) as the only strategy to cope with infeasible initial queries.

\paratitle{Metrics.}
We evaluate methods on their efficiency and effectiveness.

\smallskip\noindent\emph{Response time:}
We measure response time as wall-clock time to generate an answer package. This includes the time taken to translate the \paql query into one or several \ilp problems, the time taken to load the problems into the solver, and the time taken by the solver to produce a solution. We exclude the time to materialize the package solution to the database and to compute its objective value.

\smallskip\noindent\emph{Approximation ratio:}
Recall that \bt is always guaranteed to return an approximate answer with respect to \opt (\Cref{sec:quality}).
In order to assess the quality of a package returned by \bt, we compare its objective value with the objective value of the package returned by \opt on the same query.
Using $Obj_S$ and $Obj_D$ to denote the objective values of \bt and
\opt, respectively, we compute the empirical \emph{approximation
ratio} $\frac{Obj_{D}}{Obj_{S}}$ for maximization queries, and
$\frac{Obj_{S}}{Obj_{D}}$ for minimization queries. An approximation
ratio of one indicates that \bt produces a solution with same
objective value as the solution produced by the solver on the entire
problem. Typically, the approximation ratio is greater than or equal
to one. However, since the solver employs several approximations and
heuristics, values lower than one, which means that \bt produces a
better package than \opt, are possible in practice.

\subsection{Results and Discussion}
We evaluate three fundamental aspects of our algorithms: 
(1)~their query response time and approximation ratio with increasing dataset sizes;
(2)~the impact of varying partitioning size thresholds ($\thres$) on \bt's performance;
(3)~the impact of the attributes used in offline partitioning on query runtime.

\subsubsection{Query performance as data set size increases}

\begin{figure*}
    \centering
    \begin{subfigure}[b]{1.0\textwidth}
        \centering
        \includegraphics[scale=0.25,trim=150 0 0 0,frame=.0mm]
        {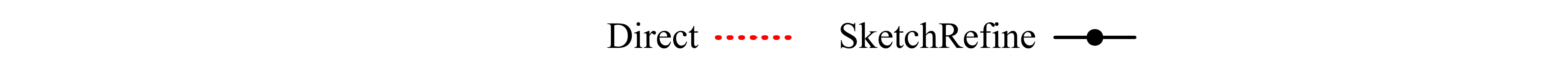}
    \end{subfigure}
    \begin{subfigure}[b]{0pt}
        \centering
        \includegraphics[scale=0.25,trim=50 0 0 0,left]
        {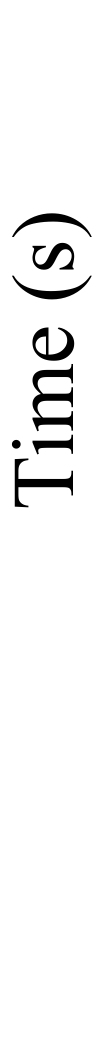}
    \end{subfigure}
    \begin{subfigure}[b]{.138\textwidth}
        \centering
        \caption[short for lof]{Q1}
        \includegraphics[scale=0.25,trim=30 0 0 20,center]
        {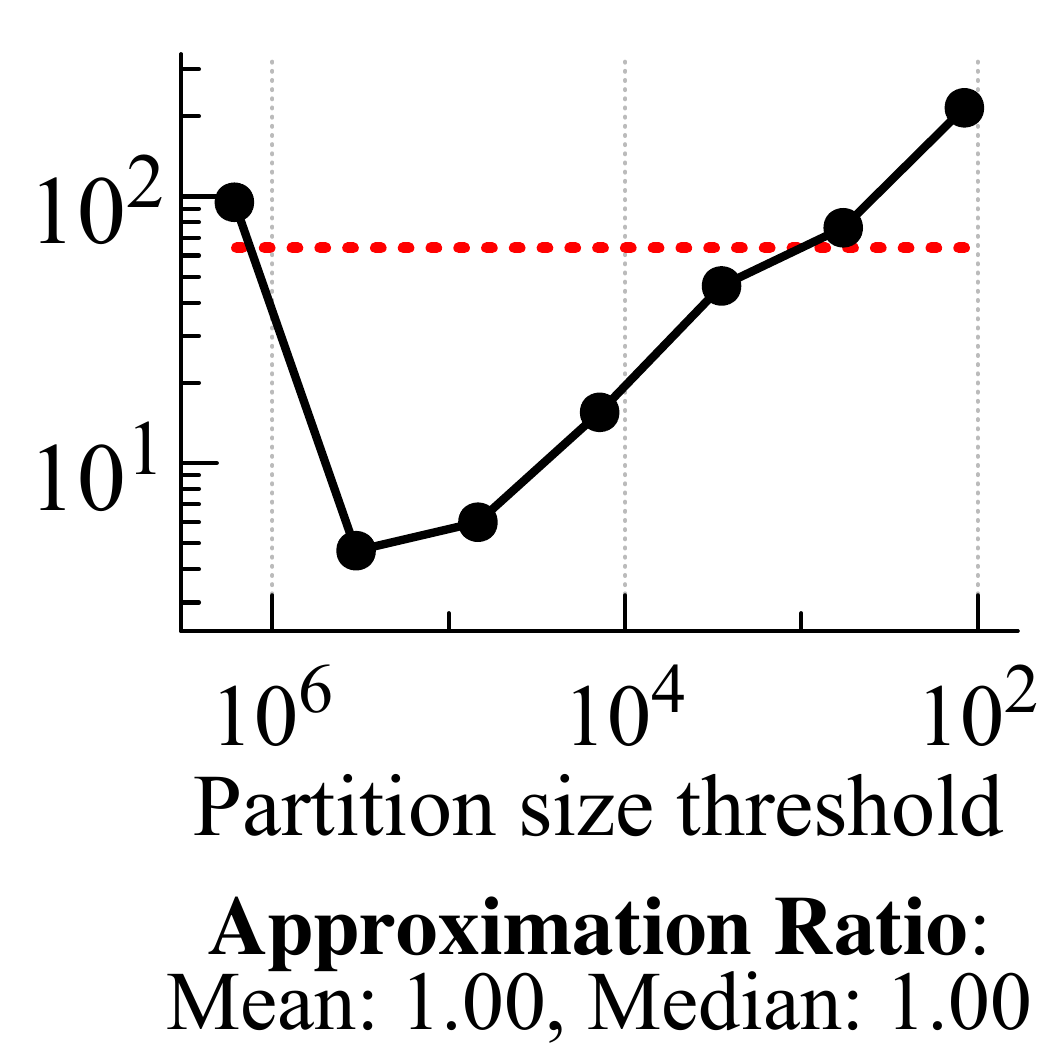}
    \end{subfigure}
    \begin{subfigure}[b]{.138\textwidth}
        \centering
        \caption[short for lof]{Q2}
        \includegraphics[scale=0.25,trim=30 0 0 20,center]
        {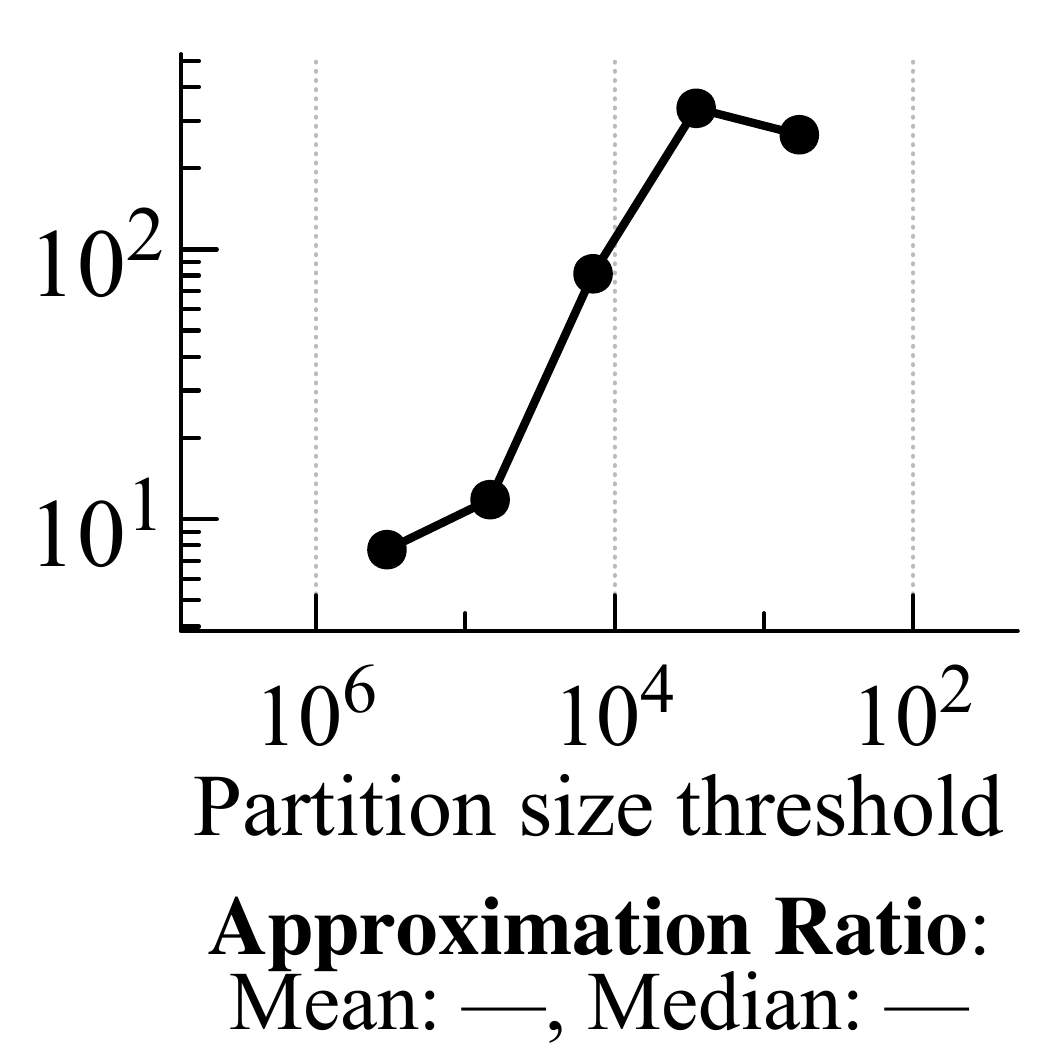}
    \end{subfigure}
    \begin{subfigure}[b]{.138\textwidth}
        \centering
        \caption[short for lof]{Q3}
        \includegraphics[scale=0.25,trim=30 0 0 20,center]
        {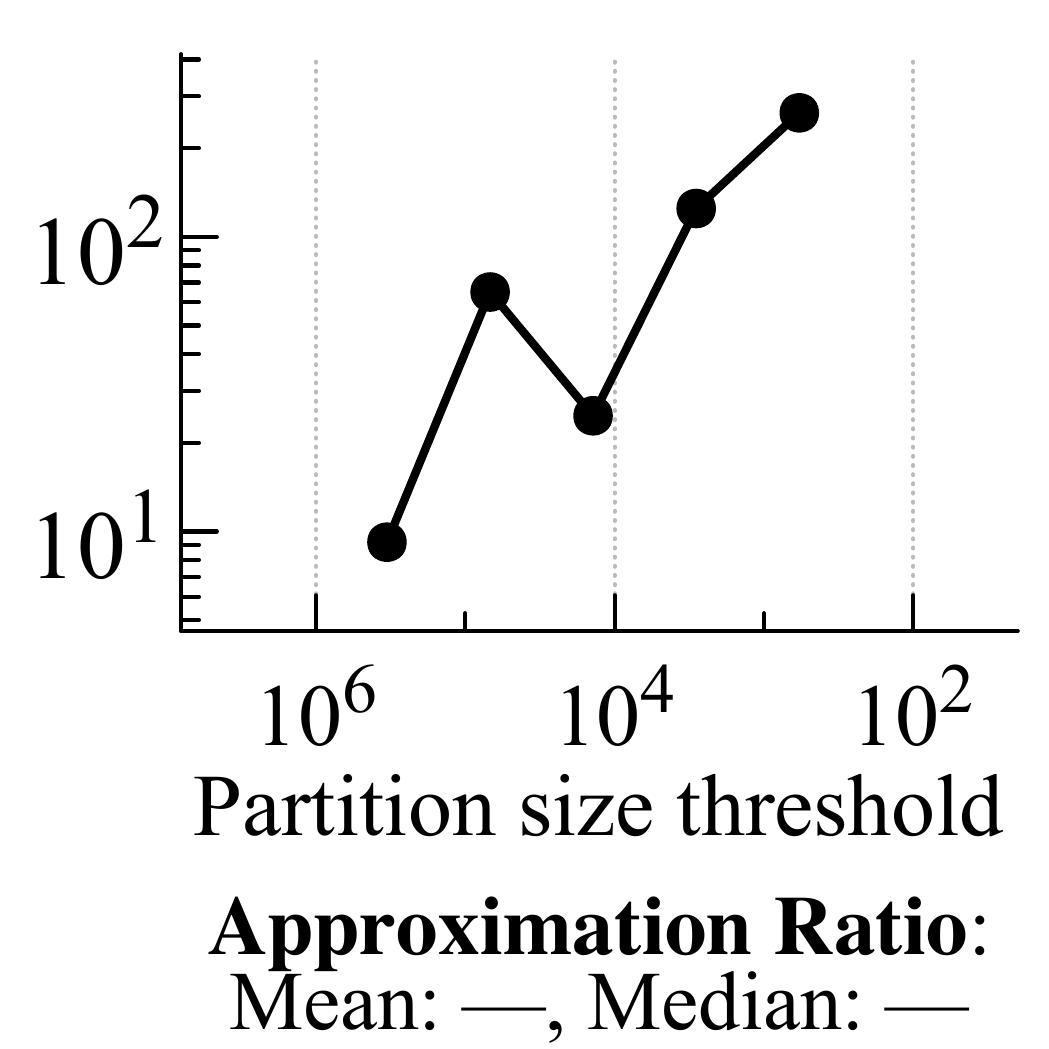}
    \end{subfigure}
    \begin{subfigure}[b]{.138\textwidth}
        \centering
        \caption[short for lof]{Q4}
        \includegraphics[scale=0.25,trim=30 0 0 20,center]
        {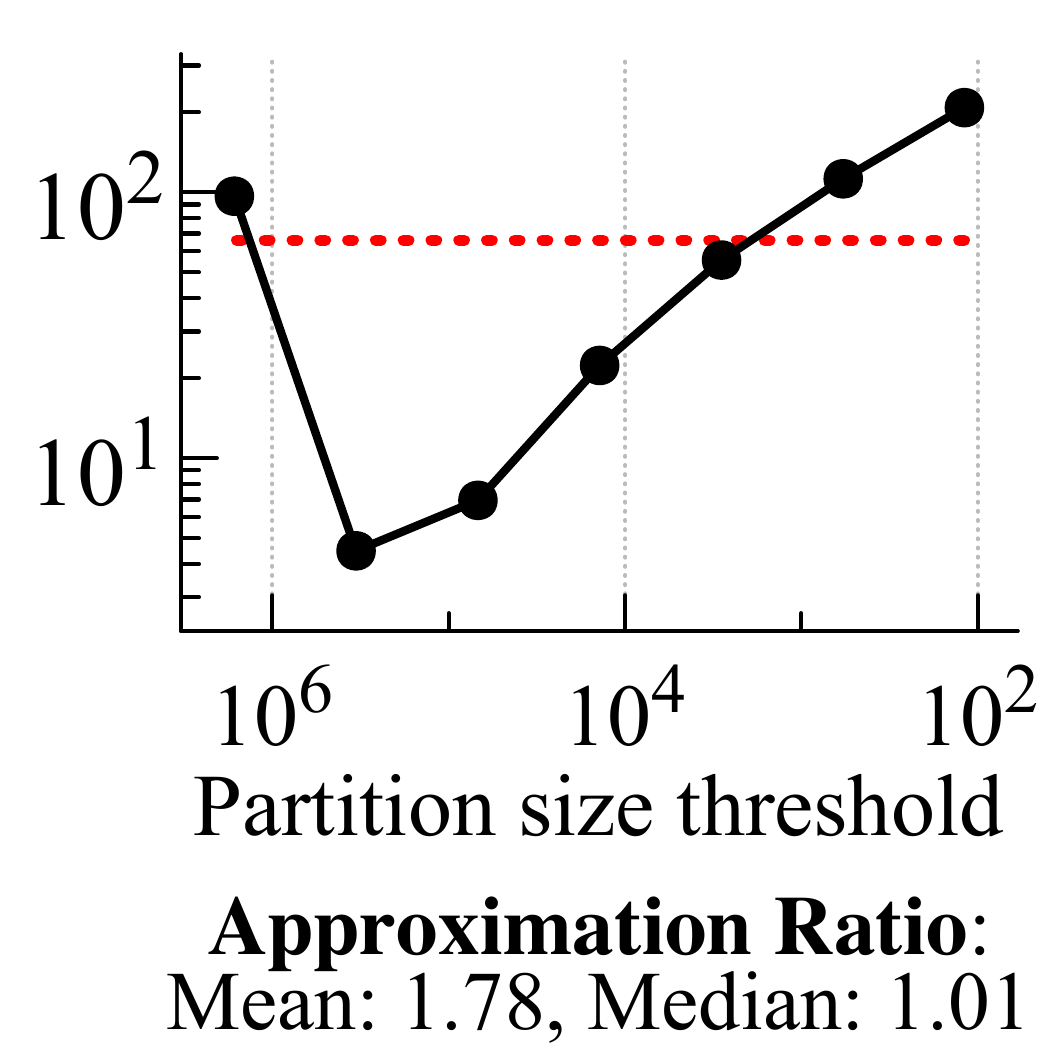}
    \end{subfigure}
    \begin{subfigure}[b]{.138\textwidth}
        \centering
        \caption[short for lof]{Q5}
        \includegraphics[scale=0.25,trim=30 0 0 20,center]
        {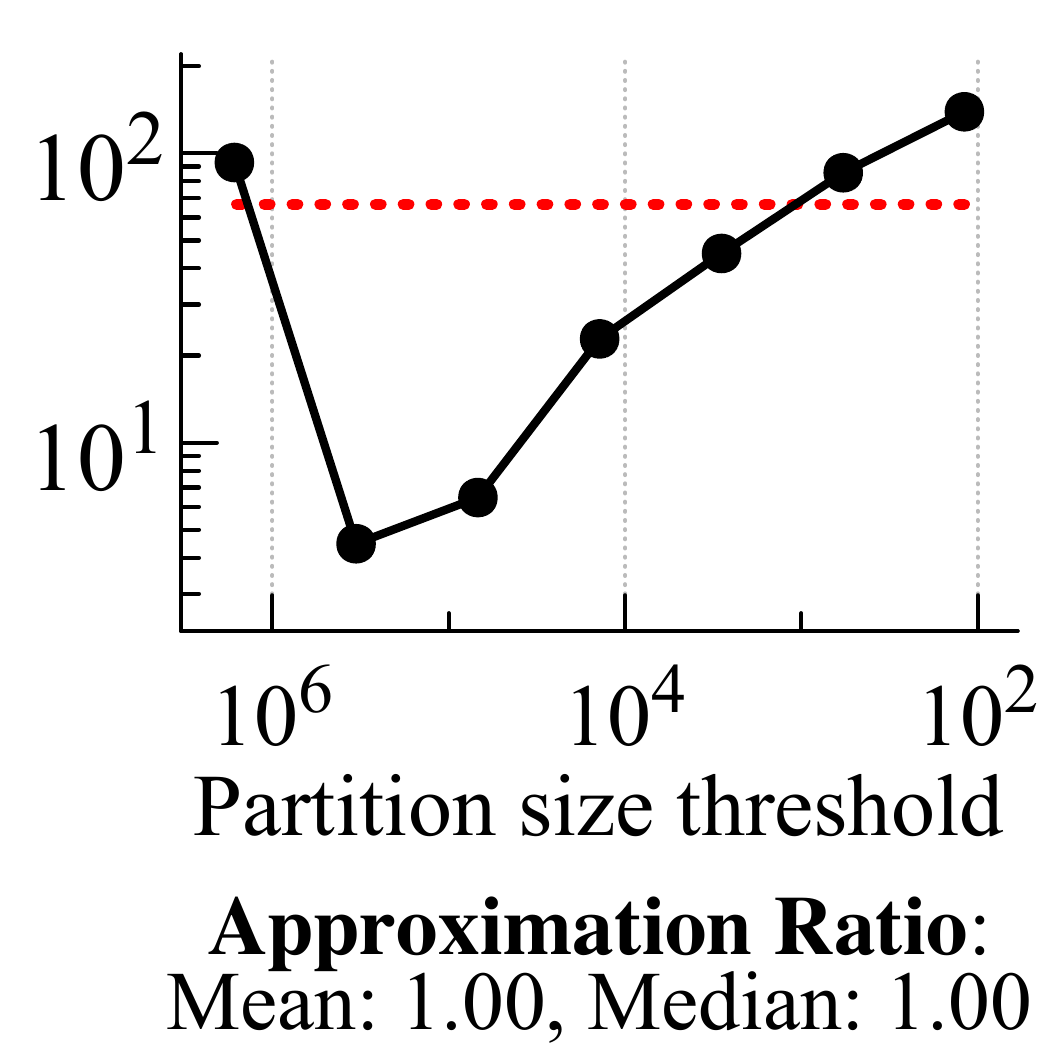}
    \end{subfigure}
    \begin{subfigure}[b]{.138\textwidth}
        \centering
        \caption[short for lof]{Q6}
        \includegraphics[scale=0.25,trim=30 0 0 20,center]
        {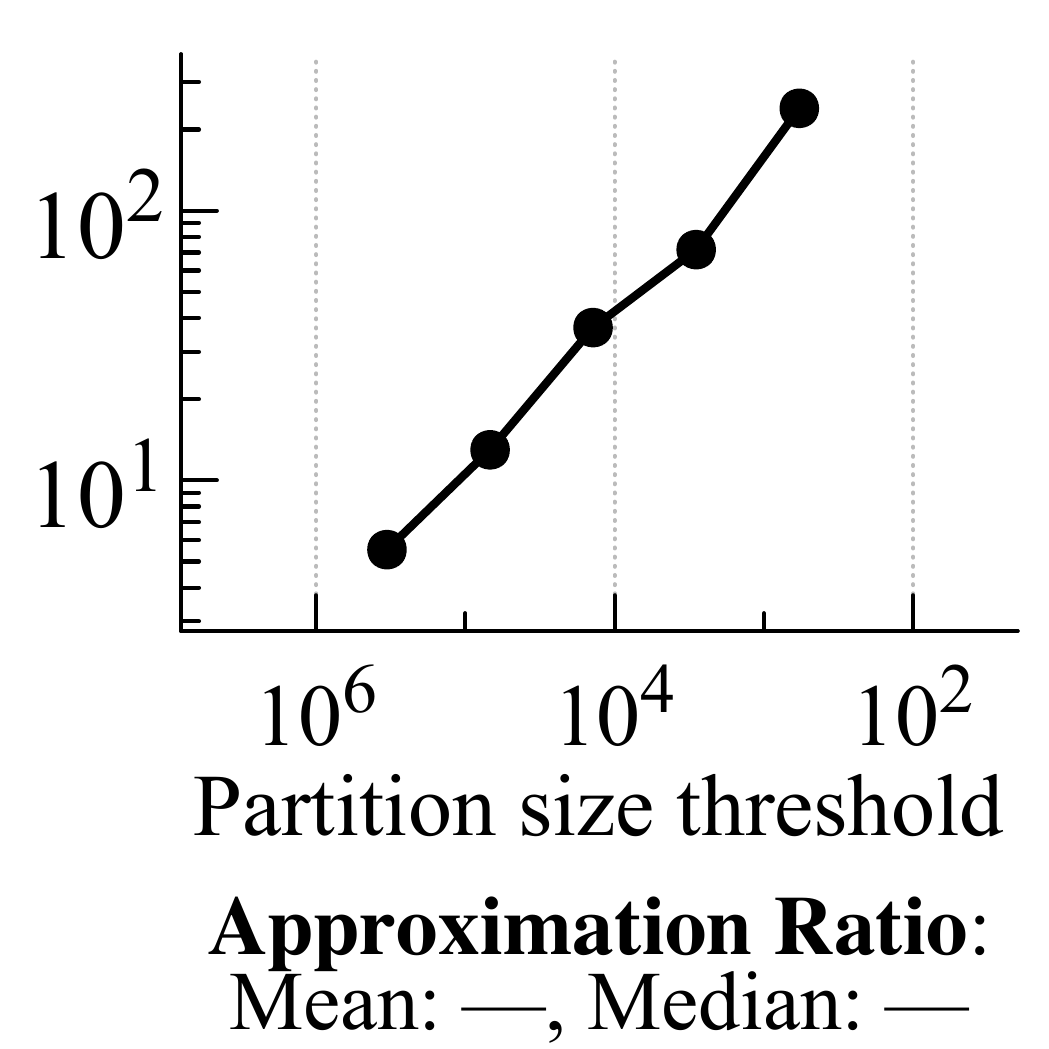}
    \end{subfigure}
    \begin{subfigure}[b]{.138\textwidth}
        \centering
        \caption[short for lof]{Q7}
        \includegraphics[scale=0.25,trim=30 0 0 20,center]
        {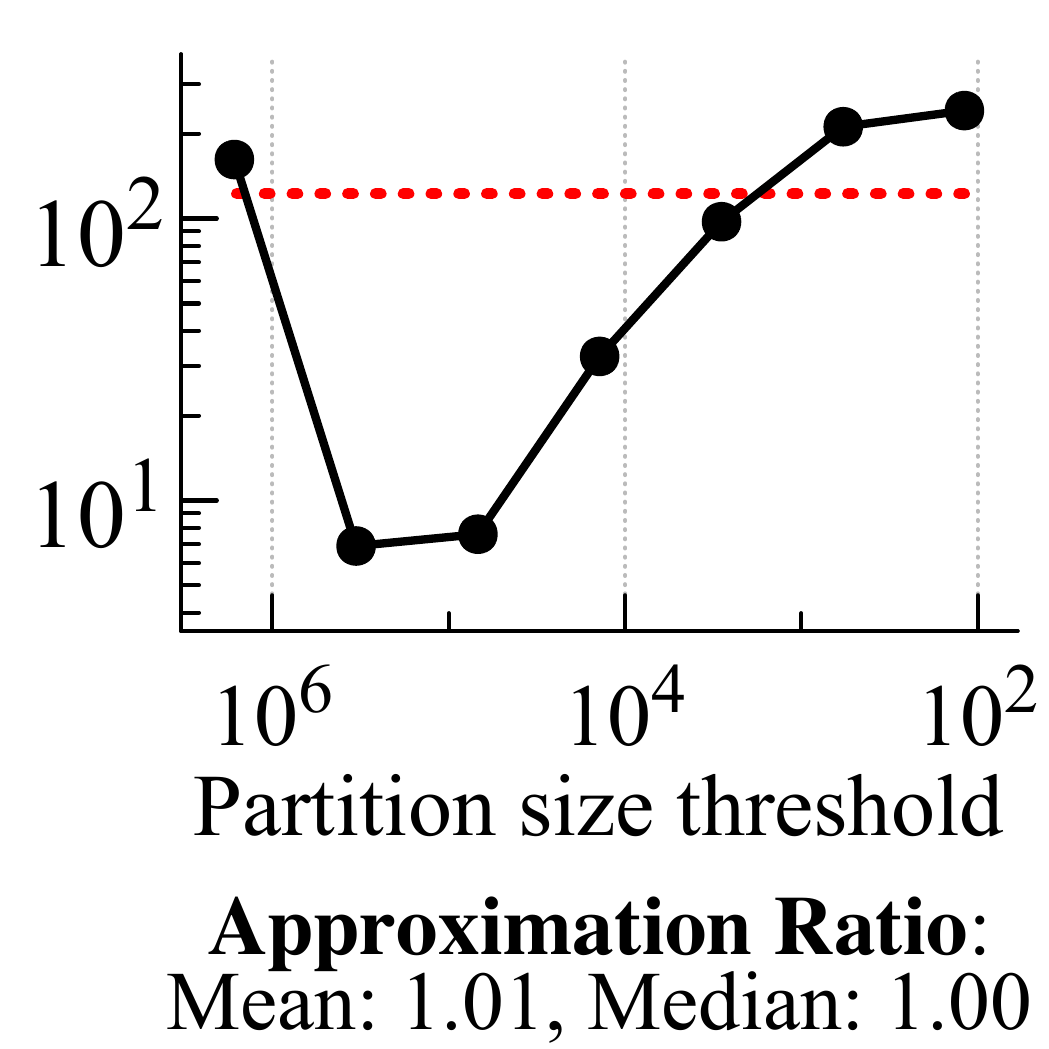}
    \end{subfigure}
    \caption{
    Impact of partition size threshold $\thres$ on the \galaxy benchmark, using 30\% of the original dataset.
    Partitioning is performed at each value of $\thres$ using all the workload attributes, and with no radius condition. 
    The baseline \opt and the approximation ratios are only shown when \opt is successful.
    The results show that $\thres$ has a major impact on the running time of \bt,
    but almost no impact on the approximation ratio.
    \opt can be an order of magnitude faster than \opt with proper tuning of $\thres$.
    }
    \label{fig:threshold_galaxy}
\end{figure*}

\begin{figure*}
    \centering
    \begin{subfigure}[b]{1.0\textwidth}
        \centering
        \includegraphics[scale=0.25,trim=150 0 0 0,frame=.0mm]
        {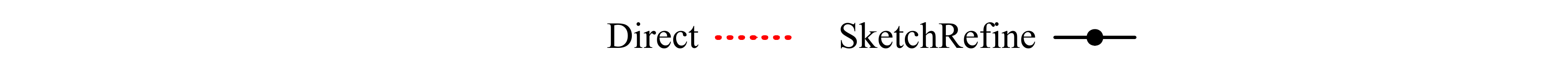}
    \end{subfigure}
        \begin{subfigure}[b]{0pt}
        \centering
        \includegraphics[scale=0.25,trim=50 0 0 0,left]
        {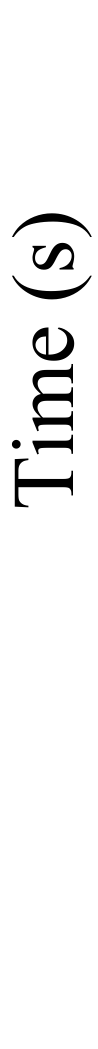}
    \end{subfigure}
    \begin{subfigure}[b]{.138\textwidth}
        \centering
        \caption[short for lof]{Q1}
        \includegraphics[scale=0.25,trim=30 0 0 20,center]
        {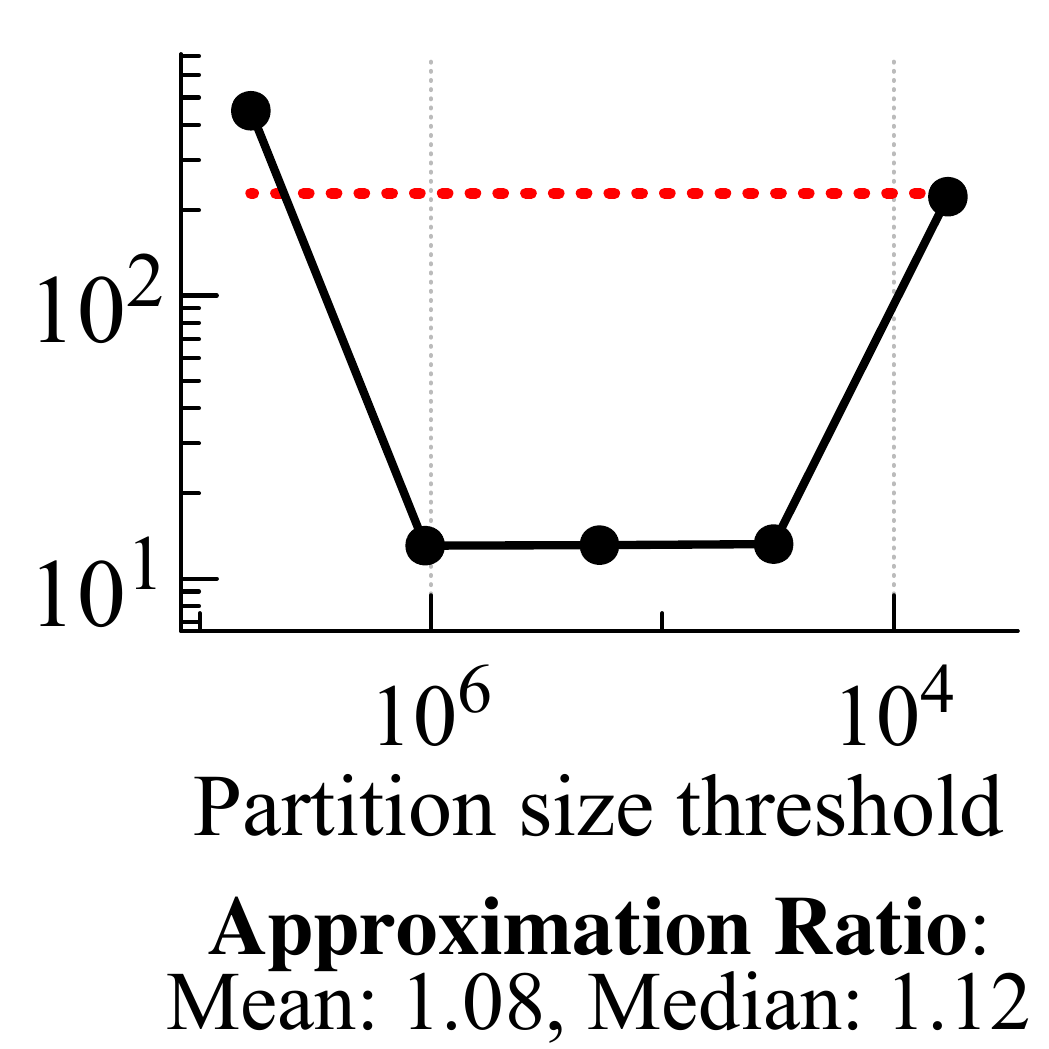}
    \end{subfigure}
    \begin{subfigure}[b]{.138\textwidth}
        \centering
        \caption[short for lof]{Q2}
        \includegraphics[scale=0.25,trim=30 0 0 20,center]
        {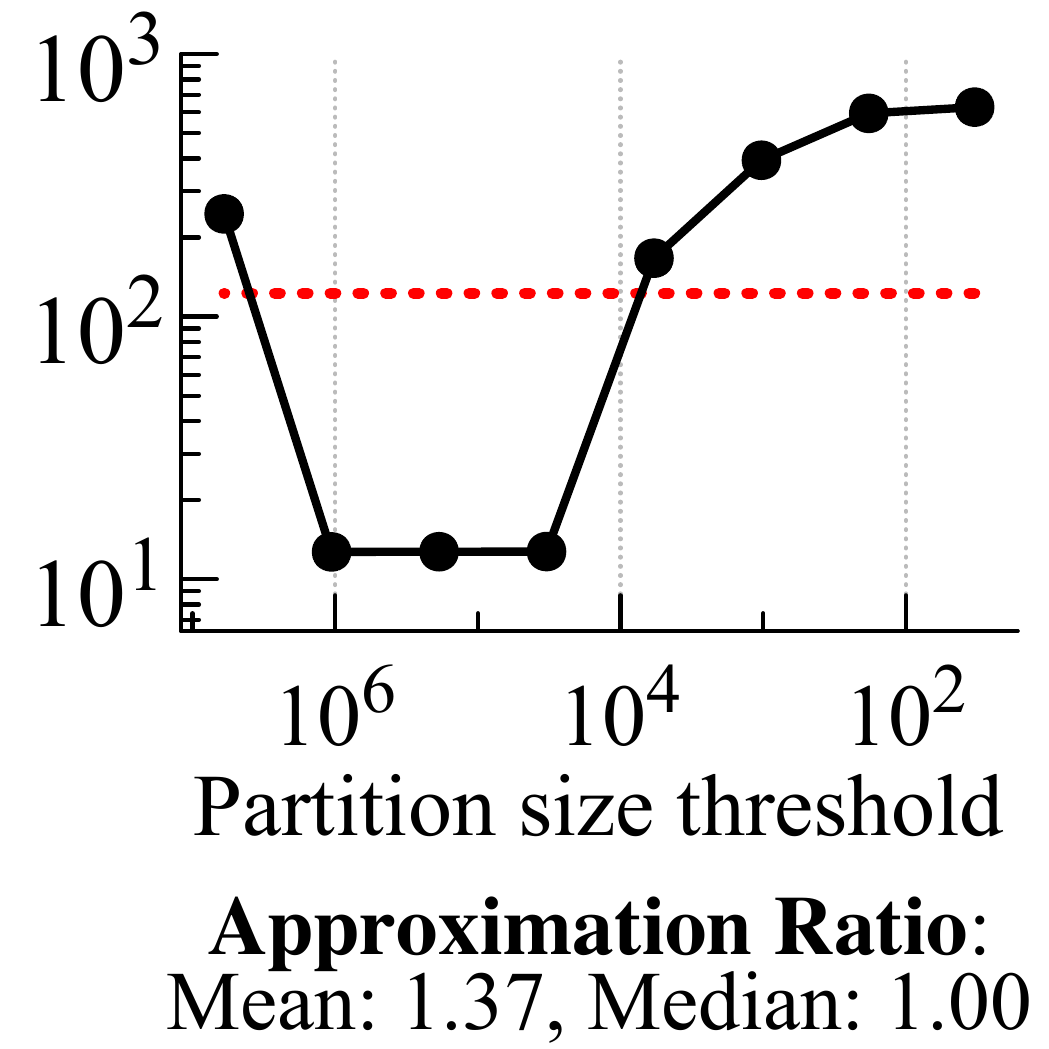}
    \end{subfigure}
    \begin{subfigure}[b]{.138\textwidth}
        \centering
        \caption[short for lof]{Q3}
        \includegraphics[scale=0.25,trim=30 0 0 20,center]
        {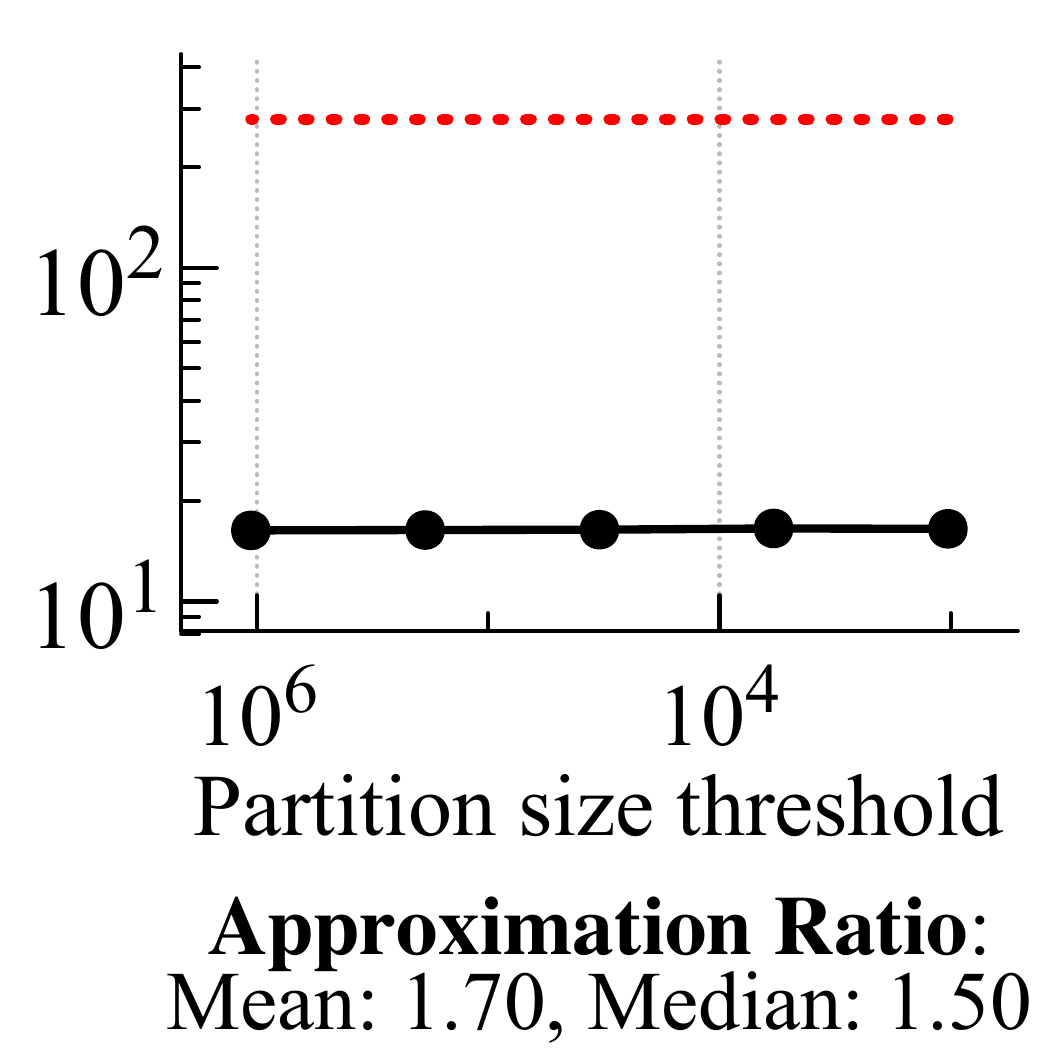}
    \end{subfigure}
    \begin{subfigure}[b]{.138\textwidth}
        \centering
        \caption[short for lof]{Q4}
        \includegraphics[scale=0.25,trim=30 0 0 20,center]
        {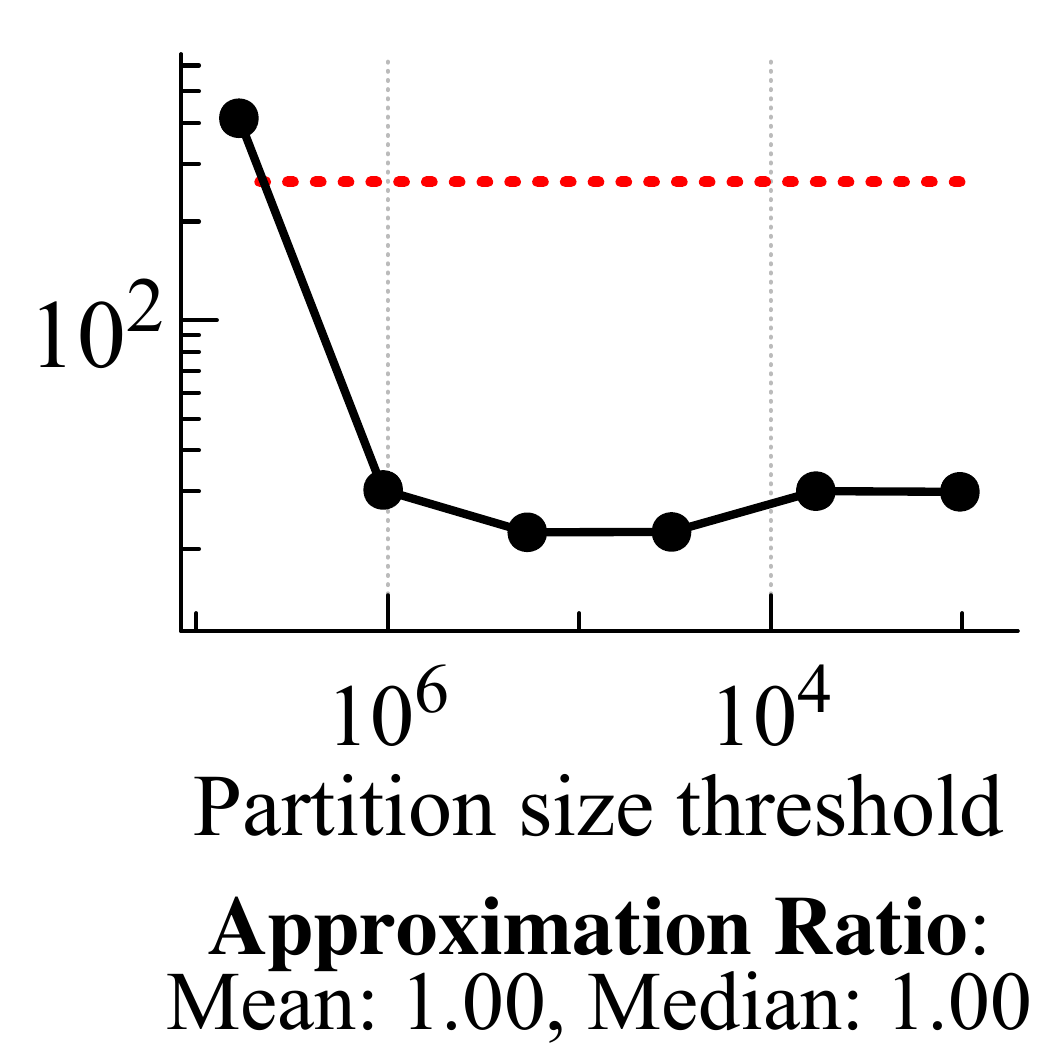}
    \end{subfigure}
    \begin{subfigure}[b]{.138\textwidth}
        \centering
        \caption[short for lof]{Q5}
        \includegraphics[scale=0.25,trim=30 0 0 20,center]
        {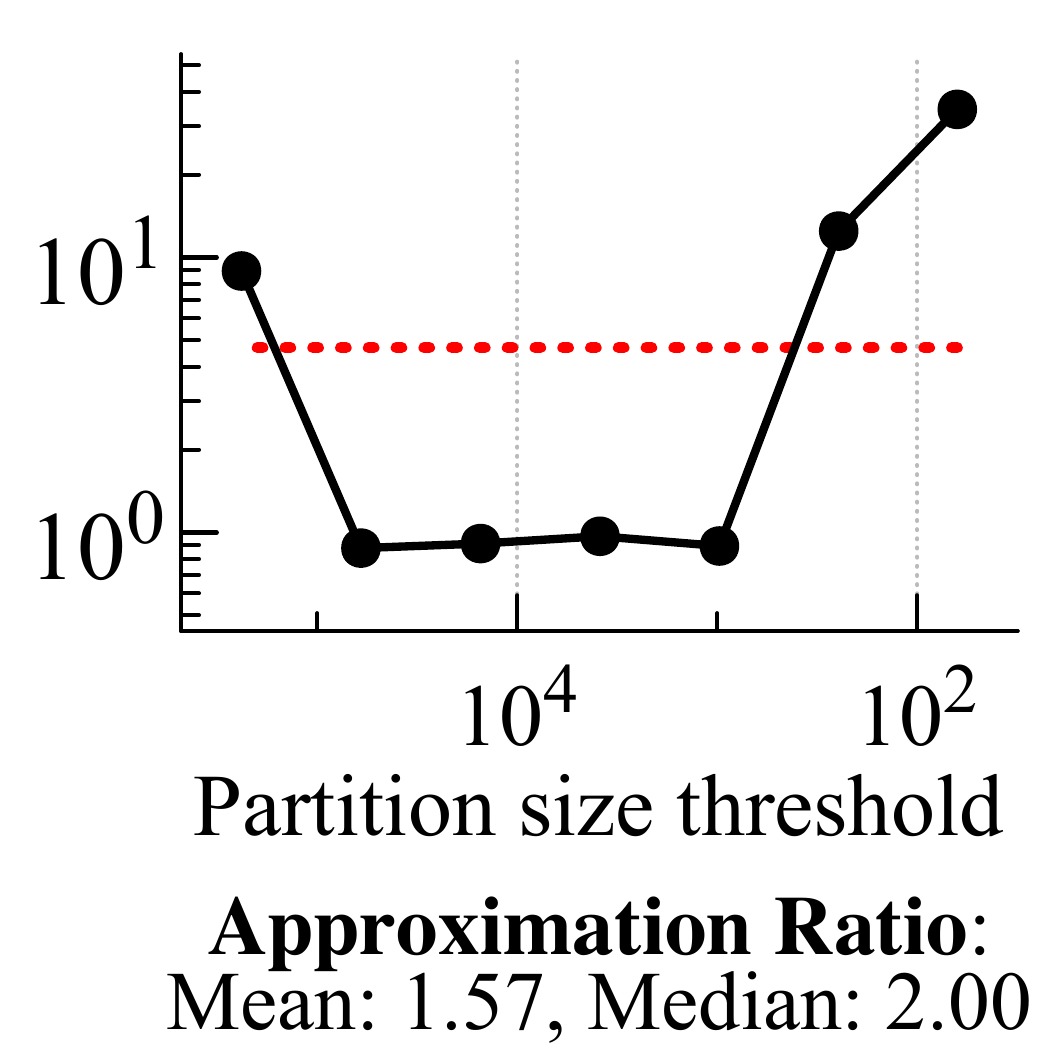}
    \end{subfigure}
    \begin{subfigure}[b]{.138\textwidth}
        \centering
        \caption[short for lof]{Q6}
        \includegraphics[scale=0.25,trim=30 0 0 20,center]
        {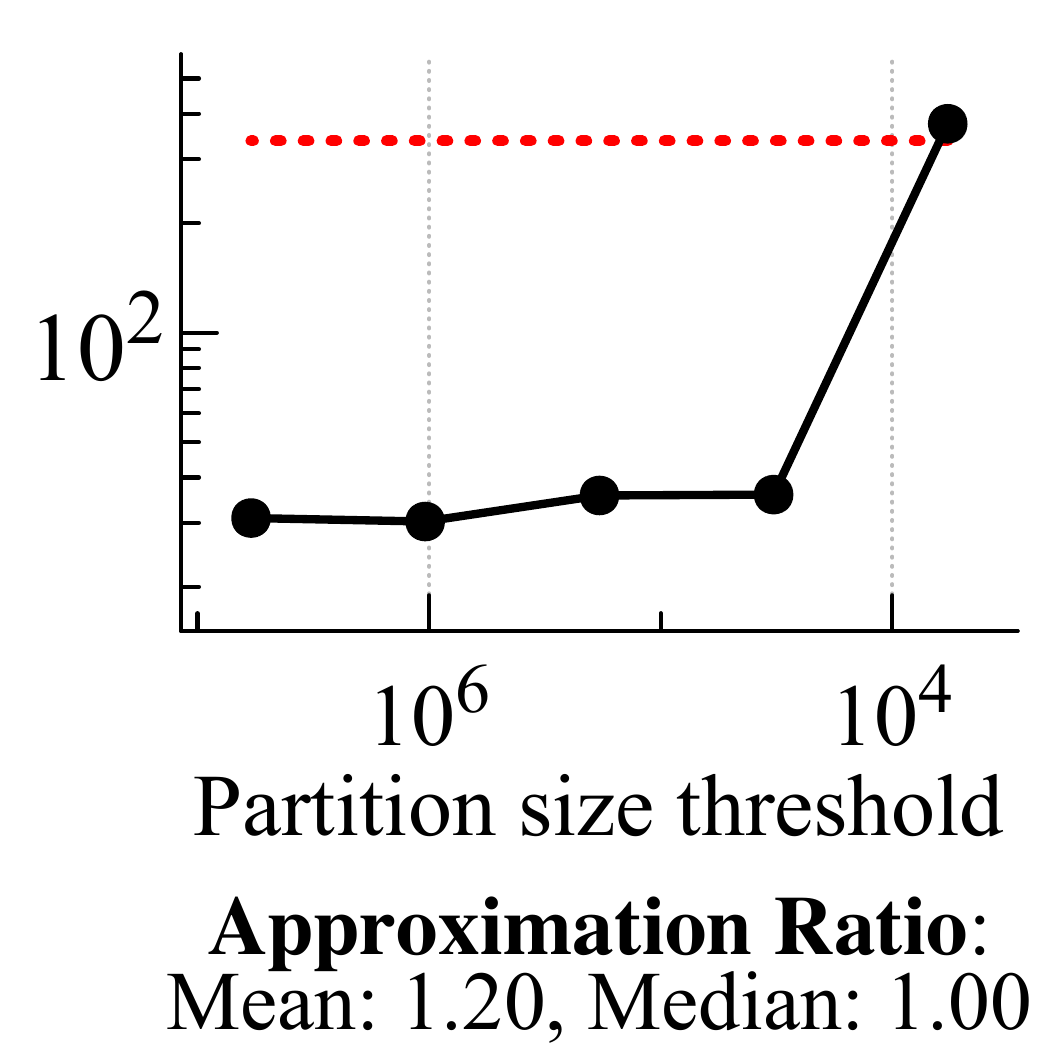}
    \end{subfigure}
    \begin{subfigure}[b]{.138\textwidth}
        \centering
        \caption[short for lof]{Q7}
        \includegraphics[scale=0.25,trim=30 0 0 20,center]
        {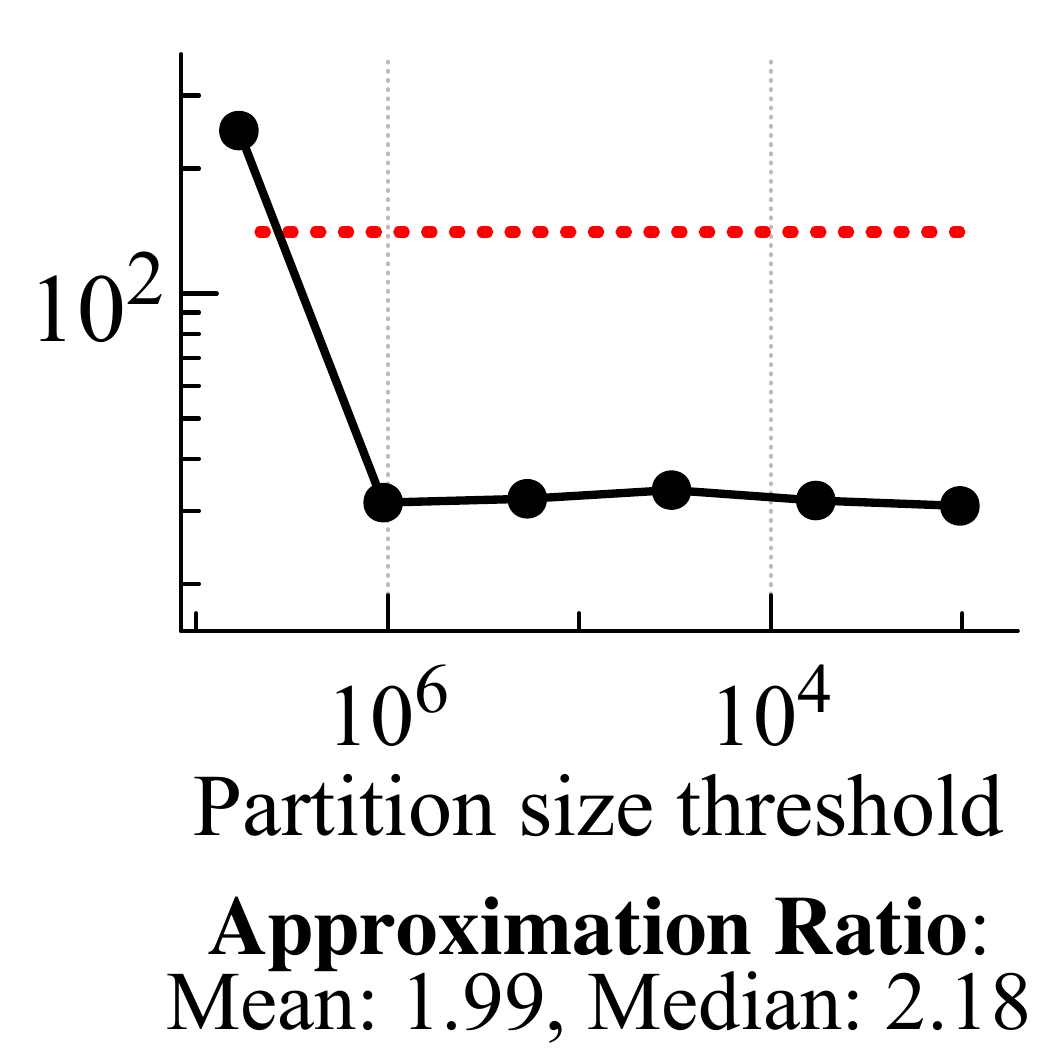}
    \end{subfigure}
    \caption{
    Impact of partition size threshold $\thres$ on the \tpch benchmark, using the full datasets.
    Partitioning is performed at each value of $\thres$ using all the workload attributes, and with no radius condition. 
    The baseline \opt and the approximation ratios are only shown when \opt is successful.
    The results show that $\thres$ has a major impact on the running time of \bt,
    but almost no impact on the approximation ratio.
    \opt can be an order of magnitude faster than \opt with proper tuning of $\thres$.
    }
    \label{fig:threshold_tpch}
\end{figure*}

In our first set of experiments, we evaluate the scalability of our
methods on input relations of increasing size.
First, we partitioned each dataset using the union of all package query attributes in the workload: we refer to these partitioning attributes as the \emph{workload attributes}. We did not enforce a radius condition ($\omega$) during partitioning for two reasons: (1)~to show that an offline partitioning can be used to answer efficiently and effectively both maximization and minimization queries, even though they would normally require different radii; (2)~to demonstrate the effectiveness of \bt in practice, even without having theoretical guarantees in place. 

We perform offline partitioning setting the partition size threshold
$\thres$ to 10\% of the dataset size. \Cref{tb:partitioning}
reports the partitioning times for the two datasets. We derive the
partitionings for the smaller data sizes (less than 100\% of the
dataset) in the experiments, by randomly removing tuples from the
original partitions. This operation is guaranteed to maintain the size
condition.

\looseness -1
\Cref{fig:scalability_galaxy,fig:scalability_tpch} report our scalability results on the \galaxy and \tpch benchmarks, respectively. The figures display the query runtimes in seconds on a logarithmic scale, averaged across 10 runs for each datapoint. 
At the bottom of each figure, we also report the mean and median
approximation ratios across all dataset sizes. The graph for Q2 on the
galaxy dataset does not report approximation ratios, because \opt
evaluation fails to produce a solution for this query across all data
sizes.
We observe that \opt can scale up to millions of tuples in three of the seven \galaxy queries, and in all of the \tpch queries. Its run-time performance degrades, as expected, when data size increases, but even for very large datasets \opt is usually able to answer the package queries in less than a few minutes. 
However, \opt has high failure rate for some of the \galaxy queries, indicated by the missing data points in some graphs (queries Q2, Q3, Q6 and Q7 in \Cref{fig:scalability_galaxy}).
This happens when \cplex uses the entire available main memory while solving the corresponding \ilp problems. For some queries, such as Q3 and Q7, this occurs with bigger dataset sizes. However, for queries Q2 and Q6, \opt even fails on small data. This is a clear demonstration of one of the major limitations of \ilp solvers: they can fail even when the dataset can fit in main memory, due to the complexity of the integer problem. In contrast, our scalable \bt algorithm is able to perform well on all dataset sizes and across all queries. 
\bt consistently performs about an order of magnitude faster than \opt across all queries, both on real-world data and benchmark data. Its running time is consistently below one or two minutes, even when constructing packages from millions of tuples.

Both the mean and median approximation ratios are very low, usually all close to one or two. This shows that the substantial gain in running time of \bt over \opt does not compromise the quality of the resulting packages. 
\added{Our results indicate that the overhead of partitioning with a radius condition is often unnecessary in practice.}
\added{Since the approximation ratio is not enforced, \bt can
potentially produce bad solutions, but this happens rarely. In our
experiments, this occurred with query Q2 from the \tpch benchmark,
which is a minimization query. Re-running the same experiment for Q2
with a partitioning that enforces a radius limit with $\epsilon =
1.0$, achieved perfect approximation ratios (Mean: 1.0, Median: 1.0).}
\removed{resulted in a substantial decrease in both mean and median approximation ratios (Mean: 5.90, Median: 1.99), with no change in running times.}

\subsubsection{Effect of varying partition size threshold} \label{sec:thres}

The size of each partition, controlled by the partition size threshold $\thres$, is an important factor that can impact the performance of \bt: Larger partitions imply fewer but larger subproblems, and smaller partitions imply more but smaller subproblems. 
Both cases can significantly impact the performance of \bt.
In our second set of experiments, we vary $\thres$, which is used during partitioning to enforce the  size condition (\Cref{sec:index}), to study its effects on the query response time and the approximation ratio of \bt. 
In all cases, along the lines of the previous experiments, we do not enforce a radius condition.
\Cref{fig:threshold_galaxy,fig:threshold_tpch} show the results obtained on the \galaxy and \tpch benchmarks, using 30\% and 100\% of the original data, respectively. 
We vary $\thres$ from higher values corresponding to fewer but larger partitions, on the left-hand size of the $x$-axis, to lower values, corresponding to more but smaller partitions. 
When \opt is able to produce a solution, we also report its running time (horizontal line) as a baseline for comparison.

Our results show that the partition size threshold has a major impact on the execution time of \bt, with extreme values of $\thres$ (either too low or too high) often resulting in slower running times than \opt.
With bigger partitions, on the left-hand side of the $x$-axis, \bt takes about the same time as \opt because both algorithms solve problems of comparable size. When the size of each partition starts to decrease, moving from left to right on the $x$-axis, the response time of \bt decreases rapidly, reaching about an order of magnitude improvement with respect to \opt.
Most of the queries show that there is a ``sweet spot'' at which the response time is the lowest: when all partitions are small, and there are not too many of them. 
\added{The point is consistent across different queries, showing that it only depends on the input data size (refer to \Cref{tb:tpch-queries} for the different \tpch data sizes).}
After that point, although the partitions become smaller, the number of partitions starts to increase significantly. This increase has two negative effects: it increases the number of representative tuples, and thus the size and complexity of the initial sketch query, and it increases the number of groups that \textsc{\btalgo} may need to refine to construct the final package.
This causes the running time of \bt, on the right-hand side of the $x$-axis, to increase again and reach or surpass the running time of \opt.
The mean and median approximation ratios are in all cases very close to one, indicating that \bt retains very good quality regardless of the partition size threshold.

\begin{figure}[t]
    \centering
    \begin{subfigure}[b]{0pt}
        \centering
        \includegraphics[scale=0.25,trim=50 0 0 0,left]
        {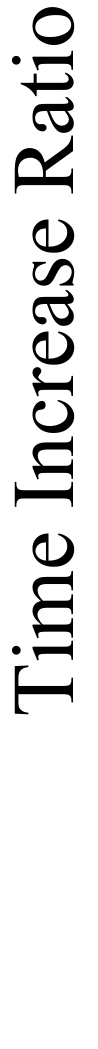}
    \end{subfigure}
    \begin{subfigure}[b]{.17\textwidth}
        \centering
        \caption[short for lof]{\galaxy}
        \includegraphics[scale=0.25,trim=30 0 0 20,center]
        {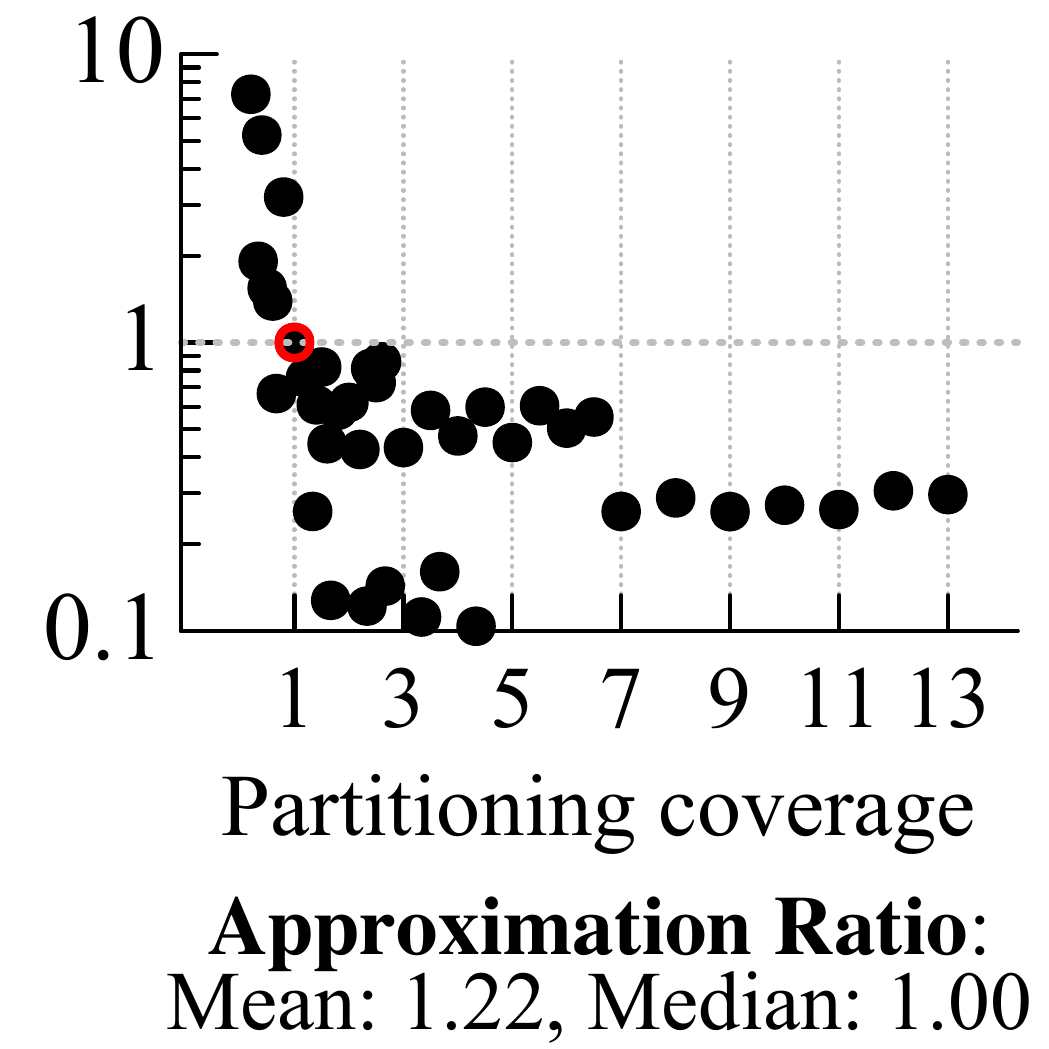}
    \end{subfigure}
    \begin{subfigure}[b]{.17\textwidth}
        \centering
        \caption[short for lof]{\tpch}
        \includegraphics[scale=0.25,trim=0 0 0 20,center]
        {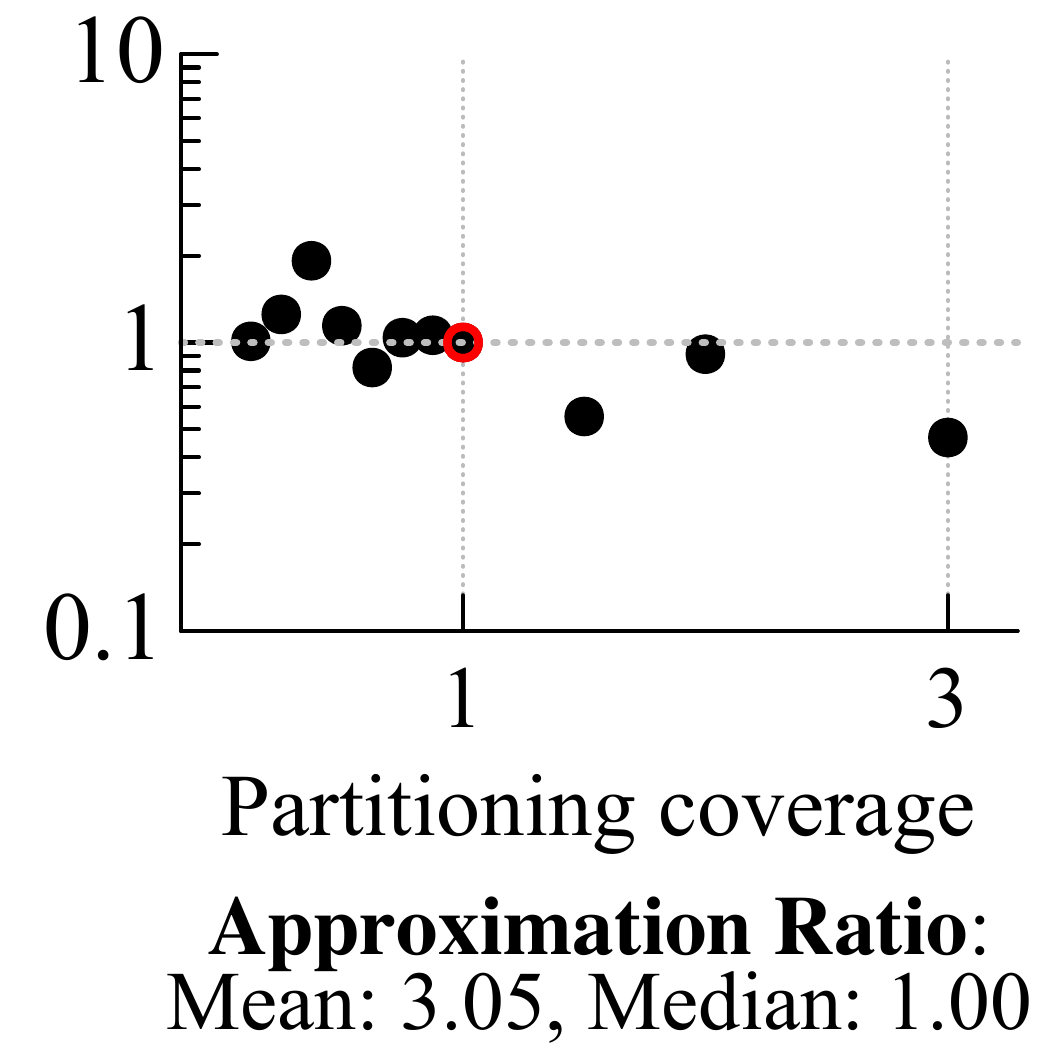}
    \end{subfigure}
    \caption{
    Increase or decrease ratio in running time of \bt with different partitioning coverages. Coverage one, shown by the red dot, is obtained by partitioning on the query attributes. 
    The results show an improvement in running time when partitioning is performed on supersets of the query attributes, with very good approximation ratios.
    }
    \vspace{-2mm}
    \label{fig:coverage}
\end{figure}

\subsubsection{Effect of varying partitioning coverage} \label{sec:exp:cover}

In our final set of experiments, we study the impact of offline partitioning on the query response time and the approximation ratio of \bt.
We define the \emph{partitioning coverage} as the ratio between the
number of partitioning attributes and the number of query attributes.
For each query, we test partitionings created using: (a) exactly the
query attributes (coverage = 1), (b) proper subsets of the query
attributes (coverage $< 1$), and (c) proper supersets of the query
attributes (coverage $> 1$).

For each query, we report the effect of the partitioning coverage on
query runtime as the ratio of a query response time over the same
query's response time when coverage is one: a higher ratio ($> 1$)
indicates slower response time and a lower ratio ($<1$) indicates a
faster response time. \Cref{fig:coverage} reports the results on
the \galaxy and the \tpch datasets. The \galaxy dataset has many more
numerical attributes than the \tpch dataset, allowing us to experiment
with higher values of coverage. The response time of \bt improves on
both datasets when the offline partitioning covers a superset of the
query attributes, whereas it tends to increase when it only considers
a subset of the query attributes. The mean and median approximation
ratios are consistently low, indicating that the quality of the
packages returned by \bt remains unaffected by the partitioning
coverage. These results demonstrate that \bt is robust to imperfect
partitionings, which do not cater precisely to the query attributes.
Moreover, using a partitioning over a superset of a query's attributes
typically leads to better performance. This means that partitioning
can be performed offline using the union of the attributes of an
anticipated workload, or even using all the attributes of a relation.

\section{Related Work} \label{sec:related}
\removed{We discuss related work from the following areas: package recommendation systems, semantic window queries, how-to queries, approximation techniques for \ilp formulations, and approximation techniques for subclasses of package queries.}

\paratitle{Package recommendations.}
Package or set-based recommendation systems are closely related to package queries. 
A package recommendation system presents users with interesting sets of items that satisfy some global conditions. These systems are usually driven by specific application scenarios. 
For instance, in the CourseRank~\cite{course-rank} system, the items to be recommended are university \emph{courses}, and the types of constraints are course-specific (e.g., prerequisites, incompatibilities, etc.). 
\emph{Satellite packages}~\cite{BasuRoy:2010:CEC:1807167.1807258} are sets of items, such as smartphone accessories, that are compatible with a ``central'' item, such as a smartphone. 
Other related problems in the area of package recommendations are \emph{team formation}~\cite{team,team-ilp},
and recommendation of \emph{vacation} and \emph{travel packages}~\cite{munmund}.
Queries expressible in these frameworks are also expressible in \paql, but the opposite does not hold.
The complexity of set-based package recommendation problems is studied in~\cite{deng}, where the authors show that computing top-$k$ packages with a conjunctive query language is harder than NP-complete.

\paratitle{Semantic window queries and Searchlight.}
Packages are also related to \emph{semantic windows}~\cite{Kalinin:2014:IDE:2588555.2593666}. A semantic window defines a contiguous subset of a grid-partitioned space with certain global properties. For instance, astronomers can partition the night sky into a grid, and look for regions of the sky whose overal brightness is above a specific threshold. If the grid cells are precomputed and stored into an input relation, these queries can be expressed in \paql by adding a global constraint (besides the brightness requirement) that ensures that all cells in a package must form a contiguous region in the grid space. Packages, however, are more general than semantic windows because they allow regions to be non-contiguous, or to contain gaps. Moreover, package queries also allow optimization criteria, which are not expressible in semantic window queries.

A recent extension to methods for answering semantic window queries is
Searchlight~\cite{kalinin2015searchlight}, which expresses these
queries in the form of constraint programs. Searchlight uses
in-memory synopses to quickly estimate aggregate values of contiguous
regions. However, it does not support synopses for non-contiguous
regions, and thus it cannot solve arbitrary package queries.
Searchlight has several other major differences with our work: (1)~it
computes optimal solutions by enumerating the feasible ones and
retaining the optimal, whereas our methods do not require enumeration;
(2)~Searchlight assumes that the solver implements redundant and
arbitrary data access paths while solving the problems, whereas our
approach decouples data access from the solving procedure;
(3)~Searchlight does not provide a declarative query language such as
PaQL; (4)~unlike \bt, Searchlight does not allow solvers to scale up
to a very large number of variables. At the time of this submission,
Searchlight has not been made available by the authors and thus we
could not run a comparison for the types of queries that it can
express.

\paratitle{How-to queries.}
Package queries are related to how-to queries~\cite{meliou2012tiresias}, as they both use an \ilp formulation to translate the original queries. However, there are several major differences between package queries and how-to queries: 
package queries specify tuple collections, whereas how-to queries specify updates to underlying datasets; 
package queries allow a tuple to appear multiple times in a package result, while how-to queries do not model repetitions; 
\paql is SQL-based whereas how-to queries use a variant of Datalog; \paql supports arbitrary Boolean formulas in the \ssf{SUCH} \ssf{THAT} clause, whereas how-to queries can only express conjunctive conditions.

\added{\looseness -1
\paratitle{Constraint query languages.}
The principal idea of constraint query languages
(CQL)~\cite{kanellakis1995constraint} is that a tuple can be
generalized as a conjunction of constraints over variables. This
principle is very general and creates connections between declarative
database languages and constraint programming. However, prior work
focused on expressing constraints over tuple values, rather than over
sets of tuples. In this light, \paql follows a similar approach to CQL
by embedding in a declarative query language methods that handle
higher-order constraints. However, our package query engine design
allows for the direct use of ILP solvers as black box components,
automatically transforming problems and solutions from one domain to
the other. In contrast, CQL needs to appropriately adapt the
algorithms themselves between the two domains, and existing literature
does not provide this adaptation for the constraint types in \paql.
}

\paratitle{\ilp approximations.}
There exists a large body of research in approximation algorithms for problems that can be modeled as integer linear programs. A typical approach is \emph{linear programming relaxation}~\cite{williamson2011design} in which the integrality constraints are dropped and variables are free to take on real values. These methods are usually coupled with \emph{rounding} techniques that transform the real solutions to integer solutions with provable approximation bounds. 
None of these methods, however, can solve package queries on a large scale because they all assume that the LP solver is used on the entire problem. 
Another common approach to approximate a solution to an \ilp problem is the \emph{primal-dual method}~\cite{goemans1997primal}. All primal-dual algorithms, however, need to keep track of all primal and dual variables and the coefficient matrix, which means that none of these methods can be employed on large datasets.
On the other hand, rounding techniques and primal-dual algorithms could potentially benefit from the \bt algorithm to break down their complexity on very large datasets.

\paratitle{Approximations to subclasses of package queries.}
Like package queries, \emph{optimization under parametric aggregation
constraints} (OPAC) queries~\cite{Guha:2003:EAO:1315451.1315518} can
construct sets of tuples that collectively satisfy summation
constraints. However, existing solutions to OPAC queries have several
shortcomings:
(1)~they do not handle tuple repetitions;
(2)~they only address \emph{multi-attribute knapsack queries}, a
subclass of package queries where all global constraints are of the
form \ssf{SUM}$()$ $\le c$, with a \ssf{MAXIMIZE} \ssf{SUM}$()$
objective criterion;
(3)~they may return infeasible packages;
(4)~\removed{More importantly, }they are conceptually different from \bt, as
they generate approximate solutions in a pre-processing step, and
packages are simply retrieved at query time using a multi-dimensional
index. In contrast, \bt does not require pre-computation of packages.
Package queries also encompass \emph{submodular} optimization queries, whose
recent approximate solutions use greedy distributed
algorithms~\cite{mirzasoleiman2013distributed}.

\section{Conclusion and Future Work} \label{sec:conclusion}

In this paper, we introduced a complete system that supports the
specification and efficient evaluation of package queries. We
presented \paql, a declarative extension to \sql, and theoretically
established its expressiveness, and we developed a flexible
approximation method, with strong theoretical guarantees, for the
evaluation of \paql queries on large-scale datasets. Our experiments
on real-world and benchmark data demonstrate that our scalable
evaluation strategy is effective and efficient over varied data sizes
and query workloads, and remains robust under suboptimal conditions
and parameter settings.

In our future work, we plan to extend our evaluation methods to larger
classes of package queries, including multi-relation and non-linear
queries, and we intend to investigate parallelization strategies for
\bt. Package queries pose interesting challenges on usability aspects
as well: Our goal is to develop interaction and learning methods that
let users identify their ideal packages without having to specify a
full and precise \paql query.

\ifshort
	\balance
	\bibliographystyle{abbrv}
	\bibliography{refs}
	\balancecolumns
\else
	\bibliographystyle{abbrv}
	{\small \bibliography{refs}}	
	\clearpage
    \begin{appendices}

\section{Appendix} \label{app:A}

\subsection{Reduction from \ilp to \paql} \label{app:expressiveness}

\begin{proof}[of Theorem~\ref{lemma:expressiveness}]
We prove this theorem by showing a reduction from an \ilp problem to a \paql query.
The reduction involves two mappings: (1) a mapping from a generic \ilp instance $\mathcal{I}$ into a specific database instance $\mathcal{D}$ and a \paql query $\mathcal{Q}$; (2) a mapping from a variable assignment $\vec{x}$ to the \ilp problem into a package $\pack$. The mappings are such that $\vec{x}$ is an optimal feasible solution to $\mathcal{I}$ \emph{iff} $\pack$ is an optimal feasible package for $\mathcal{Q}$, over the database instance $\mathcal{D}$.
Without loss of generality, let $\mathcal{I}$ be the following \ilp problem instance, involving $n$ integer variables, $k$ linear constraints, and real coefficients $a_{i}$, $b_{ij}$ and $ c_{j}$:
\begin{equation*}
\begin{array}{r l r}
\text{max}      & \sum_{i=1}^{n} a_i x_i \\
\text{s.t}      & \sum_{i=1}^{n} b_{ij} x_i \le c_j &  \forall j = 1, \dots, k \\
                & x_i \ge 0, x_i \in \mathbb{Z}  & \forall i = 1, \dots, n \\
\end{array}
\end{equation*}
The database instance $\mathcal{D}$ corresponding to this problem consists of a relation \ssf{R}$($\ssf{attr}$_{obj}$, \ssf{attr}$_1$, \ssf{attr}$_2$, $\dots$, \ssf{attr}$_k)$, which has $k+1$ attributes, one for each constraint and one extra for the objective. Relation \ssf{R} contains $n$ tuples $\tuple_1, \dots, \tuple_n$, where ${\tuple_i = (a_i, b_{i1}, b_{i2}, \dots, b_{ik})}$. That is, tuple $t_i$ contains all the coefficients $a_i$ and $b_{ij}$ that are multiplied by variable $x_i$ in $\mathcal{I}$ (using the \ilp jargon, tuple $\tuple_i$ is the $i$-th column of the \emph{constraint matrix} of $\mathcal{I}$).
The \paql query $\mathcal{Q}$ associated with $\mathcal{I}$ is:
\begin{tabbing}
\hspace*{3mm}\=\hspace*{2cm}\=\hspace*{3cm}\= \kill
\>\ssf{SELECT}    \>\ssf{PACKAGE}$($\ssf{R}$)$ \ssf{AS} \ssf{P} \\
\>\ssf{FROM}      \>\ssf{R} \\
\>\ssf{SUCH} \ssf{THAT}   \>\ssf{SUM}$($\ssf{P.attr}$_j)$ $\le$ $c_j ~~~~~~~~~ \forall j=1, \dots, k$ \\
\>\ssf{MAXIMIZE}  \>\ssf{SUM}$($\ssf{P.attr}$_{obj})$
\end{tabbing}

Let $\vec{x}$ be an \emph{assignment} to the variables in $\mathcal{I}$. Package $\pack$ is constructed from $\vec{x}$ by including tuple $t_i$ exactly $\vec{x}_i$ times. That is, $\pack$ is the multiset $\{ (t_1, \vec{x}_1), \dots, (t_n, \vec{x}_n) \}$\footnote{We denote multiset elements as pairs $(e,m)$, where $e$ is the element and $m$ its multiplicity.}. 

($\Rightarrow$) If $\vec{x}$ is an optimal feasible solution to $\mathcal{I}$, then $\pack$ is an optimal feasible package for query $\mathcal{Q}$ over the database instance $\mathcal{D}$. In fact, because $\vec{x}$ is a the optimal solution to $\mathcal{I}$, the following is true:
$$
\begin{array}{l l}
\sum_{i=1}^{n} a_i \vec{x}_i            & \text{is maximal} \\
\sum_{i=1}^{n} b_{ij} \vec{x}_i \le c_j & \forall j=1, \dots, k \\
\end{array}
$$
Thus, by construction of $\pack$, \ssf{SUM}$(\pack$\ssf{.attr}$_j) = \sum_{i=1}^{n} b_{ij} \vec{x}_i \le c_j$, and \ssf{SUM}$(\pack$\ssf{.attr}$_{obj}) = \sum_{i=1}^{n} a_i \vec{x}_i$ is maximal.

($\Leftarrow$) If $\pack$ is an optimal package for query $\mathcal{Q}$ over the database instance $\mathcal{D}$, then $\vec{x}$ is an optimal feasible solution to $\mathcal{I}$. The proof is analogous to the previous case.
\end{proof}

\subsection{Approximation Guarantees} \label{app:perf-guarantees}

In this section, we establish approximation bounds for the evaluation algorithm \bt presented in Section~\ref{sec:evaluation}. 
These bounds relate the quality of the solution produced by \bt with the quality of the solution produced by \opt. 
Both algorithms use a black box \ilp solver to evaluate the query. 
It is important to notice that the 
performance of \ilp solvers may not be consistent among different \ilp problems, especially of different size, because different problems can trigger different approximations and heuristics inside the solver. 
To simplify our discussion, we assume that the black box solver behave consistently among different queries. 
To abstract from all internal approximations and heuristics of solvers, we refer to the solutions produced by the solver as \emph{optimal}. 
This way, we can treat the solution produced by \bt as approximate with respect to \opt.

We present the complete analysis for a maximization query. We then discuss how to easily adapt the same analysis for a minimization query (Section~\ref{sec:min-case}).

\subsubsection{Linear maximization package query}
Without loss of generality, let us consider the following linear maximization package query, where $\mathcal{A} = \{ {\sf attr}_1, \dots, {\sf attr}_k \}$ is a set of $k$ partitioning attributes, and ${\sf attr}_{obj}$ is an attribute among those in $\mathcal{A}$ used to express the objective criterion:
\begin{tabbing}
\hspace*{3mm}\=\hspace*{2cm}\=\hspace*{3cm}\= \kill
\>\ssf{SELECT}    \>\ssf{PACKAGE}$($\ssf{R}$)$ \ssf{AS} \ssf{P}\\
\>\ssf{FROM}      \>\ssf{R} \\
\>\ssf{SUCH} \ssf{THAT} \\
\>~~~~~~~~~~~~~~\ssf{SUM}$($\ssf{P.attr}$_l)$ \ssf{BETWEEN} $L_l$ \ssf{AND} $U_l ~~~~~~~ 1 \le l \le k$ \\
\>\ssf{MAXIMIZE}  \>\ssf{SUM}$($\ssf{P}$.{\sf attr}_{obj})$
\end{tabbing}

\paratitle{Notation and definitions.}
We denote a generic package by $\pack$, 
the subset of tuples and representatives from $\pack$ that belong to group $G_j$ by $\pack_j$,
a generic tuple from group $G_j$ by $\tuple_j$, 
and its representative tuple by $\repr_j$.

\begin{definition}[Feasible Package]
For any query $\query$, $\pack$ is a \emph{feasible package} for $\query$, denoted by $\pack \sat \query$, if for every $1 \le l \le k$:
\begin{equation*}
L_l \le \SUM(\pack\attr_l) \le U_l
\end{equation*}
\end{definition}

\begin{definition}[$\bet$]
Package $\pack_1$ is \emph{at least as good as} package $\pack_2$, denoted by $\pack_1 \bet \pack_2$, if:
\begin{equation*}
\SUM(\pack_1\attr_{obj}) \ge \SUM(\pack_2\attr_{obj})
\end{equation*}

Furthermore, for any $\alpha \ge 1$, package $\pack_1$ is \emph{approximately at least as good as} package $\pack_2$, denoted by $\pack_1 \bet (1-\epsilon)^\alpha ~ \pack_2$, if:
\begin{equation*}
\SUM(\pack_1\attr_{obj}) \ge (1-\epsilon)^\alpha ~ \SUM(\pack_2\attr_{obj})
\end{equation*}
\end{definition}

\begin{definition}[Optimal Package]
Package $\pack$ is an \emph{optimal package} to query $\query$, denoted by $\pack \qopt \query$, if $\pack \sat \query$, and for all $\pack' \sat \query$, $\pack \bet \pack'$.
\end{definition}

\begin{definition}[$(1-\epsilon)^{\alpha}$-approximation]
For any query $\query$, $\alpha \ge 1$,
package $\pack$ is a \emph{${(1-\epsilon)^{\alpha}}$-approximation} to query $\query$, denoted by $\pack \qapprox{\alpha} \query$, 
if $\pack \sat \query$, and for all $\pack' \sat \query$, $\pack \bet (1-\epsilon)^\alpha ~ \pack'$.
\end{definition}

\begin{definition}[$(1-\epsilon)^{\beta}$-relaxation]
For any query $\query$, $\beta \ge 1$, 
package $\pack$ is a \emph{${(1-\epsilon)^{\beta}}$-relaxation} to $\query$, denoted by $\pack \qrelax{\beta} \query$, if for every $1 \le l \le k$:
\begin{equation*}
(1-\epsilon)^\beta ~ L_l \le \SUM(P\attr_l) \le (1-\epsilon)^{-\beta} ~ U_l
\end{equation*}

Also, package $\pack$ is an \emph{optimal ${(1-\epsilon)^{\beta}}$-relaxation} to $\query$, denoted by $\pack \qoptrelax{\beta} \query$, if $\pack \qrelax{\beta} \query$ and for all $\pack' \qrelax{\beta} \query$, $\pack \bet \pack'$.

Furthermore, package $\pack$ is a \emph{$(1-\epsilon)^\alpha$-approximation, ${(1-\epsilon)^{\beta}}$-relaxation} to $\query$, denoted by $\pack \qapproxrelax{\beta}{\alpha} \query$, if $\pack \qrelax{\beta} \query$ and for all $\pack' \qrelax{\beta} \query$, $\pack \bet (1-\epsilon)^\alpha~ \pack'$.
\end{definition}

\begin{definition}[Package Projection]\hfill
The \emph{package projection} of a package $\pack$, denoted by $\proj(\pack)$, is obtained from $\pack$ by replacing each original tuple with its corresponding representative, and each representative with a random original tuple from the same group.
\end{definition}

\paratitle{Analysis.}
The following lemma establishes an approximation bound for an approximate solution to a relaxed version of the query.

\begin{lemma}[Relaxed Approximation Bound] \label{lemma:relaxation}
For all packages $\pack_1$, $\pack_2$, for any query $\query$, and for any $\alpha \ge 1$, $\beta \ge 1$:
\begin{equation*}
\pack_1 \qapproxrelax{\beta}{\alpha} \query,~ \pack_2 \sat \query \implies \pack_1 \bet (1-\epsilon)^\alpha ~ \pack_2
\end{equation*}
\end{lemma}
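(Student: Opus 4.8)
The plan is to reduce the claim to a single containment fact about feasible regions and then apply the definition of $\qapproxrelax{\beta}{\alpha}$ directly. The crucial observation is that every feasible package is automatically a relaxed-feasible package; that is, I will first show $\pack_2 \sat \query \implies \pack_2 \qrelax{\beta} \query$. Once this is established, the lemma follows with essentially no further work.

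To prove the containment, I would unfold both definitions. Feasibility gives $L_l \le \SUM(\pack_2\attr_l) \le U_l$ for every $l$, while $\pack_2 \qrelax{\beta} \query$ requires $(1-\epsilon)^\beta L_l \le \SUM(\pack_2\attr_l) \le (1-\epsilon)^{-\beta} U_l$. Since $0 \le \epsilon < 1$ and $\beta \ge 1$, we have $(1-\epsilon)^\beta \le 1 \le (1-\epsilon)^{-\beta}$, so (using nonnegativity of the bounds, which I discuss below) $(1-\epsilon)^\beta L_l \le L_l$ and $U_l \le (1-\epsilon)^{-\beta} U_l$. Chaining these with the feasibility inequalities yields exactly the relaxation bounds, hence $\pack_2 \qrelax{\beta} \query$.

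With the containment in hand, the conclusion is immediate. By hypothesis $\pack_1 \qapproxrelax{\beta}{\alpha} \query$, and by the definition of a $(1-\epsilon)^\alpha$-approximation, $(1-\epsilon)^\beta$-relaxation this means $\pack_1 \bet (1-\epsilon)^\alpha \pack'$ for every $\pack'$ with $\pack' \qrelax{\beta} \query$. Instantiating $\pack' := \pack_2$, which is legitimate by the first step, gives $\pack_1 \bet (1-\epsilon)^\alpha \pack_2$, as required.

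I expect the only subtlety to lie in the containment step: the relaxation widens the upper bound $U_l$ by a factor $\ge 1$ and shrinks the lower bound $L_l$ by a factor $\le 1$, which genuinely enlarges the feasible region only when the bounds are nonnegative. I would therefore state the nonnegativity of $L_l$ and $U_l$ explicitly, or justify it from the fact that the aggregates and their bounds in the package queries under consideration are nonnegative. Beyond this, the argument is a mechanical unfolding of the definitions of $\sat$, $\qrelax{\beta}$, and $\qapproxrelax{\beta}{\alpha}$, so no computation is needed.
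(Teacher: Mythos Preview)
Your proof is correct and rests on the same key observation as the paper's---that the original feasible region is contained in the $(1-\epsilon)^\beta$-relaxed one---but you apply it more directly. The paper's proof introduces two auxiliary packages: the optimum $\pack^*$ of $\query$ and the optimum $\pack'$ of the relaxed query; it then argues $\pack' \bet \pack^* \bet \pack_2$ (the step $\pack' \bet \pack^*$ is precisely the containment you prove explicitly) and finally applies the approximation hypothesis to $\pack'$. Your argument bypasses both intermediaries by instantiating the universal quantifier in the definition of $\qapproxrelax{\beta}{\alpha}$ directly at $\pack_2$, which is cleaner and avoids any appeal to the existence of optima. Your explicit flag about nonnegativity of $L_l$ and $U_l$ (needed for $(1-\epsilon)^\beta L_l \le L_l$ and $U_l \le (1-\epsilon)^{-\beta} U_l$) is a point the paper leaves implicit.
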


\begin{proof}
Let $\pack^* \qopt \query$ be an optimal package for $\query$ (thus, $\pack^* \bet \pack_2$).
Furthermore, let $\pack' \qoptrelax{\beta} \query$ be an optimal $(1-\epsilon)^\beta$-relaxation to $\query$. Relaxing the query can only improve the objective value of an optimal solution, thus $\pack' \bet p^* \bet \pack_2$.
By hypothesis:
\begin{align*}
\pack_1 &\bet (1-\epsilon)^\alpha ~ \pack' \\
        &\bet (1-\epsilon)^\alpha ~ \pack_2
\end{align*}
\end{proof}

The following lemma establishes a (bi-directional) relation between the objective value of packages and their projections.

\begin{lemma}[Projection Bounds] \label{lemma:projection}
For any package $\pack$, for every $l$, $1 \le l \le k$:
\begin{align*}
\SUM(\pack\attr_l)        &\ge (1-\epsilon) ~ \SUM(\proj(\pack)\attr_l) \\
\SUM(\proj(\pack)\attr_l) &\ge (1-\epsilon) ~ \SUM(\pack\attr_l)
\end{align*}
\end{lemma}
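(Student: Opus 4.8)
The plan is to reduce both inequalities to a purely local, per-occurrence comparison, exploiting that the projection $\proj$ acts independently on each element of the multiset $\pack$. First I would isolate the single geometric fact that drives the argument: for any group $G_j$, the fixed attribute ${\sf attr}_l \in \mathcal{A}$, and any original tuple $\tuple_j \in G_j$, Definition~\ref{def:radius} gives $|\repr_j.{\sf attr}_l - \tuple_j.{\sf attr}_l| \le r_j \le \omega$, while the choice of $\omega$ in Eq.~\eqref{eq:radius} (with $\gamma = \epsilon$ in the maximization case) gives $\omega \le \epsilon\,|\repr_j.{\sf attr}_l|$. Chaining these, and reading $|\repr_j.{\sf attr}_l| = \repr_j.{\sf attr}_l$ under the standing non-negativity assumption on attribute values, yields the two-sided bound
\begin{equation*}
(1-\epsilon)\,\repr_j.{\sf attr}_l \;\le\; \tuple_j.{\sf attr}_l \;\le\; (1+\epsilon)\,\repr_j.{\sf attr}_l .
\end{equation*}

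Next I would analyze a single occurrence. By the definition of package projection, each element of $\pack$ is mapped to an element of $\proj(\pack)$ lying in the same group $G_j$, where exactly one of the two is the representative $\repr_j$ and the other an original tuple of $G_j$. Writing $a$ for the ${\sf attr}_l$-value of the occurrence in $\pack$ and $a'$ for its value in $\proj(\pack)$, there are two exhaustive cases: (A) $a = \repr_j.{\sf attr}_l$ and $a'$ is an original tuple's value, or (B) $a$ is an original tuple's value and $a' = \repr_j.{\sf attr}_l$. In either case the displayed bound applies with the representative sitting on the appropriate side, and the per-occurrence goal is to prove that both $a \ge (1-\epsilon)\,a'$ and $a' \ge (1-\epsilon)\,a$ hold.

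The observation that collapses both directions to the same $(1-\epsilon)$ factor is the elementary inequality $\tfrac{1}{1+\epsilon} \ge 1-\epsilon$ (equivalently $1-\epsilon^2 \le 1$), valid for $\epsilon \ge 0$. In case (A) the bound $a' \ge (1-\epsilon)\,a$ is immediate, and $a \ge (1-\epsilon)\,a'$ follows from $a' \le (1+\epsilon)\,a$ combined with this inequality; case (B) is symmetric. Since every per-occurrence contribution is non-negative, I would then sum each of the two inequalities over all occurrences in the multiset $\pack$ to obtain $\SUM(\pack\attr_l) \ge (1-\epsilon)\,\SUM(\proj(\pack)\attr_l)$ and $\SUM(\proj(\pack)\attr_l) \ge (1-\epsilon)\,\SUM(\pack\attr_l)$, which are exactly the two claims.

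I expect the only genuine subtlety — and the main obstacle — to be bookkeeping the asymmetry between the two projection cases. The radius bound is anchored to the representative's magnitude $|\repr_j.{\sf attr}_l|$ rather than to the value actually being replaced, so the naive per-occurrence estimates come out as $(1-\epsilon)$ on one side but $\tfrac{1}{1+\epsilon}$ on the other; recognizing $\tfrac{1}{1+\epsilon} \ge 1-\epsilon$ is precisely what restores the clean symmetric factor claimed in the lemma. I would also explicitly flag the non-negativity of attribute values, which is what turns the absolute-value bound into a multiplicative two-sided bound and what licenses summing the inequalities term by term.
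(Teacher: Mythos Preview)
Your proposal is correct and follows essentially the same route as the paper: derive the per-tuple two-sided bound from the radius condition and Eq.~\eqref{eq:radius}, then sum over occurrences. The paper packages the local estimate as Eq.~(\ref{eq:c}), namely $\tuple_j\attr_l \ge (1-\epsilon)\,\repr_j\attr_l$ and $\repr_j\attr_l \ge (1-\epsilon)\,\tuple_j\attr_l$, declaring it ``easily verifiable'' and then summing; you make explicit the step the paper elides, namely that the upper bound $\tuple_j\attr_l \le (1+\epsilon)\,\repr_j\attr_l$ becomes the desired $(1-\epsilon)$ lower bound via $\tfrac{1}{1+\epsilon} \ge 1-\epsilon$, and you also handle both projection directions (original $\to$ representative and representative $\to$ original) and flag non-negativity, whereas the paper's writeup tacitly treats only the case where $\pack$ consists of original tuples. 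These are clarifications rather than a different argument.
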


\begin{proof}
The radius (defined in Section~\ref{sec:index}) satisfies the following condition for every group $G_j$, and for all $\tuple_j \in G_j$:
\begin{equation}\label{eq:radius0}
\repr_j\attr_l - r_j \le \tuple_j\attr_l \le \repr_j\attr_l + r_j
\end{equation}
For every group $G_j$, the following statements are true and easily verifiable from Equations~\ref{eq:radius} and~\ref{eq:radius0}:
\begin{equation} \label{eq:c}
\begin{array}{lc}
    \forall~ \tuple_j \in G_j, 1 \le l \le k:
& 
    \begin{array}{l}
        \tuple_j\attr_l \ge (1-\epsilon) ~ \repr_j\attr_l  \\
        \repr_j\attr_l  \ge (1-\epsilon) ~ \tuple_j\attr_l
    \end{array}
\end{array}
\end{equation}

This establishes that
tuples and representatives within a group do not differ from each other by more than a factor $(1 - \epsilon)$. With $\epsilon = 0$, tuples and their representatives are indistinguishable because the radius must be zero (by Definition~\ref{def:radius} and Equation~\ref{eq:radius}), which entails that the algorithm is guaranteed to return an optimal package.
Equation~\ref{eq:c} is sufficient to prove the lemma:
\begin{align*}
\SUM(\pack\attr_l) &=   \sum_{j=1}^{m} \sum_{t_j \in \pack_j} \tuple_j\attr_l \\
                   &\ge \sum_{j=1}^{m} \sum_{t_j \in \pack_j} (1-\epsilon)~ \repr_j\attr_l  \tag{By Equation~\ref{eq:c}} \\
                   &=   \sum_{j=1}^{m} |\pack_j| (1-\epsilon)~ \repr_j\attr_l \\
                   &=   (1-\epsilon)~ \SUM(\proj(\pack)\attr_l)
\end{align*}
\begin{align*}
\SUM(\proj(\pack)\attr_l) 
				   &=   \sum_{j=1}^{m} |\pack_j|~ \repr_j\attr_l \\
                   &\ge \sum_{j=1}^{m} |\pack_j|~ (1-\epsilon)~ \tuple_j\attr_l \tag{By Equation~\ref{eq:c}} \\
                   &=   \sum_{j=1}^{m} \sum_{t_j \in \pack_j} (1-\epsilon)~ \tuple_j\attr_l \\
                   &=   (1-\epsilon)~ \SUM(\pack\attr_l)
\end{align*}
\end{proof}

The following lemma states that the projection of a feasible package is a $(1-\epsilon)$-relaxation.

\begin{lemma}[Projection Relaxation] \label{lemma:proj-relax}
For any package $\pack$ and query $\query$:
\begin{equation*}
\pack \sat \query \implies \proj(\pack) \qrelax{} \query 
\end{equation*}
\end{lemma}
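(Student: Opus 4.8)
The plan is to instantiate the definition of $(1-\epsilon)^{\beta}$-relaxation at $\beta = 1$ and then reduce the claim to the two-sided projection estimate already supplied by Lemma~\ref{lemma:projection}. Unfolding $\proj(\pack) \qrelax{} \query$, what must be shown is that for every $l$, $1 \le l \le k$,
\[
(1-\epsilon)~ L_l \;\le\; \SUM(\proj(\pack)\attr_l) \;\le\; (1-\epsilon)^{-1}~ U_l.
\]
The hypothesis $\pack \sat \query$ gives, for each such $l$, the feasibility bounds $L_l \le \SUM(\pack\attr_l) \le U_l$ (Definition of Feasible Package). Hence the entire task is to transport these bounds from $\pack$ to $\proj(\pack)$, paying a single factor of $(1-\epsilon)$ on each side.

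I would handle the two sides separately. For the lower bound, I chain the second inequality of Lemma~\ref{lemma:projection} with feasibility: $\SUM(\proj(\pack)\attr_l) \ge (1-\epsilon)\SUM(\pack\attr_l) \ge (1-\epsilon) L_l$. For the upper bound, I start from the first inequality of Lemma~\ref{lemma:projection}, namely $\SUM(\pack\attr_l) \ge (1-\epsilon)\SUM(\proj(\pack)\attr_l)$, and divide through by $(1-\epsilon)$ to obtain $\SUM(\proj(\pack)\attr_l) \le (1-\epsilon)^{-1}\SUM(\pack\attr_l) \le (1-\epsilon)^{-1} U_l$, where the final step applies feasibility again. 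Collecting both bounds over all $l$ yields exactly the relaxation condition, completing the argument.

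The derivation is essentially mechanical, so there is no genuinely hard step; the only point that warrants care is the direction of the inequalities when rescaling. Since $0 \le \epsilon < 1$, both $(1-\epsilon)$ and $(1-\epsilon)^{-1}$ are strictly positive, so multiplying or dividing an inequality by either one preserves its direction and nothing flips. It is worth emphasizing that the resulting relaxation is genuinely two-sided: projection can both deflate and inflate each aggregate by up to a $(1-\epsilon)$ factor, which is precisely why the lower bound $L_l$ is multiplied by $(1-\epsilon)$ while the upper bound $U_l$ is divided by it, so that $\proj(\pack)$ stays within the loosened constraint box even when $\pack$ sits at its boundary.
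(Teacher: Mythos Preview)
Your proof is correct and follows essentially the same approach as the paper: both invoke the two inequalities of Lemma~\ref{lemma:projection} together with the feasibility bounds $L_l \le \SUM(\pack\attr_l) \le U_l$ to derive $(1-\epsilon)L_l \le \SUM(\proj(\pack)\attr_l) \le (1-\epsilon)^{-1}U_l$. The only cosmetic difference is that the paper chains from $L_l$ and $U_l$ toward $\SUM(\proj(\pack)\attr_l)$, whereas you start from $\SUM(\proj(\pack)\attr_l)$ and work outward; the steps are identical.
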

\begin{proof}
For every $l$, $1 \le l \le k$:
\begin{align*}
L_l &\le \SUM(\pack\attr_l) \tag{By hypothesis} \\
    &\le (1-\epsilon)^{-1} ~ \SUM(\proj(\pack)\attr_l) \tag{By Lemma~\ref{lemma:projection}} \\
U_l &\ge \SUM(\pack\attr_l) \tag{By hypothesis} \\
    &\ge (1-\epsilon) ~ \SUM(\proj(\pack)\attr_l) \tag{By Lemma~\ref{lemma:projection}}
\end{align*}

Therefore:
\begin{equation*}
(1-\epsilon) ~ L_l \le \SUM(\proj(\pack)\attr_l) \le (1-\epsilon)^{-1} ~ U_l
\end{equation*}
\end{proof}

The following lemma establishes that the projection of an optimal package is a $(1-\epsilon)^2$-approximate, $(1-\epsilon)$-relaxed package.

\begin{lemma}[Projection Approximation] \label{lemma:approx-projection}
For any package $\pack$ and any query $\query$:
\begin{equation*}
\pack \qopt \query \implies \proj(\pack) \qapproxrelax{}{2} \query
\end{equation*}
\end{lemma}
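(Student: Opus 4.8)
The plan is to verify directly the two clauses in the definition of $\proj(\pack) \qapproxrelax{}{2} \query$: that $\proj(\pack)$ is a $(1-\epsilon)$-relaxation of $\query$, and that $\proj(\pack)$ is objective-wise within a factor $(1-\epsilon)^2$ of \emph{every} $(1-\epsilon)$-relaxation of $\query$.

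First I would dispatch the relaxation clause, which is essentially free. Since $\pack \qopt \query$ entails $\pack \sat \query$, Lemma~\ref{lemma:proj-relax} applies verbatim and yields $\proj(\pack) \qrelax{} \query$. This is exactly the $(1-\epsilon)$-relaxation half of the statement, so no further work is needed there.

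For the approximation clause I would fix an arbitrary $(1-\epsilon)$-relaxation $\pack'$ and target $\proj(\pack) \bet (1-\epsilon)^2 \pack'$; it suffices to do this for an optimal relaxation $\pack' \qoptrelax{}\query$, since that package dominates all other $(1-\epsilon)$-relaxations and the bound then propagates. The intended mechanism is two independent invocations of the Projection Bounds (Lemma~\ref{lemma:projection}), one on each package. On the left, the bound gives $\SUM(\proj(\pack)\attr_{obj}) \ge (1-\epsilon)\,\SUM(\pack\attr_{obj})$. On the right, the same bound applied to $\pack'$ gives $\SUM(\proj(\pack')\attr_{obj}) \ge (1-\epsilon)\,\SUM(\pack'\attr_{obj})$. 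If $\proj(\pack')$ could be used as a \emph{feasible} competitor, then optimality of $\pack$ would force $\SUM(\pack\attr_{obj}) \ge \SUM(\proj(\pack')\attr_{obj})$, and composing the three inequalities would deliver precisely $\SUM(\proj(\pack)\attr_{obj}) \ge (1-\epsilon)^2\,\SUM(\pack'\attr_{obj})$, i.e.\ $\proj(\pack) \bet (1-\epsilon)^2 \pack'$.

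The main obstacle is exactly this feasibility step, and it is genuinely delicate rather than routine. Projecting a relaxed package onto representatives can loosen each attribute sum by a \emph{further} factor of $(1-\epsilon)$, so a priori $\proj(\pack')$ is only a $(1-\epsilon)^2$-relaxation and not feasible, which blocks the naive invocation of optimality of $\pack$. Closing this gap is where the centroid choice of representatives and the per-attribute multiplicative envelope of Equation~\ref{eq:c} must be exploited: the argument has to show that the projected relaxation is dominated by, or can be rounded into, an actually feasible package whose objective is no smaller than $(1-\epsilon)\,\SUM(\pack'\attr_{obj})$, so that optimality of $\pack$ may legitimately be applied. Once that bridge between the relaxed and feasible regimes is secured, the rest is a mechanical composition of the two projection bounds; if one prefers, the resulting relaxed-approximation statement can finally be converted into a guarantee against feasible packages through Lemma~\ref{lemma:relaxation}.
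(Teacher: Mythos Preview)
Your treatment of the relaxation clause is identical to the paper's. For the approximation clause the paper proceeds differently: it lets $\pack'$ range over \emph{feasible} packages for $\query$, not over $(1-\epsilon)$-relaxations. Since $\pack$ is optimal and $\pack'\sat\query$, the comparison $\pack \bet \pack'$ holds directly---no projection or feasibility rounding is needed at this step---and two applications of Lemma~\ref{lemma:projection} then give
\[
\proj(\pack)\ \bet\ (1-\epsilon)\,\pack\ \bet\ (1-\epsilon)\,\pack'\ \bet\ (1-\epsilon)^2\,\proj(\pack').
\]
So the paper's route avoids your obstacle altogether by never requiring the projection of anything to land in the feasible region.

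That said, the obstacle you flag is genuine, and the paper's argument does not actually resolve it either: the chain above bounds $\proj(\pack)$ only against packages of the form $\proj(\pack')$ with $\pack'$ feasible, not against an arbitrary $(1-\epsilon)$-relaxation as the definition of $\qapproxrelax{}{2}$ formally demands. Your attempt (project an arbitrary relaxation back toward the feasible region so that optimality of $\pack$ applies) and the paper's (start from a feasible package and project it forward into the relaxed region) run into the same wall from opposite sides. For the purposes of Theorem~\ref{th:approx} this is harmless, since the only competitors used downstream are feasible packages and their projections; the clean fix is to state the lemma's conclusion as the displayed chain actually proves and to invoke it directly in the theorem rather than routing through Lemma~\ref{lemma:relaxation}.
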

\begin{proof}
Let $\pack' \sat \query$ be a feasible solution for $\query$.
By Lemma~\ref{lemma:proj-relax}, we know that $\proj(\pack) \qrelax{} \query$ and $\proj(\pack') \qrelax{} \query$.
Because $\pack$ is optimal, $\pack \bet \pack'$.
Therefore:
\begin{align*}
\proj(\pack) &\bet (1-\epsilon) ~ \pack \tag{By Lemma~\ref{lemma:projection}}\\
             &\bet (1-\epsilon) ~ \pack' \\
             &\bet (1-\epsilon)^2 ~ \proj(\pack')  \tag{By Lemma~\ref{lemma:projection}}
\end{align*}
\end{proof}

We are now ready to prove the theorem.

\begin{proof}[of \Cref{th:approx}]
Let the initial sketch package be denoted by $\pack^{(0)}$.
Suppose, without loss of generality, that the algorithm refines the initial package in the order: $G_1, G_2, \dots, G_m$. Let $\pack^{(j)}$ denote the intermediate refined package produced at the $j$-th iteration of the algorithm. The finial complete package returned by the algorithm is thus $\pack^{(m)}$. Let $\pack^*$ be an optimal package. The following two statements are sufficient to prove the theorem:
\begin{align*}
\pack^{(0)} &\bet (1-\epsilon)^3 ~ \pack^* \tag{{\sc \initialalgo}} \label{lemma:B} \\
\pack^{(m)} &\bet (1-\epsilon)^3 ~ \pack^{(0)} \tag{{\sc \augmentalgo}} \label{lemma:A}
\end{align*}

(\ref{lemma:B})
First, notice that $p^{(0)} \qopt \query$ because $p^{(0)}$ optimizes the sketch query $\initquery$, which has identical constraints and maximization objective as $\query$. 
Thus, by \Cref{lemma:approx-projection}, we have that 
${\proj(p^{(0)}) \qapproxrelax{}{2} \query}$.
Since $\pack^* \sat \query$, by \Cref{lemma:relaxation} $\proj(\pack^{(0)}) \bet (1-\epsilon)^2 ~ \pack^*$, and therefore:
\begin{align*}
\pack^{(0)} &\bet (1-\epsilon) ~ \proj(\pack^{(0)}) \tag{By Lemma~\ref{lemma:projection}} \\
            &\bet (1-\epsilon)^3 ~ \pack^*
\end{align*}

(\ref{lemma:A})
Consider the $j$-th iteration of the algorithm. Let $\pack_j^{(j)}$ be the result of the $j$-th refine query $\augmquery$. 
First, notice that $\pack_j^{(j)} \qopt \augmquery$ because it is the optimal solution to the refine query. 
Thus, by \Cref{lemma:approx-projection}, 
$\proj(\pack_j^{(j)}) \qapproxrelax{}{2} \augmquery$.
Let $\pack_j^{(j-1)}$ be the solution for group $G_j$ obtained at iteration $(j-1)$, before $\augmquery$ is executed. 
It is easy to show that ${\pack_j^{(j-1)} \sat \augmquery}$, and thus, by \Cref{lemma:relaxation}, 
$\proj(\pack_j^{(j)}) \bet (1-\epsilon)^2 \pack_j^{(j-1)}$.
Therefore, we have that:
\begin{align*}
\pack_j^{(j)} &\bet (1-\epsilon) ~ \proj(\pack_j^{(j)}) \tag{By Lemma~\ref{lemma:projection}} \\
              &\bet (1-\epsilon)^3 ~ \pack_j^{(j-1)}
\end{align*}
Because the solution to group $G_j$ is only changed at the $j$-th iteration, $\pack_j^{(j-1)} = \pack_j^{(0)}$ and $\pack_j^{(j)} = \pack_j^{(m)}$. Thus:
\begin{equation*}
\pack_j^{(m)} \bet (1-\epsilon)^3 ~ \pack_j^{(0)}
\end{equation*}
To conclude the proof, it is sufficient to notice that $\pack^{(m)} = \SUM(\pack^{(m)}\attr_{obj}) = \sum_{j=1}^{m}\pack_j^{(m)}$, and similarly for $\pack^{(0)}$.
\end{proof}

\subsubsection{Linear minimization package queries} \label{sec:min-case}

To extend the analysis to minimization package queries, it is sufficient to notice that Equation~\ref{eq:c} becomes:
\begin{equation*}
\begin{array}{lc}
    \forall~ \tuple_j \in G_j, 1 \le l \le k:
& 
    \begin{array}{l}
        \tuple_j\attr_l \le (1+\epsilon) ~ \repr_j\attr_l  \\
        \repr_j\attr_l  \le (1+\epsilon) ~ \tuple_j\attr_l
    \end{array}
\end{array}
\end{equation*}

\balance

\subsection{False Infeasibility Bound} \label{sec:false-inf-bound}

\begin{definition}[Package Query Selectivity]\hfill
Given a query $\query$, with input relation \ssf{R},
let $\rpack$ be a \emph{random package}, generated by selecting tuples from \ssf{R} uniformly at random, and with replacement (i.e., the probability of every input tuple being included in $\rpack$ is equally $1/n$, where $n=|{\sf R}|$ is the number of tuples).
The \emph{selectivity} of a package query, denoted by $sel(\query)$, is the probability of $\rpack$ being \emph{infeasible}:
\begin{equation*}
sel(\query) := 1-Pr\left[\rpack \sat \query\right]
\end{equation*}
\end{definition}

We first prove the following lemma:

\begin{lemma} \label{lemma:feas-init}
For any query $\query$ and any random package $\rpack$,
if ${\rpack \sat \query}$, then with high probability: 
(1)~the sketch query $\initquery$ is feasible;
(2)~all refine queries $\augmquery$ are feasible.
\end{lemma}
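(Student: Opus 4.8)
The plan is to exhibit, from the feasible random package $\rpack$, explicit witnesses of feasibility for the sketch query $\initquery$ and for each refine query $\augmquery$, and then to argue that these witnesses satisfy the (un-relaxed) constraints with high probability. The natural witness for the sketch query is the package projection $\proj(\rpack)$, which replaces every original tuple of $\rpack$ by its group's representative. Because each $\repr_j$ is the centroid of $G_j$ and, under \ssf{REPEAT} $0$, $\rpack$ is a random subset, the projected multiplicities $|\rpack \cap G_j| \le |G_j|$ automatically satisfy the per-group repetition constraints of $\initquery$, so $\proj(\rpack)$ is always a syntactically valid sketch package. What remains is to show it meets each band $L_l \le \SUM(\proj(\rpack)\attr_l) \le U_l$.

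First I would analyze the per-constraint projection error $D_l = \SUM(\rpack\attr_l) - \SUM(\proj(\rpack)\attr_l) = \sum_i x_i\,(\tuple_i\attr_l - \repr_{g(i)}\attr_l)$, where $g(i)$ denotes the group of $\tuple_i$. Two facts drive the argument: each summand is bounded in magnitude by the group radius, hence by $\omega$; and, since $\repr_j$ is the mean of $G_j$ on every attribute while the tuples of $\rpack$ are drawn uniformly, each summand is mean-zero. A concentration inequality (Hoeffding or Bernstein over the independent draws, union-bounded over the $k$ constraints) then shows that $|D_l|$ is small with high probability. Combined with Lemma~\ref{lemma:proj-relax}, which already provides the deterministic relaxed feasibility $\proj(\rpack) \qrelax{} \query$, a small $D_l$ upgrades this to exact feasibility of $\proj(\rpack)$ for $\initquery$ whenever $\SUM(\rpack\attr_l)$ lies farther than $|D_l|$ from both endpoints of its band.

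For the refine queries I would reuse the same device incrementally. Refining group $G_j$ amounts to ``un-projecting'' one more group: the witness is the partially projected package that retains the original tuples of $\rpack$ inside $G_j$ (and inside already-refined groups) while still using representatives elsewhere. This is feasible for $\augmquery$ exactly when the residual projection error, now a sum over strictly fewer groups than before, keeps every constraint inside its band; since un-projecting only removes mean-zero error terms, the same concentration bound applies and this event is nested inside the sketch-feasibility event. A union bound over the $m$ groups and $k$ constraints then yields a single high-probability guarantee, and because the entire analysis is conditioned on $\rpack \sat \query$, an event of probability $1 - sel(\query)$, it delivers precisely the selectivity-dependent statement required by Theorem~\ref{th:false-inf}.

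The hardest part will be the boundary effect: a feasible random package may lie arbitrarily close to a constraint endpoint, in which case even a tiny $D_l$ can push the witness outside its band. Controlling this requires relating the scale of $D_l$ to the geometry of the feasible region, and this is exactly where selectivity enters, since low selectivity means the feasible region is ``thick'' and a uniformly random feasible package concentrates away from the endpoints. Making the dependence between where $\SUM(\rpack\attr_l)$ lands and how large $D_l$ is quantitative, and threading it through the inductive refine steps so that the witnesses remain mutually consistent with the algorithm's fixed partial assignments $\bar{\pack}_j$, is the main technical obstacle.
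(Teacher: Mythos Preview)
Your proposal follows essentially the same approach as the paper: use the projection $\proj(\rpack)$ as the feasibility witness for the sketch query, exploit the centroid property of the representatives to show the relevant sums are close in expectation, and appeal to Hoeffding's inequality for concentration; then repeat the argument group-by-group for the refine queries.

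The execution differs slightly. You apply concentration directly to the projection error $D_l = \SUM(\rpack\attr_l) - \SUM(\proj(\rpack)\attr_l)$, whose summands are mean-zero and bounded by the radius $\omega$. The paper instead applies Hoeffding to $\SUM(\proj(\rpack)\attr_l)$ about its own expectation (with the coarser bound $\max(\mathsf{A}) - \min(\mathsf{A})$), and separately argues $E[\SUM(\proj(\rpack)\attr_l)] = E[\SUM(\rpack\attr_l)]$ from the centroid property. Your route is a bit tighter and avoids having to relate $\SUM(\rpack\attr_l)$ to its expectation as an extra step. You are also more explicit than the paper about the boundary issue and about the need for the refine witnesses to be consistent with the algorithm's fixed $\bar{\pack}_j$; the paper handles both points informally, simply asserting that if $\proj(\rpack)$ is feasible then $\proj(\rpack_j)$ is feasible for $\augmquery$. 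So your identification of these as the ``hard part'' is accurate, but the paper does not resolve them beyond the level of sketch you already give.
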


\begin{proof}
Let \ssf{A} be an attribute from the schema of the input relation \ssf{R}, and let query $\query$ include a global constraint on the sum over attribute \ssf{A}, \ssf{SUM}$($\ssf{A}$)$.
Because $\rpack$ is random, its package projection $\proj(\rpack)$, constructed by including the representative tuples corresponding to each random tuple in $\rpack$, is also a random package.
Thus, ${\sf SUM}(\rpack.{\sf A})$ and ${\sf SUM}(\proj(\rpack).{\sf A})$, i.e., the global constraint values over packages $\rpack$ and $\proj(\rpack)$, are both random variables.
We show that, with high probability, ${\sf SUM}(\proj(\rpack).{\sf A})$ does not differ from the expected ${\sf SUM}(\rpack.{\sf A})$ by more than a small constant and, thus, if $\rpack$ is feasible so is $\proj(\rpack)$.

As a first step, we apply Hoeffding's inequality~\cite{hoeffding1963probability} on ${\sf SUM}(\proj(\rpack).{\sf A})$, and establish that with high probability its value does not deviate from its expectation by more than a small constant $c \ge 0$:
\begin{equation} \label{eq:hoeff}
Pr \left[ | {\sf SUM}(\proj(\rpack){\sf .A}) - E[{\sf SUM}(\proj(\rpack){\sf .A})] | \ge c \right] \le \gamma_{c}
\end{equation}
where $\gamma_{c} = 2 \exp \left(- \frac{2c^2}{|\rpack|({\sf MAX}({\sf A}) - {\sf MIN}({\sf A}))^2} \right)$.

Let $A$ be the random variable corresponding to the value of attribute \ssf{A} of a random tuple in $\rpack$, and let $E[A]$ be its expected value. 
Similarly, let $\tilde{A}$ be the random variable corresponding to a random representative in $\proj(\rpack)$, and $E[\tilde{A}]$ its expected value. 
Finally, let $G$ be the random variable corresponding to the group a random representative in $\proj(\rpack)$ belongs to. 
Because representative tuples are the centroids of all the tuples in their group, we have that:
\begin{equation*}
E[\tilde{A}] = E\left[ \frac{1}{|G|} \sum_{G} A \right]
             = \frac{1}{|E[G]|} \sum_{E[G]} E[A]
             = E[A]
\end{equation*}

The expected sum over package $\proj(\rpack)$ is therefore:
\begin{align*}
E[{\sf SUM}(\proj(\rpack){\sf .A})]
    &= E \left[ \sum_{k=1}^{|\rpack|} \tilde{A} \right] \\
    &= \sum_{k=1}^{|\rpack|} E[\tilde{A}] \\
    &= \sum_{k=1}^{|\rpack|} E[A] \\
    &= E[{\sf SUM}(\rpack{\sf .A})] 
\end{align*}

By substituting the expected sum in Equation~\ref{eq:hoeff} we obtain:
\begin{equation} \label{eq:hoeff2}
Pr \left[ | {\sf SUM}(\proj(\rpack){\sf .A}) - E[{\sf SUM}(\rpack{\sf .A})] | \ge c \right] \le \gamma_{c}
\end{equation}

Equation~\ref{eq:hoeff2} shows that the probability that the sum on $\proj(\rpack)$ differs from the expected sum of $\rpack$ by more than $c \ge 0$ is bounded by a small constant.
This means that if $\rpack$ is feasible, with high probability $\proj(\rpack)$ is also feasible. 
Because $\proj(\rpack)$ is a package in the representative space, then the sketch query $\initquery$ is also feasible.

\smallskip
Let $\rpack_j$ be the tuples in $\rpack$ that belong to group $G_j$. 
Then, $\proj(\rpack_j)$ is the set of representatives in $\proj(\rpack)$ that belong to group $G_j$. 
It is easy to verify that if $\proj(\rpack)$ is feasible, then $\proj(\rpack_j)$ is a feasible package for the $j$-th refine query $\augmquery$. 
With a reasoning analogous to the one used above, it is possible to show that the summation on $\proj(\rpack_j)$ must be close, with high probability, to the expected summation on $\rpack_j$, and thus the refine query $\augmquery$ is also feasible.
\end{proof}

\begin{proof}[of \Cref{th:false-inf}]
The lower the selectivity, the higher the probability $Pr\left[ \rpack \sat \query \right]$, and thus, by Lemma~\ref{lemma:feas-init}, the higher the probability that $\initquery$ and all $\augmquery$ are feasible, which implies that \bt will eventually find a feasible package with high probability as well.
\end{proof}

\subsection{\paql Syntax Specification} \label{sec:paql-syntax}

\begin{Verbatim}[commandchars=\\\{\},codes={\catcode`$=3\catcode`^=7}]
SELECT PACKAGE(\textbf{rel_alias} [, $\dots$]) [AS] \textbf{package_name}
FROM \textbf{rel_name} [AS] \textbf{rel_alias} [REPEAT \textbf{repeat}] [, $\dots$]
[ WHERE \textbf{w_condition} ]
[ SUCH THAT \textbf{st_condition} ]
[ (MINIMIZE|MAXIMIZE) \textbf{objective} ]
\end{Verbatim}
where \texttt{\textbf{rel\_name}} is a relation name, 
\texttt{\textbf{rel\_alias}} a substitute name for a relation name, 
\texttt{\textbf{package\_name}} a name for the package result,
\texttt{\textbf{repeat}} a non-negative integer,
\texttt{\textbf{w\_condition}} a Boolean expression over tuple values (similarly to standard \sql),
\texttt{\textbf{st\_condition}} a Boolean expression over aggregate functions or suq-queries with aggregate functions,
\texttt{\textbf{objective}} an expression over aggregate functions or suq-queries with aggregate functions.

\balancecolumns

\end{appendices}
\fi

\end{document}